\newcommand{\PP}{\mathbb{P}}
\newcommand{\QQ}{\mathbb{Q}}
\newcommand{\E}{\mathbb{E}}
\newcommand{\Var}{\text{Var}}
\newcommand{\norm}[1]{\left\lVert #1 \right\rVert}
\newcommand{\indep}{\perp \!\!\! \perp}
\newcommand{\overbar}[1]{\mkern 1.5mu\overline{\mkern-1.5mu#1\mkern-1.5mu}\mkern 1.5mu}
\newtheorem{proposition}{Proposition}
\newtheorem{theorem}{Theorem}
\newtheorem{corollary}{Corollary}
\newtheorem{lemma}{Lemma}
\newtheorem{remark}{Remark}
\title{Population Size Estimation with Many Lists and Heterogeneity: A Conditional Log-Linear Model Among the Unobserved}
\author{Mateo Dulce Rubio \and Edward Kennedy}
\date{Carnegie Mellon University}
\begin{document}

\maketitle

\begin{abstract}
\noindent

   We contribute a general and flexible framework to estimate the size of a closed population in the presence of $K$ capture-recapture lists and heterogeneous capture probabilities. Our novel identifying strategy leverages the fact that it is sufficient for identification that a subset of the $K$ lists are not arbitrarily dependent \textit{within the subset of the population unobserved by the remaining lists}, conditional on covariates. This identification approach is interpretable and actionable, interpolating between the two predominant approaches in the literature as special cases: (conditional) independence across lists and log-linear models with no highest-order interaction. We derive nonparametric doubly-robust estimators for the resulting identification expression that are nearly optimal and approximately normal for any finite sample size, even when the heterogeneous capture probabilities are estimated nonparametrically using machine learning methods. Additionally, we devise a sensitivity analysis to show how deviations from the identification assumptions affect the resulting population size estimates, allowing for the integration of domain-specific knowledge into the identification and estimation processes more transparently. We empirically demonstrate the advantages of our method using both synthetic data and real data from the Peruvian internal armed conflict to estimate the number of casualties. The proposed methodology addresses recent critiques of capture-recapture models by allowing for a weaker and more interpretable identifying assumption and accommodating complex heterogeneous capture probabilities depending on high-dimensional or continuous covariates.

\end{abstract}
\section{Introduction}

Estimating the size of a population is a critical challenge across various domains, such as biology for estimating animal populations and conservation efforts \citep{schwarz1999estimating, burgar2018estimating}, public health for estimating disease prevalence \citep{hook1995capture, bohning2020estimating}, and human rights for estimating the number of victims of specific crimes \citep{manrique2013multiple, silverman2020multiple}. In many high-stakes applications, we observe \textit{capture-recapture} data, where each capture episode involves a subsample from the population and records it in a \textit{list}. In such data, the same unit can be observed multiple times, while some units are never captured. Intuitively, the intersection between capture lists can be exploited to estimate the total population size, as smaller populations tend to have units appearing in multiple lists more frequently than larger populations.

Various approaches have been used for estimating the size of a population from multiple-list capture-recapture data. Among these methodologies, one popular approach assumes that capture episodes are independent and that all units in the target population have the same probability of being observed. Under these assumptions, the observed data can be used to estimate the number of unobserved units by extrapolating the estimated probabilities of each observed capture profile \citep{bishop2007discrete, otis1978statistical}. However, the assumptions of independence and homogeneity can be implausible in real-world applications, which limit the applicability and robustness of the resulting estimators. For instance, in settings where demographic or biological attributes make some
units more or less easy to capture, assuming homogeneous capture probabilities may bias the estimated population size in arbitrary ways. Therefore, there is a pressing need for innovative statistical methodologies in the face of recent and challenging applications, such as estimating the population size of HIV patients \citep{wesson2024evaluating}, casualties during armed conflicts \citep{manrique2021capture}, or victims of modern slavery \citep{binette2022reliability}, particularly in the presence of heterogeneous capture probabilities and complex dependencies between capture episodes.

To move away from the assumptions of homogeneity and independence, more general approaches have been proposed, such as Bayesian latent class models \citep{manrique2016bayesian}, and methods leveraging semiparametric efficiency theory \citep{you2021estimation, das2023doubly}. However, these methods rely on identification assumptions that can be difficult to understand, making it challenging to argue for their validity in practical applications. Furthermore, they sometimes conflate issues of identification and estimation \citep{aleshin2024central}, or use only part of the available information collected during the capture episodes.  Despite these advances, there remains a need for a general framework that allows for an interpretable, useful, and plausible identifying assumption and accommodates heterogeneous capture probabilities in a flexible manner, even when this heterogeneity depends on high-dimensional or continuous attributes. Developing such a general and flexible framework for population size estimation is the ultimate goal of our work.

\subsection{Contributions}

Concretely, we contribute to the population size estimation literature in four key ways: 

\begin{itemize}
    \item We introduce a novel conditional log-linear model that incorporates heterogeneous capture probabilities according to a set of available covariates, allowing for high-dimensional and continuous attributes.
    
    \item We propose a new identification strategy by noting that it is sufficient for identification that a subset of the capture-recapture lists are not arbitrarily dependent within the subset of the population unobserved by the remaining lists, conditional on covariates. This strategy is both interpretable and actionable, generalizing the two predominant approaches in the literature: conditional independence between capture episodes and traditional log-linear models with no highest-order interaction.
    
    \item We derive nonparametric doubly-robust estimators for the capture probability and the total population size. The proposed one-step estimators are shown to be nearly minimax optimal and approximately normal for any finite sample size, providing robust and reliable estimates even under flexible estimation of the heterogeneous capture probabilities using machine learning models.
    
    \item Finally, we introduce a sensitivity analysis framework that reveals how violations of the identification assumptions impact the population size estimates. This allows researchers to incorporate domain-specific knowledge into the estimation process more transparently and rigorously.
\end{itemize}

To validate our methodology, we conduct empirical evaluations using both synthetic data and real data from the Peruvian internal armed conflict. Our results highlight the advantages of our approach in accurately estimating population size in real-world applications, allowing for a weaker and more interpretable identifying assumption and accommodating complex heterogeneous capture probabilities.

\paragraph{Paper outline} Section 2 describes the setup of the capture-recapture problem and the predominant identification strategies used in previous work. Section 3 presents our conditional log-linear model for heterogeneous capture probabilities and introduces our novel identification strategy to express the capture probability using the observed data. In Section 4, we derive a one-step estimator using semiparametric efficiency theory and show that is nearly optimal and approximately normal for any finite sample size, even under flexible estimation of the heterogeneous capture probabilities. In Section 5, we devise a sensitivity analysis and construct analogous one-step estimators that bound the capture probability, while Section 6 extends our results to the total population size. Section 7 illustrates the advantages of the proposed methods, over a natural plug-in estimator, in simulated and real data from the Peruvian internal armed conflict during 1980-2000. Finally, Section 8 concludes and discusses the broad applicability of the proposed framework for population size estimation across various domains.
\section{Problem Setup}
\label{preliminaries}

Consider a closed population of unknown size $n$ that we aim to estimate. Suppose that we have access to $K$ partial lists, each capturing different subsets of this population. Then, each uniquely identified subject $i \in \{1, \dots, n\}$ has an associated \textit{capture profile} vector:
$$ Y_i = (Y_{i1}, \dots, Y_{iK}), $$
where $Y_{ik} \in \{0,1\}$ indicates whether unit $i$ appears on list $k \in \{1, \dots, K\}$. In addition, let us assume that we observe a set of covariates $X_i \in \mathbb{R}^d$ for each subject $i$. Crucially, note that we only observe the $N \leq n$ units for whom $Y_i \neq 0$, where $0$ denotes the $K$-dimensional zero vector. That is, we only observe those that appear in at least one of the $K$ lists. Our goal is then to use these $N$ \textit{capture-recapture} observations to estimate the total population size $n$.

In this setup, the total number of unique observed units, $N = \sum_{i=1}^{n} \mathbbm{1}(Y_i \neq 0)$, is a random variable that follows a binomial distribution: $N \sim \text{Bin}(n, \psi)$, where $\psi = \PP(Y \neq 0)$ is the \textit{capture probability}. Treating the population size $n$ as a fixed parameter satisfying $\E[N] = \psi n$, we can derive a straightforward estimator for $n$ based on the observed sample size $N$:
\begin{equation}
\label{eq:estimator_n}
    \widehat{n} = \frac{N}{\widehat{\psi}}.
\end{equation}
Therefore, estimating the population size $n$ reduces to estimating the inverse capture probability $\psi^{-1}$ from the observed data.

However, identifying and estimating $\psi = 1 - \PP(Y = 0)$ is notably challenging precisely because we do not observe any of those units with $Y_i = 0$. Intuitively, we aim to estimate $\PP(Y = 0)$ by extrapolating the information from those with $Y \neq 0$ \citep{manrique2021capture}, which may be arbitrarily different populations. This is a hopeless task unless we introduce additional \textit{identifying assumptions} that allow us to use the information from the observed distribution to identify and estimate the population parameter $\psi$ using the observed data.\footnote{We refer the reader to \cite{aleshin2024central} for a detailed discussion on the role of the identification assumption in population size estimation.}

Formally, we assume that the covariate and conditional list membership distributions are the same for all units, and that these probabilities do not depend on other units' variables. This implies that the random vectors $(X_i, Y_i)$ are \textit{iid} with respect to some (unknown) underlying distribution $\PP$. Crucially, the capture-recapture setting suffers from biased sampling in the sense that the observed data $\{(X_i, Y_i)\}_{i=1}^{N}$ are actually iid draws from the conditional distribution
$$\QQ(Y=y, X=x) \equiv \PP(Y = y, X=x \mid Y\neq 0).$$
That is, the observed distribution $\QQ$ captures the joint distribution of $(X,Y)$ conditional on them being observed, i.e., given $Y\neq 0$.

To deal with such biased sampling, let $q_y(x) = \QQ(Y = y \mid X=x)$ denote the conditional probability that the list membership vector $Y = (Y_1, \dots, Y_K)$ equals the specific capture profile $y = (y_1, \dots, y_k) \in \{0,1\}^{K}$, given covariates $X=x$. These measure the capture heterogeneity in the target population, and we will refer to the set $\{q_y(X) = \QQ(Y = y \mid X)\}_{y\neq 0}$ as \textit{$q$-probabilities} \citep{das2023doubly}. Then, note that for any $y\neq 0$
\begin{equation}
\label{eq:p_qgamma}
    \PP(Y=y \mid X=x) = \PP(Y=y \mid X=x, Y\neq 0)\PP(Y\neq 0 \mid  X=x) = q_y(x)\gamma(x),
\end{equation}
where $\gamma(x) = \PP(Y\neq 0 \mid X=x)$ is the \textit{conditional capture probability} for a subject with covariates $X=x$. Finally, note that we can express $\psi^{-1} = \PP(Y\neq 0)^{-1}$ as the harmonic mean of $\gamma(X)$ under distribution $\QQ$ \citep{das2023doubly, johndrow2019low}:
\begin{equation}
    \frac{1}{\psi} = \E_{\QQ}\left[ \frac{1}{\gamma(X)}\right] = \int \frac{1}{\gamma(x)}\QQ(X=x)dx,    
    \label{eq:harmonic_mean}
\end{equation}
where $\QQ(X=x) = \PP(X=x \mid Y\neq 0)$. Therefore, if we can identify the conditional capture probability $\gamma(X)$ from the available data, we can use the previous identity \eqref{eq:harmonic_mean} to identify the inverse capture probability $\psi^{-1}$ using the observed data from $\QQ$.

\subsection{Identification strategies in capture-recapture settings}

In this section we summarize two popular identification strategies used in the capture-recapture literature to identify the capture probability and the total population size: conditional independence given covariates, and no highest-order interaction in a log-linear model with many lists. Intuitively, since the capture-recapture setting suffers from biased sampling, it is necessary to introduce additional assumptions that enable us to extrapolate the information from distribution $\QQ$, observed for those that appear at least on one list, to the distribution $\PP$, in particular for those who are unobserved. Crucially, if lists are irreparably dependent we will not be able to identify the target population size \citep{aleshin2024central}.

For instance, in the $K=2$ list case without covariates, assuming that the probability of appearing in one list is independent of appearing on the other list allows for the total population size to be estimated directly using the Lincoln-Petersen estimator \citep{lincoln1930calculating, petersen1896yearly}. This assumption is untestable and can be implausible in practical settings, but is relaxed by measuring covariates that potentially affect the probability of being captured. Then, using the identification assumption of \textit{conditional} independence between lists, the population size $n$ can be correctly identified and estimated from observed data \citep{rivest2014capture, das2023doubly}. Nonetheless, this is still considered a strict identification assumption, implicitly assuming that all relevant variables can be measured, and typically only allowing for the use of two lists.

An alternative common approach when $K$ capture-recapture lists are available is to assume an underlying log-linear model. These models assume that the probability of observing each capture profile vector $y$ satisfies \citep{fienberg1972multiple}
\begin{equation}
\label{eq:normal_loglin}
    \log(\PP(Y=y)) = \alpha_0 + \sum_{y' \neq 0} \alpha_{y'} \prod_{k:~ y'_k = 1}y_k,
\end{equation}
for some parameters $\alpha_y$, where the sum is over all possible non-zero capture profiles $y'\in \{0,1\}^{K}$. In this setting, log-linear models are unidentifiable given that there are $2^K$ parameters $\alpha_y$ to be estimated from $2^K - 1$ observed capture profiles; all but $y=0$. Therefore, the standard approach is to assume \textit{no highest-order interaction}, $\alpha_{1} = 0$, which assumes that the probability $\PP(Y = 1)$ can be expressed using only lower-level interaction terms captured by the $\alpha_y$ parameters, for $y\neq 1$ \citep{fienberg1972multiple, bishop2007discrete}.  We provide a more detailed interpretation of the no highest-order interaction assumption in terms of conditional odds ratios in the Appendix.  However, these models require overlap between all combinations of lists and typically assume homogeneous capture probabilities, meaning that all units have the same probability of being observed, which can be considered strong assumptions. For instance, in settings where demographic or biological attributes make some units more or less easy to capture, assuming homogeneous capture probabilities may bias the estimated population size in complicated ways \citep{manrique2021capture}. 

In summary, there are two predominant approaches for identifying the total population size from capture-recapture data: (i) measuring all possible features and invoking a conditional independence assumption \citep{das2023doubly}, or (ii) collecting multiple lists and using a log-linear model with no highest-order interaction term \citep{you2021estimation}. In general, for any number of $K$ lists, assuming that two lists are conditionally independent given the other lists and covariates is a stricter identification assumption than the no highest-order interaction in a conditional log-linear model. The latter identifying assumption accounts for more complex capture-recapture dependencies across lists, but is harder to interpret for a number of lists larger than $K>3$. In addition, standard log-linear models requires positivity for all capture profiles and cannot account for heterogeneous capture probabilities. 

In the next section, we demonstrate how the two predominant strategies can be integrated by using conditional log-linear models combined with a weaker and more interpretable identification assumption. The proposed identification strategy generalizes the two predominant approaches noting that it is sufficient for identification that a \textit{subset} of the $K$ lists has no highest-order interaction, conditional on not being observed by the remaining lists and covariates. For instance, the assumption 
$$Y_1 \indep Y_2 \mid X, Y_3 = 0, \dots, Y_K=0,$$
is sufficient to identify the capture probability $\psi$. Note that the no highest-order interaction assumption and conditioning only on not being observed by the remaining lists do not impose any restrictions on the observed distribution $\QQ$, allowing for flexible \textit{nonparametric} data-driven modeling of $\psi$. 

\section{Conditional Log-Linear Model Among the Unobserved}

In this section, we give a new approach that incorporates heterogeneous capture probabilities into a conditional log-linear model where list membership varies across covariates $X$. We introduce a new identification approach by assuming there is a subset of $J \leq K$ lists that are not arbitrarily dependent given covariates, \textit{in the subset of the population unobserved by the remaining $K-J$ lists}. We show that this identifying assumption is sufficient to express the target capture probability using the observed data, and that it generalizes the two most common identification strategies used in previous literature. In short, we derive an identifying expression for $\gamma(X)$ from the proposed conditional log-linear model among the unobserved, which we use to identify the capture probability using \eqref{eq:harmonic_mean} and propose efficient estimators for $\psi^{-1}$. Finally, we discuss how to interpret and use the identifying assumption introduced in our work.

In more detail, we first fix a subset $Y_1, \dots, Y_J$ of $J$ lists ($J\leq K$).\footnote{For simplicity, we will index the subset of $J$ lists as $Y_1, \dots, Y_J$. But we do not assume these are captured before lists $Y_{J+1}, \dots, Y_{K}$ or in any particular order.} Then, we use a conditional log-linear model to express the (log) conditional probability that $(Y_1, \dots, Y_J) = (y_1, \dots, y_J)$, conditional on covariates $X$ and on not being present in the other $K-J$ lists. That is, for those unobserved by lists $Y_{J+1}, \dots, Y_K$. Mathematically,
\begin{equation}
\label{eq:log_lineal}
    \log\left[\PP\left\{(Y_1, \dots, Y_J) = (y_1, \dots, y_J) \mid X, (Y_{J+1},\dots,Y_K)=0\right\}\right] = \alpha_0(X) + \sum_{(y'_1, \dots, y'_J) \neq 0} \alpha_{y'}(X) \prod_{j:~ y'_j = 1}y_j,
\end{equation}
where the $\alpha_y(X)$  are arbitrary. Notably, treating $\alpha_y(X)$ as parameters depending on the covariates $X$ allows us to account for heterogeneous capture probabilities of the subjects in the target population, including high-dimensional and continuous attributes.  

\begin{remark}
    To ensure that our log-linear model is well-defined, formally it is required that $\PP(X, (Y_{J+1},\dots,Y_K)=0) > 0$.  This holds by design in the capture-recapture setting because we assume that there is still a portion of the population not captured by the available lists, and in particular by the $J+1, \dots, K$ lists. That is, $\PP((Y_{J+1},\dots,Y_K)=0 \mid X) > 0$. Otherwise, we could use the observed records directly as the total population size.    
\end{remark}


\subsection{Identification in a conditional log-linear model}

To identify the conditional capture probability we follow the standard identification strategy in log-linear models and assume that the highest-order interaction parameter in \eqref{eq:log_lineal} is zero, but only conditionally on covariates $X$ \citep{fienberg1972multiple}. Recall that the highest-order coefficient $\alpha_1(X)$ in a log-linear model captures higher order interactions between the capture episodes, and it is used to encode certain lack of dependence in the population distribution $\PP$. This assumption of no arbitrary dependence is necessary because, as explained above, otherwise the observed data distribution $\QQ$ would be uninformative about the target estimand $\psi = \PP(Y \neq 0)$. 

Building on the saturated log-linear model \eqref{eq:log_lineal}, we can express the highest-order interaction coefficient, $\alpha_1(X)$, in terms of the $q$-probabilities and the conditional capture probability $\gamma(X)$, using the relationship between the true distribution $\PP$ and the observed distribution $\QQ$ in \eqref{eq:p_qgamma}:
\begin{equation}
     \alpha_1(X) = \sum_{(y_1, \dots, y_J) \neq 0} (-1)^{J + |y|}\log(q_y(X)\gamma(X)) + (-1)^{J}\log(1-\gamma(X)),
\end{equation}
where $|y| = \sum_{j=1}^{J} y_j$ is the number of positive lists in the capture profile $y = (y_1, \dots, y_J)$ ($L^{1}-$norm) \citep{bishop2007discrete, you2021estimation, binette2022reliability}. Then, assuming $\alpha_1(X) = 0$, 
we obtain the following point-identification result for the capture probability $\psi$. We defer all the proofs in our work to the Appendix.

\begin{proposition} 
\label{prop:point_psi} \textbf{(Identification Result)} 
Assume no highest-order interaction in a conditional log-linear model for $J$ lists conditional on covariates and on not being captured by the remaining $K-J$ lists. Assume positivity $q_y(X) > 0$, \textit{with probability 1}. Then, we can identify the inverse capture probability $\psi^{-1}$ in terms of the observed data by
\begin{equation}
\label{eq:point_psi}
    \frac{1}{\psi} = \E_{\QQ}\left[\frac{1}{\gamma(X)}\right] = \E_{\QQ}\left[1 + \exp\left(\sum_{(y_1, \dots, y_J) \neq 0} (-1)^{|y| + 1}\log(q_y(X)) \right)\right],
\end{equation}
where the $q$-probabilities $q_y(X) = \QQ((Y_1,\dots, Y_{J}, Y_{J+1},\dots,Y_K) =(y_1,\dots,  y_{J}, 0,\dots, 0) \mid X)$ are computed for the complete $K$-dimensional capture profiles $y$ of the form $y=(y_1,\dots, y_J, 0, \dots, 0)$.
\end{proposition}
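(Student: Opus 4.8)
The plan is to start from the closed-form expression for the highest-order interaction coefficient $\alpha_1(X)$ given just before the statement, set it to zero, and solve for $\gamma(X)$ in terms of the $q$-probabilities; then plug the result into the harmonic-mean identity \eqref{eq:harmonic_mean}. Concretely, from the displayed formula
$$\alpha_1(X) = \sum_{(y_1,\dots,y_J)\neq 0}(-1)^{J+|y|}\log\big(q_y(X)\gamma(X)\big) + (-1)^J\log(1-\gamma(X)),$$
I would first separate the $\log(q_y(X))$ pieces from the $\log\gamma(X)$ pieces. The key combinatorial fact to establish is that the coefficient of $\log\gamma(X)$, namely $\sum_{(y_1,\dots,y_J)\neq 0}(-1)^{J+|y|}$, equals $-(-1)^J$: indeed $\sum_{y\in\{0,1\}^J}(-1)^{|y|}=0$ by the binomial theorem, so $\sum_{y\neq 0}(-1)^{|y|} = -1$, and multiplying by $(-1)^J$ gives $-(-1)^J$. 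Hence the $\log\gamma(X)$ term and the $(-1)^J\log(1-\gamma(X))$ term combine into $(-1)^J\big[\log(1-\gamma(X)) - \log\gamma(X)\big] = (-1)^J\log\frac{1-\gamma(X)}{\gamma(X)}$.

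Setting $\alpha_1(X)=0$ then yields
$$(-1)^J\log\frac{1-\gamma(X)}{\gamma(X)} = -\sum_{(y_1,\dots,y_J)\neq 0}(-1)^{J+|y|}\log q_y(X),$$
and dividing by $(-1)^J$ gives $\log\frac{1-\gamma(X)}{\gamma(X)} = \sum_{(y_1,\dots,y_J)\neq 0}(-1)^{|y|+1}\log q_y(X)$. Exponentiating and rearranging produces
$$\frac{1}{\gamma(X)} = 1 + \exp\!\Big(\sum_{(y_1,\dots,y_J)\neq 0}(-1)^{|y|+1}\log q_y(X)\Big).$$
Taking $\E_{\QQ}$ of both sides and invoking \eqref{eq:harmonic_mean} gives exactly \eqref{eq:point_psi}. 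The positivity assumption $q_y(X)>0$ almost surely is what makes every $\log q_y(X)$ finite and the manipulations valid; I would note explicitly that $0<\gamma(X)\le 1$ holds by the Remark ($\PP(Y\neq 0\mid X)$ may equal $1$, in which case the exponential term is to be read as $0$, and $\gamma(X)>0$ since otherwise the conditioning event in the log-linear model is degenerate), so dividing by $\gamma(X)$ and by $(-1)^J$ is legitimate.

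I also need to reconcile the indexing: in \eqref{eq:point_psi} the $q$-probabilities are written for full $K$-dimensional profiles $y=(y_1,\dots,y_J,0,\dots,0)$, whereas the log-linear model \eqref{eq:log_lineal} is stated conditional on $(Y_{J+1},\dots,Y_K)=0$. So a preliminary step is to observe that $\PP\{(Y_1,\dots,Y_J)=(y_1,\dots,y_J)\mid X,(Y_{J+1},\dots,Y_K)=0\}$, after applying \eqref{eq:p_qgamma}-type conditioning, is proportional to $\QQ\{(Y_1,\dots,Y_J,Y_{J+1},\dots,Y_K)=(y_1,\dots,y_J,0,\dots,0)\mid X\}=q_y(X)$ up to the normalizing factor $\PP((Y_{J+1},\dots,Y_K)=0\mid X)$ and the factor $\gamma(X)$ relating $\PP$ to $\QQ$ — exactly the bookkeeping already encoded in the given $\alpha_1(X)$ formula, which I would cite rather than rederive. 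The only genuine obstacle is this bookkeeping together with the sign/parity count for the $\log\gamma$ coefficient; once those are pinned down the rest is a one-line exponentiation. Everything else is routine algebra.
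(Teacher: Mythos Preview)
Your proposal is correct and follows essentially the same route as the paper's proof: express $\alpha_1(X)$ in terms of $q_y(X)$ and $\gamma(X)$, set it to zero, solve for $\gamma(X)^{-1}$, and apply the harmonic-mean identity \eqref{eq:harmonic_mean}. The only cosmetic difference is that the paper rederives the $\alpha_1(X)$ formula from the log-linear model (carrying out the bookkeeping you describe at the end), whereas you cite that displayed formula directly and instead spell out the parity count $\sum_{y\neq 0}(-1)^{J+|y|}=-(-1)^J$ that the paper leaves implicit in its ``solve for $\gamma(X)$'' step.
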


Our identification strategy is weaker than assuming a log-linear model for all $K$ lists with no highest-order interaction in two ways. First, when $J=K$ we recover the standard log-linear model \eqref{eq:normal_loglin}, conditional on covariates $X$, which assumes a lack of dependence between all available $K$ lists in the target population. Notably, our proposed approach only requires that $J\leq K$ lists are not arbitrarily dependent \textit{in the subset of the population unobserved by the rest of the $K-J$ lists}, conditional on covariates. Moreover, our identifying assumption captures the right intuition that identification is easier when (i) there are more relevant covariates $X$ available, or (ii) there is a larger number $K$ of lists that already capture a portion of the target population. 

Secondly, note that we only require positivity of the $q$-probabilities for the $2^{J}-1$ capture profiles of the form $y=(y_1,\dots, y_J, 0, \dots, 0) \neq 0$. On the other hand, using a conditional log-linear model on all of the $K$ lists needs the $2^{K}-1$ $q$-probabilities for all possible capture profiles $y=(y_1,\dots, y_K) \neq 0$ to be bounded away from zero with probability one. In the presence of many lists, this is a strong assumption due to potential lack of overlap between all possible combinations of lists. For instance, if $J=3$ and $K=5$, our approach assumes positivity for seven $q$-probabilities, while traditional log-linear models need positivity to hold for $31$ of them. This has been studied in the capture-recapture literature using data models with \textit{structural zeros} \citep{bishop2007discrete}. However, such methods typically introduce additional parametric assumptions mixing the identification and the estimation of the total population size. Our method overcomes this problem nonparametrically and does not require overlap between all $K$ lists, but only between the $J$ lists present in the conditional log-linear model.

Another special case arises when $J=2$. In this scenario, even in the presence of a large $K$, our identification strategy reduces to assuming conditional independence between two lists in the population unobserved by $Y_3, \dots, Y_K$, given covariates. The obtained identification expression resembles the result in \cite{das2023doubly}, which can be seen as a nonparametric conditional Lincoln-Peterson estimator for heterogeneous populations. Crucially, unlike the \cite{das2023doubly} estimator, our method uses the information from all $K$ lists in both the identification and estimation procedures. We summarize this special case in the following Corollary as we believe it can be useful in many settings where there are two larger lists that are believed to be (conditionally) independent.

\begin{corollary}
\label{cor:l2_condind}
    When $J=2$, our identification assumption is equivalent to assuming that two lists are conditionally independent given covariates for those not captured by $Y_3, \dots, Y_K$:
\begin{equation}
    Y_1 \indep Y_2 \mid X, Y_3 = 0, \dots, Y_K = 0.
\end{equation}
In this case, identification of the capture probability $\psi$ simplifies to
\begin{equation}
    \frac{1}{\psi} = \E_{\QQ}\left[\frac{1}{\gamma(X)}\right] = \E_{\QQ}\left[\frac{q_{10}(X)q_{01}(X)}{q_{11}(X)} + 1\right],
\end{equation}
where $q_{y_1y_2}(X) = \QQ(Y_1=y_1, Y_2=y_2,Y_3=0, \dots, Y_K=0 \mid X)$.
\end{corollary}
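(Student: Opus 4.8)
The plan is to prove the two assertions of Corollary~\ref{cor:l2_condind} separately: first, that with $J=2$ the no-highest-order-interaction hypothesis coincides with the stated conditional independence, and second, that \eqref{eq:point_psi} collapses to the displayed closed form.

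\textbf{Equivalence.} With $J=2$ the conditional log-linear model \eqref{eq:log_lineal} is just a saturated model for the $2\times 2$ table of $(Y_1,Y_2)$ under $\PP$, conditional on $X$ and on $(Y_3,\dots,Y_K)=0$. Writing $p_{ab}(X)=\PP\{Y_1=a,Y_2=b\mid X,\,Y_3=0,\dots,Y_K=0\}$ for $a,b\in\{0,1\}$ and solving \eqref{eq:log_lineal} for the highest-order coefficient gives
\[
\alpha_{11}(X)=\log p_{11}(X)-\log p_{10}(X)-\log p_{01}(X)+\log p_{00}(X)=\log\frac{p_{11}(X)\,p_{00}(X)}{p_{10}(X)\,p_{01}(X)},
\]
i.e.\ $\alpha_{11}(X)$ is the log conditional cross-product (odds) ratio of $Y_1$ and $Y_2$ given $X,Y_3=0,\dots,Y_K=0$. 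Hence $\alpha_{11}(X)=0$ with probability one is equivalent to this conditional odds ratio equaling $1$ with probability one, which is the textbook characterization of $Y_1\indep Y_2\mid X,Y_3=0,\dots,Y_K=0$; positivity of the $q$-probabilities ensures all four cell probabilities are positive, so the logarithms are well defined.

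\textbf{Closed form.} For the identification expression I would specialize Proposition~\ref{prop:point_psi} to $J=2$. The non-zero profiles $(y_1,y_2)$ are $(1,0)$, $(0,1)$ and $(1,1)$, with $|y|$ equal to $1$, $1$ and $2$, so
\[
\sum_{(y_1,y_2)\neq 0}(-1)^{|y|+1}\log q_y(X)=\log q_{10}(X)+\log q_{01}(X)-\log q_{11}(X)=\log\frac{q_{10}(X)\,q_{01}(X)}{q_{11}(X)},
\]
and exponentiating inside \eqref{eq:point_psi} yields $\psi^{-1}=\E_{\QQ}[\,q_{10}(X)q_{01}(X)/q_{11}(X)+1\,]$, where $q_{y_1y_2}(X)$ is exactly the $K$-dimensional $q$-probability $q_{(y_1,y_2,0,\dots,0)}(X)$ of Proposition~\ref{prop:point_psi}. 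Alternatively one can re-derive this directly: by \eqref{eq:p_qgamma}, $\PP(Y_1=a,Y_2=b,Y_3=0,\dots,Y_K=0\mid X)=q_{ab}(X)\gamma(X)$ for $(a,b)\neq 0$ and $\PP(Y=0\mid X)=1-\gamma(X)$; substituting these into the cross-product equation $p_{11}p_{00}=p_{10}p_{01}$ (the common normalizing factor $\PP(Y_3=0,\dots,Y_K=0\mid X)$ cancels) gives $q_{11}(X)(1-\gamma(X))=q_{10}(X)q_{01}(X)\gamma(X)$, hence $1/\gamma(X)=q_{10}(X)q_{01}(X)/q_{11}(X)+1$, and \eqref{eq:harmonic_mean} finishes the argument.

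\textbf{Main obstacle.} There is no substantive obstacle here, since the corollary is a direct specialization; the only point requiring care is keeping the two distributions straight. The log-linear model \eqref{eq:log_lineal} and the independence statement are assertions about $\PP$, which retains the $(0,0)$ cell, whereas $\QQ$ conditioned on $Y_3=0,\dots,Y_K=0$ does not; because the cross-product ratio involves only ratios in which the conditioning normalization cancels, passing between the two is harmless. One should also record that the positivity hypothesis of Proposition~\ref{prop:point_psi} here reduces to requiring $q_{10}(X),q_{01}(X),q_{11}(X)$ bounded away from zero almost surely, and that the identity $q_{y_1y_2}(X)=q_{(y_1,y_2,0,\dots,0)}(X)$ reconciles the notation of the corollary with that of the proposition.
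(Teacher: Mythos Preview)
Your proposal is correct and takes essentially the same approach as the paper. The paper does not give a standalone proof of this corollary, but Appendix~A derives exactly your odds-ratio identity $\exp(\alpha_1(X))=p_{00}p_{11}/(p_{10}p_{01})$ for the two-list case and uses it to equate $\alpha_1(X)=0$ with conditional independence, and the identification formula is obtained, as you do, by specializing Proposition~\ref{prop:point_psi} to $J=2$.
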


\begin{remark}
    The conditional log-linear model and the no highest-order interaction assumption are used to identify the target parameter $\psi^{-1}$ by assuming a lack of dependence of some of the capture episodes in the population distribution $\PP$. Nonetheless, this does not impose any restrictions on the \textit{observed} distribution $\QQ$, such that our model is nonparametric in that we do not constrain the nuisance functions  $q_y(x) = \QQ(Y=y \mid X=x)$ needed to identify $\psi$. Therefore, we can leverage semiparametric efficiency theory to propose efficient estimators without introducing additional potentially restrictive assumptions, allowing for flexible estimation of the nuisance heterogeneous $q$-probabilities. 
\end{remark}

To summarize, we have introduced a novel identification strategy assuming that a subset of $J\leq K$ lists are not arbitrarily dependent within the population not captured by the rest of the lists, conditional on covariates. This identification assumption generalizes the two predominant approaches in the literature: conditional independence between capture episodes and traditional log-linear models with no highest-order interaction between $K$ lists. Moreover, it is interpretable and actionable in the sense that to make identification of the capture probability easier, an analyst can either measure more relevant covariates or collect additional lists. In the next section, we use the obtained identifying expression to derive the efficiency bound to estimate $\psi^{-1}$ and propose a ``doubly-robust'' estimator that closely attains such bound and is approximately normal in finite samples.

\section{Efficient Nonparametric Estimation}

In this section, we build on the identifying expression from Proposition \ref{prop:point_psi} arising from the identification assumption of no highest-order interaction in a conditional log-linear model among the unobserved. We explicitly derive the Efficient Influence Function for \eqref{eq:point_psi} and the corresponding efficiency bound for estimating $\psi^{-1} = \E_{\QQ}[\gamma(X)^{-1}]$. We then introduce a novel one-step estimator for the capture probability and show that it is nearly optimal in a ``doubly-robust'' fashion and approximately Gaussian for any sample size $N$.

\subsection{Efficient influence function \& efficiency bound}

Before introducing our proposed estimator for the inverse capture probability, we first benchmark the estimation problem using the semiparametric efficiency bound. Importantly, the efficiency bound gives the best possible (scaled) Mean Squared Error that can be achieved for estimating $\psi^{-1}$ without introducing additional, potentially restrictive, parametric assumptions on the nuisance functions $q_y(X) = \QQ(Y=y \mid X)$. Therefore, in the absence of a correct parametric model for the $q$-probabilities, having an estimator that attains such efficiency bound is critical to allow for the flexible estimation of the nuisance functions capturing heterogeneous capture probabilities in a data-driven way. 

Formally, the efficiency bound for estimating the inverse capture probability can be computed as the variance of the \textit{Efficient Influence Function} (EIF) for $\psi^{-1}$. Therefore, we first use standard machinery from semiparametric efficiency theory to derive the corresponding EIF for the inverse capture probability from the identifying expression in 
equation \eqref{eq:point_psi}  (cf.  \cite{ehk_semiparametric, semiparametric, you2021estimation}).

\begin{lemma}
\label{lemma:if_psi}
     Assume (i) no highest-order interaction in a conditional log-linear model for $J$ lists, conditional on covariates and on being unobserved by the remaining $K-J$ lists, and (ii) positivity for the relevant $q$-probabilities. Then, the Efficient Influence Function for $\psi^{-1} = \E_{\QQ}[\gamma(X)^{-1}]$ is given by
\begin{equation}
\label{eq:if_psi}
    \phi(X,Y) = \left(\frac{1}{\gamma(X)} - 1\right)\sum_{(y_1, \dots, y_J) \neq 0} (-1)^{|y|+1} \frac{\mathbbm{1}(Y=y)}{q_y(X)} + 1 - \frac{1}{\psi},
\end{equation}
where $\gamma(X)^{-1} = 1 + \exp\left(\sum_{(y_1, \dots, y_J) \neq 0} (-1)^{|y| + 1}\log(q_y(X)) \right)$ is the identifying expression for the conditional capture probability. As in Proposition \ref{prop:point_psi}, the $q$-probabilities are computed for the $K$-dimensional capture profiles $y$ of the form $y=(y_1,\dots, y_J, 0, \dots, 0)$.
\end{lemma}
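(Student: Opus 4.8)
The plan is to derive the EIF directly via the standard "point-by-point" calculus of influence functions, exploiting the fact that the estimand $\psi^{-1} = \E_{\QQ}[\gamma(X)^{-1}]$ is an expectation under $\QQ$ of a known smooth transformation of the nuisance functions $\{q_y(X)\}$. First I would recall the general recipe: for a functional of the form $\E_{\QQ}[g(X)]$ where $g(X) = m(\{q_y(X)\}_{y\neq 0})$ depends on the conditional laws $q_y(X) = \QQ(Y=y\mid X)$, the EIF decomposes into (a) the "mean-zero centering" piece $g(X) - \E_{\QQ}[g(X)] = \gamma(X)^{-1} - \psi^{-1}$, coming from not knowing the marginal $\QQ(X)$, plus (b) a correction term $\sum_{y\neq 0}\frac{\partial m}{\partial q_y}(X)\bigl(\mathbbm{1}(Y=y) - q_y(X)\bigr)$ accounting for estimation of each conditional probability $q_y(X)$. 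So the core of the argument is just a chain-rule computation of $\partial g/\partial q_y$ with $g(X) = 1 + \exp\!\bigl(\sum_{(y_1,\dots,y_J)\neq 0}(-1)^{|y|+1}\log q_y(X)\bigr)$.

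Carrying this out: write $h(X) = \sum_{(y_1,\dots,y_J)\neq 0}(-1)^{|y|+1}\log q_y(X)$, so $g(X) = 1 + e^{h(X)} = \gamma(X)^{-1}$ and $e^{h(X)} = \gamma(X)^{-1} - 1$. For each capture profile $y$ of the form $(y_1,\dots,y_J,0,\dots,0)\neq 0$, the derivative of $h$ with respect to $q_y$ is $(-1)^{|y|+1}/q_y(X)$, hence $\partial g/\partial q_y = e^{h(X)}(-1)^{|y|+1}/q_y(X) = (\gamma(X)^{-1}-1)(-1)^{|y|+1}/q_y(X)$. Plugging into the correction term and noting that the $\sum_{y\neq 0} q_y(X)\cdot\frac{\partial m}{\partial q_y}$ "re-centering" pieces can either be absorbed or shown to vanish (they contribute $(\gamma(X)^{-1}-1)\sum_y (-1)^{|y|+1}$, a constant in $Y$ that merges cleanly), I would assemble
\[
\phi(X,Y) = \Bigl(\tfrac{1}{\gamma(X)}-1\Bigr)\sum_{(y_1,\dots,y_J)\neq 0}(-1)^{|y|+1}\frac{\mathbbm{1}(Y=y)}{q_y(X)} + 1 - \frac{1}{\psi},
\]
matching the claimed expression once one checks the indicator-free terms reduce to the additive constant $1 - \psi^{-1}$. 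To be rigorous I would instead present this as a verification: propose the candidate $\phi$ above, confirm $\E_{\QQ}[\phi]=0$ (using $\E_{\QQ}[\mathbbm{1}(Y=y)\mid X] = q_y(X)$, which collapses the sum to $\gamma(X)^{-1}-1$ times $\sum(-1)^{|y|+1}$... here one must be careful — actually $\E_\QQ[\mathbbm{1}(Y=y)/q_y(X)\mid X]=1$, so the conditional expectation of the sum is $\sum_{y\neq 0}(-1)^{|y|+1}$; a binomial-coefficient identity shows $\sum_{y\neq 0}(-1)^{|y|+1} = 1$, giving $\E_\QQ[\phi\mid X] = \gamma(X)^{-1}-\psi^{-1}$ and then $\E_\QQ[\phi]=0$), and then verify the pathwise-derivative identity $\frac{d}{dt}\psi_t^{-1}\big|_{t=0} = \E_{\QQ}[\phi \cdot s(X,Y)]$ for an arbitrary regular parametric submodel with score $s$, using that the model places no restriction on $\QQ$ so the tangent space is the full $L_2^0(\QQ)$ and the EIF is therefore unique.

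The main obstacle I anticipate is bookkeeping around the combinatorial signs and the $K$- versus $J$-dimensional profiles: one must consistently interpret $q_y(X)$ as the probability of the \emph{full} $K$-profile $(y_1,\dots,y_J,0,\dots,0)$, track the $(-1)^{|y|+1}$ alternating signs through the differentiation, and invoke the identity $\sum_{v\in\{0,1\}^J,\,v\neq 0}(-1)^{|v|+1}=1$ (equivalently $\sum_{j=1}^J\binom{J}{j}(-1)^{j+1}=1$) at exactly the right place so that the stray terms collapse to the clean constant $1-\psi^{-1}$. A secondary subtlety is confirming that the conditioning event $\{(Y_{J+1},\dots,Y_K)=0\}$ and the no-highest-order-interaction restriction genuinely impose no constraint on $\QQ$ — this follows from the Remark and preceding discussion, so the tangent space is unrestricted and the influence function computed above is the \emph{efficient} one rather than merely \emph{an} influence function. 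Everything else is routine differentiation and an application of the standard EIF template for expectations of nuisance-dependent functionals under a biased-sampling law.
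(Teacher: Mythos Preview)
Your proposal is correct and constitutes a genuinely different route from the paper's proof. You derive the EIF constructively via the influence-function chain rule (the ``building block'' Gateaux-derivative calculus): decompose $\psi^{-1}=\E_{\QQ}[g(X)]$ into the centering piece $\gamma(X)^{-1}-\psi^{-1}$ plus the nuisance-correction $\sum_y (\partial g/\partial q_y)(\mathbbm{1}(Y=y)-q_y(X))$, compute $\partial g/\partial q_y=(\gamma^{-1}-1)(-1)^{|y|+1}/q_y$, and collapse the constants using $\sum_{v\neq 0}(-1)^{|v|+1}=1$. The paper instead \emph{verifies} the candidate $\phi$ by computing the von Mises remainder $R_2(\QQ,\overbar{\QQ})=\overbar{\psi}^{-1}-\psi^{-1}+\E_{\QQ}[\overbar{\phi}]$, expanding $\gamma^{-1}-1=\exp(\sum(-1)^{|y|+1}\log q_y)$ to second order in $(q_y-\overbar{q}_y)$, checking that the first-order terms cancel so $R_2$ involves only products of nuisance errors, and then invoking the fact that a second-order remainder characterizes the IF. Both conclude efficiency by the same argument (no restriction on $\QQ$, full tangent space, unique IF).

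Your route is arguably the more natural \emph{derivation}, since it produces $\phi$ without having to guess it first; the paper's route, however, buys something you do not get for free: the explicit second-order remainder formula \eqref{eq:r2_bound} is an immediate byproduct of the von Mises computation, and this exact $R_2$ expression is reused verbatim in the proofs of Theorem~\ref{thm:optimal} and Corollary~\ref{cor:doubly_robust}. If you follow your plan you will still need a separate Taylor-expansion argument later to obtain that remainder bound, so the paper's approach is more economical in the context of the full paper even if slightly less transparent for the lemma in isolation.
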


In the following Theorem, we use this result to compute the semiparametric efficiency bound to estimate the target estimand of the inverse capture probability, $\psi^{-1} = \E_{\QQ}[\gamma(X)^{-1}]$. Crucially, this benchmarks our estimation problem because it gives the best possible scaled MSE that can be achieved for estimating $\psi^{-1}$ without introducing additional parametric assumptions.

\begin{theorem}
\label{thm:eff_bound}
Under the same assumptions as in Lemma \ref{lemma:if_psi}, the semiparametric efficiency bound for estimating $\psi^{-1} = \E_{\QQ}[\gamma(X)^{-1}]$ takes the form
\begin{equation}
\label{eq:eff_bound}
    \Var(\phi(X,Y)) = \Var\left(\frac{1}{\gamma(X)}\right) + \E\left[\left(\frac{1}{\gamma(X)} - 1\right)^2\sum_{(y_1, \dots, y_J) \neq 0} \frac{1}{q_y(X)} - 1\right],
\end{equation}
where $\gamma(X)^{-1} = 1 + \exp\left(\sum_{(y_1, \dots, y_J) \neq 0} (-1)^{|y| + 1}\log(q_y(X)) \right)$ and the $q$-probabilities are computed for the $K$-dimensional capture profiles $y=(y_1,\dots, y_J, 0, \dots, 0)$.
\end{theorem}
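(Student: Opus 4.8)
The plan is to read off the efficiency bound as $\Var(\phi(X,Y))$ using the explicit efficient influence function supplied by Lemma \ref{lemma:if_psi}; the remaining work is a careful second-moment computation, with no new statistical content. Write $g(X) = \gamma(X)^{-1} - 1 = \exp\big(\sum_{(y_1,\dots,y_J)\neq 0}(-1)^{|y|+1}\log q_y(X)\big)$ and $S(X,Y) = \sum_{(y_1,\dots,y_J)\neq 0}(-1)^{|y|+1}\,\mathbbm{1}(Y=y)/q_y(X)$, so that $\phi(X,Y) = g(X)\,S(X,Y) + 1 - \psi^{-1}$. Since an additive constant does not affect variance, $\Var(\phi) = \Var\big(g(X)S(X,Y)\big)$, and I would evaluate this by the law of total variance, conditioning on $X$.

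For the conditional mean and variance of $S$ given $X$ I would use two elementary facts. First, the combinatorial identity $\sum_{y\in\{0,1\}^J}(-1)^{|y|}=0$, which gives $\sum_{(y_1,\dots,y_J)\neq 0}(-1)^{|y|+1}=1$; combined with $\E[\mathbbm{1}(Y=y)\mid X]=q_y(X)$ this yields $\E[S\mid X]=1$, hence $\E[g(X)S\mid X]=g(X)=\gamma(X)^{-1}-1$, whose variance is $\Var(\gamma(X)^{-1})$. Second, the indicators $\mathbbm{1}(Y=y)$ over distinct capture profiles are mutually exclusive, so all cross terms in $S^2$ vanish and, since every squared sign equals $1$, $S^2 = \sum_{(y_1,\dots,y_J)\neq 0}\mathbbm{1}(Y=y)/q_y(X)^2$; taking conditional expectations gives $\E[S^2\mid X]=\sum_{(y_1,\dots,y_J)\neq 0}1/q_y(X)$ and therefore $\Var(S\mid X)=\sum_{(y_1,\dots,y_J)\neq 0}1/q_y(X)-1$.

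Assembling the pieces, $\E[\Var(g(X)S\mid X)]=\E\big[g(X)^2\Var(S\mid X)\big]=\E\big[(\gamma(X)^{-1}-1)^2(\sum_{y\neq 0}q_y(X)^{-1}-1)\big]$ and $\Var(\E[g(X)S\mid X])=\Var(\gamma(X)^{-1})$, and these add to the claimed formula for $\Var(\phi)$. It is worth recording along the way that $\E[\phi]=0$ — automatically, since $\phi$ is an influence function, or by the same identity together with $\E_\QQ[\gamma(X)^{-1}]=\psi^{-1}$ from Proposition \ref{prop:point_psi} — which is what permits replacing $\Var(\phi)$ by $\E[\phi^2]$ if one prefers a direct expansion, $\E[\phi^2]=\E[g(X)^2\sum_{y\neq 0}q_y(X)^{-1}] - (\E_\QQ[\gamma(X)^{-1}-1])^2$, instead of the law of total variance.

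There is no serious obstacle here; the only points requiring care are (i) getting the sign-weighted combinatorial identity right so that $\E[S\mid X]=1$, (ii) noting that mutual exclusivity of the profile indicators kills the off-diagonal terms of $S^2$ and that the positivity assumption on the relevant $q$-probabilities keeps every term finite, and (iii) the bookkeeping in regrouping $\E[g(X)^2\sum_{y\neq 0}q_y(X)^{-1}] - (\E_\QQ[\gamma(X)^{-1}-1])^2$ into $\Var(\gamma(X)^{-1})$ plus a nonnegative ``heterogeneity'' term — a step that the law of total variance makes automatic, since the extra term is exactly $\E[g(X)^2\Var(S\mid X)]$.
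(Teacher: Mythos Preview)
Your proposal is correct and follows essentially the same approach as the paper: both compute $\Var(\phi)$ via the law of total variance conditioning on $X$, use the identity $\sum_{(y_1,\dots,y_J)\neq 0}(-1)^{|y|+1}=1$ to get $\E[S\mid X]=1$, and exploit mutual exclusivity of the profile indicators for the conditional second moment. Your computation of $\Var(S\mid X)$ directly from $S^2=\sum_{y\neq 0}\mathbbm{1}(Y=y)/q_y(X)^2$ is in fact a bit cleaner than the paper's route through explicit variances and covariances of the indicators plus a second combinatorial identity, but this is a minor streamlining of the same argument rather than a different proof.
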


Notably, the efficiency bound benchmarking our estimation problem is composed by two terms. The first term captures the heterogeneity in the conditional capture probabilities $\gamma(X)$, such that if these are more homogeneous we can estimate the marginal capture probability better. The second term involves the magnitude of these conditional probabilities and the magnitude of the individual $q$-probabilities for the capture profiles $y=(y_1,\dots, y_J, 0, \dots, 0)$. Intuitively, as those unobserved by $Y_{J+1}, \dots, Y_{K}$ are more easily observed under any combination of the lists $Y_1,\dots,Y_J$, we can estimate $\psi^{-1}$ more efficiently. 

Note that for a standard log-linear model on $K$ lists, the resulting efficiency bound would include the $q$-probabilities for all possible capture profiles $y = (y_1, \dots, y_K)$. Therefore, when there are rare capture profiles (i.e., $q_y(X)$ is small for some $y$'s), the efficiency bound is larger and  estimation is more difficult. This shows how our approach not only allows for weaker identification assumptions, but also makes estimation easier. This is illustrated empirically in our data analysis section using real data from the Peruvian internal conflict.

\subsection{Nonparametric estimation}

We now build over the previous results to propose a \textit{one-step} double machine-learning estimator that allows for flexible estimation of the $q$-probabilities, for example using machine learning models. We provide finite-sample guarantees for our one-step estimator showing that it is close to an empirical average of the EIF, which attains the semiparametric efficiency bound derived in \eqref{eq:eff_bound}, with high probability (nearly optimal). Moreover, we show that it is approximately normal allowing for standard inference procedures. We complement our analysis by comparing the proposed estimator to a natural, but typically sub-optimal plug-in estimator for $\psi^{-1}$. 

\subsubsection{Plug-in estimator}

To motivate the proposed one-step estimator we first introduce a simple plug-in estimator for $\psi^{-1}$. Concretely, from the identification expression in equation \eqref{eq:point_psi},
a straightforward \textit{plug-in estimator} to estimate $\psi^{-1} = \E_{\QQ}[\gamma(X)^{-1}]$ from observed data $\{(X_i, Y_i)\}_{i=1}^N$ is computed by replacing the conditional capture probability $\gamma(X)$ by an estimate $\widehat{\gamma}(X)$, and by estimating the expectation $\E_{\QQ}$ with the empirical average $\QQ_N$. Mathematically:
\begin{equation}
    \widehat{\frac{1}{\psi_{pi}}} = \QQ_N\left(\frac{1}{\widehat{\gamma}(X_i)}\right) = \frac{1}{N} \sum_{i=1}^N\left(1 + \exp\left(\sum_{(y_1, \dots, y_J) \neq 0} (-1)^{|y| + 1}\log(\widehat{q}_y(X_i)) \right)\right),
\end{equation}
where $\widehat{q}_y(X_i)$ are estimates of the $q$-probabilities, $q_y(X_i) = \QQ(Y=y \mid X=X_i)$, for each capture profile $y = (y_1, \dots, y_J, 0, \dots, 0) \neq 0$. 

Note that in constructing the plug-in estimator for the (inverse) capture probability, one estimates conditional capture probabilities for each unit, $\widehat{\gamma}(X_i)$, and their empirical harmonic mean (i.e., with respect to $\QQ_N$) is computed to obtain $\widehat{\psi}_{pi}^{-1}$. Crucially, a correct parametric model for the $q$-probabilities is often not available and a natural alternative is to estimate them using flexible methods, such as machine learning models, especially when the heterogeneous capture probabilities depend on high-dimensional or continuous covariates. However, such data-adaptive methods can cause $\widehat{\psi}_{pi}^{-1}$ to exhibit slower than $\sqrt{N}$-convergence rates and to have an unknown limiting distribution. The first limitation makes the plug-in estimator inefficient, while the second prohibits us from making any inference about $\widehat{\psi}_{pi}^{-1}$. 

\subsubsection{One-step estimator}

To overcome these limitations of the plug-in estimator in semi- and non-parametric settings, we leverage semiparametric efficiency theory and propose the following one-step estimator using the Efficient Influence Function for $\psi^{-1}$ derived in Lemma \ref{lemma:if_psi}. Intuitively, the one-step estimator is a debiased version of the plug-in estimator obtained from adding an estimate of the mean of the EIF to the plug-in estimator  (cf.  \cite{semiparametric, ehk_semiparametric}). Concretely, the proposed \textbf{\textit{one-step estimator}} $\widehat{\psi}^{-1}_{os}$ takes the form:
\begin{equation}
\label{eq:os_point_psi}
    \widehat{\frac{1}{\psi_{os}}} = \QQ_N \left( \left(\frac{1}{\widehat{\gamma}(X_i)} - 1\right)\sum_{(y_1, \dots, y_J) \neq 0} (-1)^{|y|+1} \frac{\mathbbm{1}(Y_i=y)}{\widehat{q}_y(X_i)} + 1\right).
\end{equation}
As before, to construct $\widehat{\psi}_{os}^{-1}$ we first estimate the nuisance $q$-probabilities from the available data, and then combine those into capture probabilities. Using these estimates, we then compute the empirical mean of the (uncentered) EIF to obtain $\widehat{\psi}_{os}^{-1}$.

\begin{remark}
    We assume the nuisance functions $q_y(X)$ are estimated using separate samples from the one used to compute the empirical mean $\QQ_N$ in the proposed estimators for $\psi^{-1}$. These cross-fitting/sample-splitting approaches control empirical process terms in our estimators, without introducing potentially restrictive assumptions \citep{chernozhukov2018double, ehk_semiparametric}. 
\end{remark}    

We now show that the proposed one-step estimator is nearly optimal, approximately normal, and exhibits a doubly-robust-type second order remainder. Specifically, we show that these properties hold for any number $N$ of observed units even when the $q$-probabilities are estimated flexibly using machine learning models. We restrict ourselves to providing finite sample guarantees because this is often the relevant setting in capture-recapture problems where it is not possible to let the sample size $N\to \infty$, as we aim to estimate a finite population of size $n \geq N$.

\begin{theorem} 
\label{thm:optimal}
For any sample size $N$ the proposed one-step estimator $\widehat{\psi}^{-1}_{os}$ satisfies
\begin{equation}
    \Big|(\widehat{\psi}^{-1}_{os} - \psi^{-1}) - \QQ_N(\phi)\Big| \leq \eta,
\end{equation}
for any $\eta > 0$, with probability at least
$$1 - \left(\frac{1}{\eta^2}\right) \E_{\QQ}\left[\widehat{R}_2^2 + \frac{1}{N}\norm{\widehat{\phi} - \phi}^2\right],$$
where $\phi$ is the Efficient Influence Function \eqref{eq:if_psi} with second-order reminder given by
\begin{equation}
\label{eq:r2_bound}
\begin{split}
    \widehat{R}_2 = \E_{\QQ}&\left[\left(\frac{1}{\widetilde{\gamma}(X)}-1\right)\left\{\sum_{y_i\neq y_j \neq 0} (-1)^{|y_i| + |y_j|}\frac{(q_{y_i}(X) - \widehat{q}_{y_i}(X))}{\widetilde{q}_{y_i}(X)}\frac{(q_{y_j}(X) - \widehat{q}_{y_j}(X))}{\widetilde{q}_{y_j}(X)}\right.\right. \\
    & + \left.\left. \sum_{|y|\neq 0, ~even} \left(\frac{q_{y}(X) - \widehat{q}_{y}(X)}{\widetilde{q}_{y}(X)}\right)^2\right\}\right],
\end{split}
\end{equation}
where the sums are taken with respect to capture profiles of the form $y = (y_1,\dots, y_J, 0, \dots, 0)$ and for some value $\widetilde{q}(X)$ of the $q$-probabilities between $q(X)$ and $\widehat{q}(X)$.
\end{theorem}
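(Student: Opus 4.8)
The plan is to decompose the error of the one-step estimator in the standard way for doubly-robust estimators and then control each piece. Write $\widehat{\phi}(X,Y)$ for the (uncentered) EIF with the estimated nuisances $\widehat{q}_y$, $\widehat{\gamma}$ plugged in, so that $\widehat{\psi}_{os}^{-1} = \QQ_N(\widehat{\phi})$ by definition \eqref{eq:os_point_psi}. First I would write the algebraic identity
\begin{equation}
\widehat{\psi}_{os}^{-1} - \psi^{-1} = \QQ_N(\phi) + (\QQ_N - \QQ)(\widehat{\phi} - \phi) + \left\{\E_{\QQ}(\widehat{\phi}) - \psi^{-1}\right\},
\end{equation}
which holds for any fixed $\widehat{\phi}$ (i.e.\ conditional on the auxiliary sample used to fit the nuisances, so that $\widehat{\phi}$ is a deterministic function and cross-fitting makes the middle term a genuine centered empirical-process term). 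The claim then reduces to showing that the last two terms together have small second moment: specifically that $\E_{\QQ}(\widehat{\phi}) - \psi^{-1} = \widehat{R}_2$ with $\widehat{R}_2$ as in \eqref{eq:r2_bound}, and that the empirical-process term has conditional variance $\tfrac1N \norm{\widehat{\phi}-\phi}^2$. Given these two facts, Chebyshev's inequality applied to the sum $(\QQ_N - \QQ)(\widehat{\phi}-\phi) + \widehat{R}_2$ — whose conditional mean is $\widehat{R}_2$ and whose conditional variance is $\tfrac1N\norm{\widehat{\phi}-\phi}^2$ — gives the stated tail bound after taking expectation over the training sample; this is a short argument once the remainder is identified.

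The genuinely substantive step is computing the bias term $\E_{\QQ}(\widehat{\phi}) - \psi^{-1}$ and showing it equals the quadratic expression $\widehat{R}_2$. I would proceed by the usual von Mises / exact-bias expansion: since $\phi$ is the EIF, $\E_{\QQ}(\widehat{\phi}) - \psi^{-1}$ is (exactly, not just to first order) a second-order remainder in the differences $\widehat{q}_y - q_y$. Concretely I would take conditional expectations given $X$ inside $\E_{\QQ}(\widehat{\phi})$ — using $\E_{\QQ}[\mathbbm{1}(Y=y)\mid X] = q_y(X)$ for the true $q$ — to get
\begin{equation}
\E_{\QQ}(\widehat{\phi}) - \psi^{-1} = \E_{\QQ}\left[\left(\frac{1}{\widehat\gamma(X)}-1\right)\sum_{y\neq 0}(-1)^{|y|+1}\frac{q_y(X)}{\widehat q_y(X)} + 1 - \frac{1}{\gamma(X)}\right],
\end{equation}
and then expand $1/\widehat\gamma(X) = 1 + \exp\big(\sum_y (-1)^{|y|+1}\log\widehat q_y(X)\big)$ around the true $q$-probabilities. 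The cross terms $(-1)^{|y|+1}q_y/\widehat q_y$ cancel the first-order parts of this exponential expansion, leaving a quadratic form in the relative errors $(q_y-\widehat q_y)/\widetilde q_y$; the diagonal-versus-off-diagonal split and the sign pattern (off-diagonal terms $(-1)^{|y_i|+|y_j|}$, diagonal terms surviving only for even $|y|$) come out of carefully bookkeeping the signs $(-1)^{|y|+1}$ against the second-order Taylor coefficients of $t\mapsto \prod_y t_y^{(-1)^{|y|+1}}$ at a mean-value point $\widetilde q$. This sign/combinatorics bookkeeping is where I expect the main effort to lie, and it is essentially the same identity already used to derive Lemma \ref{lemma:if_psi}, so the EIF calculation there can be reused in reverse.

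The remaining pieces are routine. For the empirical-process term, conditional on the training fold the summands $\widehat\phi(X_i,Y_i)-\phi(X_i,Y_i)$ are i.i.d.\ mean-zero under $\QQ$ (mean zero because $\E_\QQ\phi = 0$ and because $\E_\QQ(\widehat\phi)$ is being tracked by $\widehat R_2$, handled above), so its variance is exactly $\tfrac1N\Var_\QQ(\widehat\phi-\phi)\le \tfrac1N\E_\QQ\norm{\widehat\phi-\phi}^2$. Then conditional Chebyshev on $(\QQ_N-\QQ)(\widehat\phi-\phi)+\widehat R_2$ at level $\eta$, followed by $\E_{\QQ}$ over the training sample and the tower property, yields the failure probability $\eta^{-2}\E_\QQ[\widehat R_2^2 + \tfrac1N\norm{\widehat\phi-\phi}^2]$ exactly as stated. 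One caveat to flag in the writeup: positivity of the relevant $q_y(X)$ (assumption (ii) of Lemma \ref{lemma:if_psi}) and of their estimates is needed so that $\widetilde q_y(X)$ in the denominators is bounded away from zero and the mean-value expansion is valid; I would state this boundedness where the Taylor step is invoked.
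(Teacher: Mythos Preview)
Your proposal is correct and follows essentially the same route as the paper: the paper writes the identical decomposition $\widehat{\psi}_{os}^{-1}-\psi^{-1}-\QQ_N(\phi)=\widehat{R}_2+(\QQ_N-\QQ)(\widehat{\phi}-\phi)$, bounds the conditional second moment by $\widehat{R}_2^2+\tfrac1N\norm{\widehat{\phi}-\phi}^2$ (using sample splitting so the cross term vanishes), and applies Markov's inequality to the squared error. The only cosmetic difference is that the paper does not re-derive the quadratic form of $\widehat{R}_2$ here---it simply cites the von Mises remainder already computed in the proof of Lemma~\ref{lemma:if_psi}---so your plan to redo the Taylor/sign bookkeeping is duplicating work you can just reference; also note the paper phrases the final step as Markov on the second moment rather than Chebyshev about the (nonzero) conditional mean $\widehat{R}_2$, which is the cleaner way to land exactly on $\eta^{-2}\E[\widehat{R}_2^2+\tfrac1N\norm{\widehat\phi-\phi}^2]$.
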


The preceding theorem shows that, for any given sample size $N$, the proposed one-step estimator $\widehat{\psi}^{-1}_{os}$ is nearly optimal, meaning that it approximates the empirical average of the EIF, $\QQ_N(\phi)$, centered at the true inverse capture probability. This approximation holds for any predetermined error tolerance $\eta > 0$, with high probability depending on the second-order remainder $\widehat{R}_2$ and the $L_2$-error in estimating the influence function itself. Crucially, both components are driven by the error in estimating the nuisance $q$-probabilities, but the latter to a lesser extent by the normalization by $N$. Note that Theorem \ref{thm:optimal} also provides a conventional convergence in probability result, since for any fixed $\eta$ the probability goes to $1$ whenever $\widehat{R}_2$ goes to $0$ as the sample size $N$ increases. 

Moreover, the previous result implies that the proposed estimator enjoys a finite-sample ``doubly-robust'' property in the sense that $\widehat{R}_2^2$ can be small even if some of the nuisance $q$-probabilities are not well estimated. In particular, our one-step estimator is nearly optimal with probability depending on the \textit{square} of the errors in estimating the $q_y$'s. For example, $R_{2}$ will be small (i.e., of order $1/\sqrt{N}$) if each of those estimation errors is of order $N^{-1/4}$, which can be achieved for many flexible nonparametric estimators \citep{ehk_semiparametric}. Thus, when a correct parametric model for the $q$-probabilities is not available, the proposed one-step estimator allows for the estimation of the nuisance functions using a data-driven approach while retaining desirable statistical guarantees. This result is detailed in the next Corollary.

\begin{corollary} 
\label{cor:doubly_robust}
Assume that $\widehat{q}_y(X) \geq \epsilon$ and $\norm{q_y(X) - \widehat{q}_y(X)} \leq \xi_N$, for all capture profiles $y = (y_1, \dots, y_{J}, 0, \dots, 0) \neq 0$. Then, 
\begin{equation}
    \PP\left(\Big|(\widehat{\psi}^{-1}_{os} - \psi^{-1}) - \QQ_N(\phi)\Big| \leq \eta\right) \geq 1 - \left(\frac{C}{\eta^2}\right) \left(\xi_N^2 + \frac{1}{N}\right)
\end{equation}
for any sample size N and any $\eta > 0$, where $C$ is a constant independent of N.
   
\end{corollary}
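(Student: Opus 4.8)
The plan is to apply Theorem~\ref{thm:optimal} verbatim and then bound the two quantities $\widehat R_2^2$ and $N^{-1}\norm{\widehat\phi - \phi}^2$ that appear in its failure probability, using only the boundedness and $L_2$-rate assumptions of the Corollary. Throughout I would condition on the auxiliary cross-fitting fold used to fit the $\widehat q_y$, treating $\widehat q_y$, $\widehat\gamma$, and $\widehat\phi$ as fixed functions, and take expectations over that fold at the very end; under the stated assumptions the bounds below hold with probability one over that fold, so this last step is harmless.

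The first step is to get a uniform lower bound on the denominators. Each intermediate value $\widetilde q_y(X)$ lies between $q_y(X)$ and $\widehat q_y(X)$; since $\widehat q_y(X)\ge\epsilon$ by assumption and $q_y(X)$ is bounded away from zero by the positivity condition of Lemma~\ref{lemma:if_psi} (say $q_y(X)\ge\epsilon'$ with probability one), we obtain $\widetilde q_y(X)\ge c:=\min(\epsilon,\epsilon')>0$ and of course $\widetilde q_y(X)\le 1$. Hence the prefactor $\widetilde\gamma(X)^{-1}-1=\exp\big(\sum_{(y_1,\dots,y_J)\neq 0}(-1)^{|y|+1}\log\widetilde q_y(X)\big)$ is bounded above by a constant $B=B(J,c)$ uniformly in $X$, and the same bound applies to $\widehat\gamma^{-1}-1$ and to $\gamma^{-1}-1$.

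Given this, $\widehat R_2$ is controlled by Cauchy--Schwarz: in the expression \eqref{eq:r2_bound} every denominator is $\ge c$ and the prefactor is $\le B$, while $\E_\QQ[|q_{y_i}-\widehat q_{y_i}|\,|q_{y_j}-\widehat q_{y_j}|]\le\norm{q_{y_i}-\widehat q_{y_i}}\,\norm{q_{y_j}-\widehat q_{y_j}}\le\xi_N^2$, with the diagonal (squared) terms giving $\E_\QQ[(q_y-\widehat q_y)^2]=\norm{q_y-\widehat q_y}^2\le\xi_N^2$. Summing over the $2^J-1$ profiles $y=(y_1,\dots,y_J,0,\dots,0)$ and the pairs thereof yields $\widehat R_2\le C_1\xi_N^2$, hence $\widehat R_2^2\le C_1^2\xi_N^4\le C_1^2\xi_N^2$ once $\xi_N\le 1$ (the stronger $O(\xi_N^4)$ bound is the ``doubly-robust'' content). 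For the second term, expanding $\widehat\phi-\phi$ from \eqref{eq:if_psi}, every summand is a product of factors bounded by $B$, by $1$ (the indicators $\mathbbm{1}(Y=y)$), or by $1/c$ (the $1/q_y$ and $1/\widehat q_y$ pieces), and the constant discrepancy $\psi^{-1}-\widehat{\psi}^{-1}$ is itself bounded since $\psi^{-1}$ and its estimand are averages of quantities $1+(\gamma^{-1}-1)(\cdot)\le 1+B\cdot(\text{finite sum})$; so $|\widehat\phi-\phi|\le C_2$ pointwise and $N^{-1}\norm{\widehat\phi-\phi}^2\le C_2^2/N$. Combining, $\E_\QQ[\widehat R_2^2+N^{-1}\norm{\widehat\phi-\phi}^2]\le C(\xi_N^2+1/N)$ with $C=\max(C_1^2,C_2^2)$ independent of $N$, and substituting into Theorem~\ref{thm:optimal} gives the claim.

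The only genuine subtlety is the uniform lower bound on the denominators in the first step: the Corollary posits $\widehat q_y\ge\epsilon$ but only an $L_2$ (not sup-norm) rate on $q_y-\widehat q_y$, so to keep $1/\widetilde q_y$ and $1/\widetilde\gamma$ under control one must lean on the positivity assumption to bound $q_y$ away from zero as well. Everything else is a routine application of Cauchy--Schwarz and the triangle inequality over the finitely many capture profiles $y=(y_1,\dots,y_J,0,\dots,0)$, and the fact that a $1/N$ term is already permitted on the right-hand side means only crude (bounded, not vanishing) control of $\norm{\widehat\phi-\phi}$ is needed.
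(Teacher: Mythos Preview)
Your proposal is correct and is precisely the approach the paper intends: the Corollary is stated without a separate proof in the Appendix because it follows directly from plugging the boundedness and $L_2$-rate hypotheses into the failure-probability bound of Theorem~\ref{thm:optimal}. Your handling of the subtlety---invoking the positivity assumption on $q_y$ (so that $\widetilde q_y\ge\min(\epsilon,\epsilon')$) and observing that the crude pointwise bound $|\widehat\phi-\phi|\le C_2$ already yields the $1/N$ term---matches the spirit of the paper's discussion surrounding the Corollary, and your remark that the sharper $\xi_N^4$ bound on $\widehat R_2^2$ is the actual doubly-robust content (weakened to $\xi_N^2$ in the stated Corollary via $\xi_N\le 1$) is a correct reading.
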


In summary, we have that the proposed one-step estimator  $\widehat{\psi}^{-1}_{os}$ is close to a sample average of the EIF with high probability when $\widehat{R}_2^2$ is small (nearly optimal), which can be achieved even in large nonparametric models. Notably, this approximation to a sample average of EIF values further implies that our estimator is approximately Gaussian, and therefore we can derive approximately valid confidence intervals using standard tools. This key result is formalized in the following Theorem.

\begin{theorem} 
\label{thm:normal}
$\widehat{\psi}^{-1}_{os} - \psi^{-1}$ follows an approximately Gaussian distribution, with the difference in cumulative distributions 
$$\Bigg| \mathbb{P}\left(\frac{\widehat{\psi}^{-1}_{os} - \psi^{-1}}{\widehat{\sigma}/\sqrt{N}} \leq t \right) - \Phi(t)\Bigg|$$
uniformly bounded above by
\begin{equation}
    \frac{C}{\sqrt{N}}\E_{\mathbb{Q}}\left[\frac{\rho}{\widetilde{\sigma}^3}\right] + \frac{1}{\sqrt{2\pi}}\left(\sqrt{N}\E_{\mathbb{Q}}\left[\frac{|\widehat{R}_2|}{\widetilde{\sigma}}\right] + |t|\E_{\mathbb{Q}}\left[\Big|\frac{\widehat{\sigma}}{\widetilde{\sigma}}-1\Big|\right] \right),
\end{equation}
where $\widehat{\sigma}^2 = \widehat{var}(\widehat{\phi})$,     $\widetilde{\sigma}^2 = var(\widehat{\phi}|Z^n)$, $\rho = \E[|\widehat{\phi}  - \QQ(\widehat{\phi})|^3 | Z^n]$, and $C<1/2$ is the Berry-Esseen constant.
\end{theorem}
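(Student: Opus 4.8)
The plan is to condition on the auxiliary sample $Z^n$ used to fit the nuisance $q$-probabilities, so that the estimation reduces to a conditional Berry--Esseen statement plus two perturbations. Given $Z^n$, the uncentered estimated influence function $\widehat{\varphi}$ inside the one-step estimator \eqref{eq:os_point_psi} is a fixed function, hence $\widehat{\psi}^{-1}_{os}=\QQ_N(\widehat{\varphi})$ is an average of $N$ i.i.d.\ terms from the independent estimation fold; since $\widehat{\phi}$ and $\widehat{\varphi}$ differ by a constant given $Z^n$, the quantities $\widetilde{\sigma}^2=\Var(\widehat{\phi}\mid Z^n)$ and $\rho=\E[|\widehat{\phi}-\QQ(\widehat{\phi})|^3\mid Z^n]$ are also the variance and third absolute central moment of a single summand $\widehat{\varphi}(O)$. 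Writing $\varphi$ for the true uncentered EIF (so $\QQ(\varphi)=\psi^{-1}$), I would first record the exact identity
$$\widehat{\psi}^{-1}_{os}-\psi^{-1}=(\QQ_N-\QQ)(\widehat{\varphi})+\widehat{R}_2,\qquad \widehat{R}_2:=\QQ(\widehat{\varphi})-\psi^{-1}=\QQ(\widehat{\varphi}-\varphi),$$
and note that the Neyman-orthogonality computation already carried out in the proof of Theorem \ref{thm:optimal} identifies $\widehat{R}_2$ with the product-of-errors expression \eqref{eq:r2_bound}, a deterministic quantity given $Z^n$. Applying the classical Berry--Esseen theorem to the centered i.i.d.\ average $(\QQ_N-\QQ)(\widehat{\varphi})$ gives, uniformly in $s$ and conditionally on $Z^n$,
$$\left|\PP\!\left(\tfrac{\sqrt N\,(\QQ_N-\QQ)(\widehat{\varphi})}{\widetilde{\sigma}}\le s \ \middle|\ Z^n\right)-\Phi(s)\right|\le\frac{C\rho}{\sqrt N\,\widetilde{\sigma}^{3}},$$
with $C<1/2$ the (sharp) Berry--Esseen constant.

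Next I would transfer this to the self-normalized statistic $T_N=\sqrt N(\widehat{\psi}^{-1}_{os}-\psi^{-1})/\widehat{\sigma}$ at a fixed threshold $t$. Conditioning so that both $Z^n$ and $\widehat{\sigma}$ are fixed, the identity above rewrites the target event as $\{T_N\le t\}=\{\sqrt N(\QQ_N-\QQ)(\widehat{\varphi})/\widetilde{\sigma}\le t\,\widehat{\sigma}/\widetilde{\sigma}-\sqrt N\,\widehat{R}_2/\widetilde{\sigma}\}$; the conditional Berry--Esseen bound controls $\PP(T_N\le t\mid\cdot)$ by $\Phi(t\,\widehat{\sigma}/\widetilde{\sigma}-\sqrt N\,\widehat{R}_2/\widetilde{\sigma})$ up to $C\rho/(\sqrt N\widetilde{\sigma}^3)$, and the Lipschitz estimate $|\Phi(a)-\Phi(b)|\le|a-b|/\sqrt{2\pi}$ replaces this argument by $t$ at the cost of $\tfrac{1}{\sqrt{2\pi}}\big(|t|\,|\widehat{\sigma}/\widetilde{\sigma}-1|+\sqrt N\,|\widehat{R}_2|/\widetilde{\sigma}\big)$. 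Taking expectations over $Z^n$ (and over the estimation fold, on which $\widehat{\sigma}$ depends) then produces exactly the three stated terms $\tfrac{C}{\sqrt N}\E_{\QQ}[\rho/\widetilde{\sigma}^3]$, $\tfrac{\sqrt N}{\sqrt{2\pi}}\E_{\QQ}[|\widehat{R}_2|/\widetilde{\sigma}]$, and $\tfrac{|t|}{\sqrt{2\pi}}\E_{\QQ}[|\widehat{\sigma}/\widetilde{\sigma}-1|]$.

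The hard part will be justifying that $\widehat{\sigma}$ can be treated as fixed in the conditioning: since $\widehat{\sigma}$ is the empirical standard deviation of $\widehat{\phi}$ over the same fold that enters $\QQ_N$, it is random given $Z^n$ and correlated with $(\QQ_N-\QQ)(\widehat{\varphi})$, so the uniform-in-$s$ Berry--Esseen bound cannot be invoked at the data-dependent argument $t\,\widehat{\sigma}/\widetilde{\sigma}-\sqrt N\widehat{R}_2/\widetilde{\sigma}$ without further argument. I would resolve this either by estimating the variance on a separate independent split, so that after the extra conditioning the argument is measurable and the clean computation above goes through verbatim, or by a short anti-concentration step: the symmetric difference between $\{T_N\le t\}$ and the event obtained by substituting $\widetilde{\sigma}$ for $\widehat{\sigma}$ lies inside $\{|(\QQ_N-\QQ)(\widehat{\varphi})-c|\le|t|\,|\widehat{\sigma}-\widetilde{\sigma}|/\sqrt N\}$ for the fixed center $c=(t\widetilde{\sigma}-\sqrt N\,\widehat{R}_2)/\sqrt N$, and the probability of such a thin strip is bounded via the Gaussian approximation already established together with the bound $1/\sqrt{2\pi}$ on the normal density, contributing the $|t|\,\E_{\QQ}[|\widehat{\sigma}/\widetilde{\sigma}-1|]$ term up to lower-order Berry--Esseen error. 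Everything else — the orthogonality identification of $\widehat{R}_2$ with \eqref{eq:r2_bound}, the conditional i.i.d.\ structure of $\QQ_N(\widehat{\varphi})$, and the Lipschitz and Berry--Esseen inequalities — is routine given the results already in hand.
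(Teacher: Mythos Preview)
Your proposal is correct and follows essentially the same route as the paper: condition on the training fold $Z^n$ so that $\widehat{\phi}$ is a fixed function, apply the Berry--Esseen theorem to the conditional i.i.d.\ average $(\QQ_N-\QQ)\widehat{\phi}$, substitute $t'=t\,\widehat{\sigma}/\widetilde{\sigma}-\sqrt{N}\,\widehat{R}_2/\widetilde{\sigma}$, bound $|\Phi(t')-\Phi(t)|\le |t'-t|/\sqrt{2\pi}$ via the mean value theorem and the triangle inequality, and take expectations over $Z^n$. The paper's proof is exactly this sequence of steps, citing Theorem~3 of \cite{das2023doubly}.

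The ``hard part'' you isolate --- that $\widehat{\sigma}$ is computed on the same fold as $\QQ_N$ and is therefore not $Z^n$-measurable, so the uniform-in-$t'$ Berry--Esseen bound cannot be invoked at a data-dependent threshold without further argument --- is a genuine subtlety, but the paper does not address it: it simply plugs the random $t'$ into the conditional Berry--Esseen inequality without comment. Your proposed resolutions (estimating the variance on an independent split, or an anti-concentration step for the thin strip) therefore go beyond what the paper's own proof supplies; the first of these is the cleanest way to make the argument fully rigorous.
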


The previous Berry-Esseen-type result implies that usual ($1-\alpha$)-level confidence intervals of the form 
\begin{equation}
\label{eq:ci_psi}
    \widehat{CI}(\psi^{-1}) = \left[\widehat{\psi}^{-1}_{os} \pm z_{1-\alpha/2} \widehat{\sigma}/\sqrt{N}\right],
\end{equation}
can be used for inference on the inverse capture probability with nearly-valid finite sample coverage guarantees, where $\widehat{\sigma}^2 = \widehat{var}(\widehat{\phi})$ is the unbiased empirical variance of the estimated efficient influence function. Therefore, using the proposed identification strategy, the derived one-step estimator for the capture probability is not only nearly optimally efficient, but is also approximately Normal in finite samples. The first property means that it approximates a sample average of EIF values with high probability if the nuisance errors are small in a doubly-robust fashion, while the latter provides a general approach for inference on $\psi^{-1}$ and to construct confidence intervals with non-asymptotic guarantees in a Berry-Esseen sense. 

Finally, given that our identification assumption generalizes the two predominant approaches in the capture-recapture literature, the derived one-step estimator can be adapted to these settings while retaining its robust statistical properties. For instance, when $J=K$, the proposed one-step estimator can be used to estimate the inverse capture probability by considering all capture profiles $y=(y_1,\dots, y_K)$ and corresponding $q$-probabilities $\QQ(Y=y \mid X)$. In the $J=2$ setting, our doubly-robust estimator simplifies to 
\begin{equation}
\label{eq:l2_condind}
        \widehat{\frac{1}{\psi_{os}}} = \QQ_N \left( \left(\frac{1}{\widehat{\gamma}(X)} - 1\right)\left\{ \frac{Y_1(1-Y_2)}{\widehat{q}_{10}(X)} + \frac{(1-Y_1)Y_2}{\widehat{q}_{01}(X)} - \frac{Y_1Y_2}{\widehat{q}_{11}(X)} \right\} + 1\right),
\end{equation}
where $\widehat{q}_{y_1y_2}(X)$ are estimates of $q_{y_1y_2}(X) = \QQ(Y_1=y_1, Y_2=y_2,Y_3=0, \dots, Y_K=0 \mid X)$.
\section{Sensitivity Analysis: Bounded Highest-Order Interaction}

In this section, we relax the no highest-order interaction assumption and propose a sensitivity analysis that allows for arbitrary dependence between the $J$ capture episodes in our conditional log-linear model among the unobserved \eqref{eq:log_lineal}. Specifically, relaxing the identifying assumption leads to a \textit{partial identification} of $\psi^{-1}$, resulting in an interval that contains the true target parameter as a function of a parameter bounding the highest-order interaction coefficient. We then leverage the one-step estimator derived for the no highest-order interaction case to propose similar one-step estimators for the upper and lower bounds of interest. Our estimators are shown to be nearly optimal and approximately normal, even when flexibly estimating the $q$-probabilities, under an additional \textit{margin condition} \eqref{eq:margin_cond}. Finally, we derive approximately valid confidence intervals that can be combined to bound the true capture probability target parameter with high confidence. The proposed sensitivity analysis reveals how deviations from the no highest-order interaction and positivity assumptions impact the resulting population size estimates, allowing analysts to incorporate domain-specific knowledge into the identification and estimation steps more transparently and rigorously.

\subsection{Partial identification}

We relax the identifying assumption of no highest-order interaction in our saturated conditional log linear model among the unobserved \eqref{eq:log_lineal} and replace it by the assumption that the highest-order interaction coefficient, $\alpha_1(X)$, is uniformly bounded by a parameter $\delta \geq 0$. Mathematically, in a log-linear model for $J$ lists conditional on covariates and on being unobserved by the remaining $K-J$ lists, the relaxed identification approach assumes that
\begin{equation}
     \Big|\alpha_1(X)\Big| = \Bigg| \sum_{(y_1, \dots, y_J) \neq 0} (-1)^{J + |y|}\log(q_y(X)\gamma(X)) + (-1)^{J}\log(1-\gamma(X))  \Bigg| \leq \delta,
\end{equation}
where $|y| = \sum_{j=1}^{J} y_j$ is the number of positive lists in the capture profile $y = (y_1, \dots, y_J, 0, \dots, 0)$. Note that $\delta = 0$ recovers the no highest-order interaction assumption and the previously proposed estimator would still be valid. On the other hand, setting $\delta > 0$ is indeed a relaxation of our assumption allowing for a non-negligible highest-order interaction between the $J$ capture episodes. The Appendix provides an interpretation of the bounded highest-order interaction assumption in terms of conditional odds ratios.

By setting a bound on the highest-order interaction parameter, this approach yields \textit{bounds} on the inverse capture probability $\psi^{-1}$. Essentially, this means that while we can no longer identify the exact value of the capture probability, we achieve a \textit{partial identification} of $\psi$ using an interval depending on the bound $\delta$. 
From the relaxed identifying assumption $|\alpha_1(X)|\leq \delta$, we obtain a partial identification expression detailed in the following Proposition \ref{prop:partial_psi}.

\begin{proposition} 
\label{prop:partial_psi}
\textbf{(Partial-Identification Result)} Assume positivity and bounded highest-order interaction coefficient, $|\alpha_{1}(X)| \leq \delta$, in a log-linear model for $J$ lists, conditional on covariates and on not being captured by the other $K-J$ lists. Then, we can partially identify the inverse capture probability $\psi^{-1}$ using the observed data with the lower and upper bounds
\begin{equation}
\label{eq:lower_psi}
    \frac{1}{\psi_{\ell}} = \E_{\QQ}\left[\min \left\{ 1 + \exp\left(\sum_{(y_1, \dots, y_J) \neq 0} (-1)^{|y| + 1}\log(q_y(X)) - \delta \right), \frac{1}{\epsilon} \right\} \right],
\end{equation}
\begin{equation}
\label{eq:upper_psi}
    \frac{1}{\psi_{u}} = \E_{\QQ}\left[\min \left\{ 1 + \exp\left(\sum_{(y_1, \dots, y_J) \neq 0} (-1)^{|y| + 1}\log(q_y(X)) + \delta \right), \frac{1}{\epsilon} \right\} \right],
\end{equation}
such that $\psi^{-1} \in \left[\psi_{\ell}^{-1}, \psi_{u}^{-1}\right]$. As before, $y$ denotes the complete $K$-dimensional capture profile $y = (y_1, \dots, y_J, 0, \dots, 0)$.
\end{proposition}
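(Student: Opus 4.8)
The plan is to carry out the same inversion of the conditional log-linear model used in the proof of Proposition~\ref{prop:point_psi}, but \emph{without} setting the highest-order coefficient to zero, and then to push the constraint $|\alpha_1(X)| \le \delta$ through the (monotone) exponential and through $\E_{\QQ}$.

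First I would start from the stated expression for the highest-order interaction coefficient,
$$\alpha_1(X) = \sum_{(y_1,\dots,y_J)\neq 0}(-1)^{J+|y|}\log\big(q_y(X)\gamma(X)\big) + (-1)^{J}\log\big(1-\gamma(X)\big),$$
split $\log(q_y(X)\gamma(X)) = \log q_y(X) + \log\gamma(X)$, and use the elementary identity $\sum_{y\in\{0,1\}^J}(-1)^{|y|} = (1-1)^J = 0$, hence $\sum_{y\neq 0}(-1)^{J+|y|} = (-1)^{J+1}$, which collapses the $\log\gamma(X)$ terms. Collecting terms and multiplying the equation by $(-1)^{J+1}$ gives
$$\log\frac{\gamma(X)}{1-\gamma(X)} = (-1)^{J+1}\alpha_1(X) - \sum_{(y_1,\dots,y_J)\neq 0}(-1)^{|y|+1}\log q_y(X),$$
so that, writing $S(X)$ for the sum on the right, $\;\gamma(X)^{-1} = 1 + \exp\!\big(S(X) - (-1)^{J+1}\alpha_1(X)\big)$. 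At $\delta = 0$ this is exactly the identity behind Proposition~\ref{prop:point_psi}, so this step is essentially a bookkeeping extension of that argument.

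Next, since $|(-1)^{J+1}\alpha_1(X)| = |\alpha_1(X)| \le \delta$, monotonicity of $\exp$ yields the pointwise (a.s.) sandwich
$$1 + \exp(S(X)-\delta) \;\le\; \frac{1}{\gamma(X)} \;\le\; 1 + \exp(S(X)+\delta),$$
while the positivity assumption $\gamma(X)\ge\epsilon$ gives $\gamma(X)^{-1}\le\epsilon^{-1}$ a.s. Intersecting the two upper bounds, and observing for the lower side that $\min\{1+\exp(S(X)-\delta),\,\epsilon^{-1}\} \le 1+\exp(S(X)-\delta) \le \gamma(X)^{-1}$, I obtain almost surely
$$\min\Big\{1+\exp(S(X)-\delta),\tfrac1\epsilon\Big\} \;\le\; \frac{1}{\gamma(X)} \;\le\; \min\Big\{1+\exp(S(X)+\delta),\tfrac1\epsilon\Big\}.$$
Finally I would take $\E_{\QQ}[\cdot]$ across this chain and invoke the harmonic-mean identity $\psi^{-1} = \E_{\QQ}[\gamma(X)^{-1}]$ from \eqref{eq:harmonic_mean}, together with monotonicity of the expectation, to conclude $\psi^{-1}_\ell \le \psi^{-1} \le \psi^{-1}_u$, noting that $S(X) = \sum_{(y_1,\dots,y_J)\neq 0}(-1)^{|y|+1}\log q_y(X)$ and that the $q_y(X) = \QQ\big((Y_1,\dots,Y_K)=(y_1,\dots,y_J,0,\dots,0)\mid X\big)$ are directly identified from $\QQ$, so both bounds are functionals of the observed data (given $\delta,\epsilon$).

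The argument has no real obstacle; the only points needing care are the combinatorial identity $\sum_{y\neq 0}(-1)^{|y|} = -1$ that kills the $\log\gamma(X)$ coefficient, keeping careful track of the sign $(-1)^{J+1}$ multiplying $\alpha_1(X)$ (immaterial in the end, since the hypothesis controls $|\alpha_1(X)|$ symmetrically so the same $\delta$ appears on both sides), and checking that the $\min\{\cdot,1/\epsilon\}$ truncation preserves each inequality in the right direction --- truncating from above keeps a valid upper bound, and truncating a lower bound from above keeps it a lower bound. The genuinely new content relative to Proposition~\ref{prop:point_psi} is thus just performing the inversion without imposing $\alpha_1(X)=0$ and then exploiting monotonicity of $\exp$ and of $\E_{\QQ}$.
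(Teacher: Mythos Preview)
Your proposal is correct and follows essentially the same route as the paper's own proof: both invert the expression for $\alpha_1(X)$ to obtain $\gamma(X)^{-1}$ in terms of $S(X)$ and $\alpha_1(X)$, sandwich using $|\alpha_1(X)|\le\delta$, intersect with the positivity constraint $\gamma(X)^{-1}\le\epsilon^{-1}$ via the $\min$, and then take $\E_{\QQ}$. Your write-up is in fact more explicit than the paper's on two points the latter leaves implicit --- the combinatorial identity $\sum_{y\neq 0}(-1)^{|y|}=-1$ that collapses the $\log\gamma(X)$ coefficient, and the observation that the $\min\{\cdot,1/\epsilon\}$ truncation preserves both inequalities in the required direction.
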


Intuitively, the proposed bounds are derived from a partial identification expression for the conditional capture probability
\begin{equation*}
\label{eq:part_gamma}
    \frac{1}{\gamma(X)} \in \left[1 + \exp\left(\sum_{(y_1, \dots, y_J) \neq 0} (-1)^{|y| + 1}\log(q_y(X)) \pm \delta \right) \right],
\end{equation*}
with an additional constraint on the parameter $\delta$ to ensure that $\gamma(X) \in [\epsilon, 1]$, under the positivity assumption on the conditional capture probability. This parametrization of the identification assumptions allows to examine how variations in the $\delta$ and $\epsilon$ parameters can affect estimates of the total population size. This provides a straightforward and flexible approach to integrating domain-specific insights into the identification and estimation procedures.

\subsection{One-step estimator}

We now turn to the problem of estimating the proposed bounds that partially identify the capture probability under the bounded highest-order interaction condition. With a slight abuse of notation, we use $\psi_{\delta}^{-1}$ to refer to either the lower or upper bound derived in Proposition \ref{prop:partial_psi}. By their symmetry, the results presented in this section apply to both $\psi_{\ell}^{-1}$ and $\psi_{u}^{-1}$, so we present them succinctly as applying to the generic bound $\psi_{\delta}^{-1}$ defined by
$$\frac{1}{\psi_{\delta}} = \E_{\QQ}\left[\min \left\{\frac{1}{\gamma_{\delta}(X)}, \frac{1}{\epsilon} \right\} \right] = \E_{\QQ}\left[\min \left\{ 1 + \exp\left(\sum_{(y_1, \dots, y_J) \neq 0} (-1)^{|y| + 1}\log(q_y(X)) \pm \delta \right), \frac{1}{\epsilon} \right\} \right],$$
where $\gamma_{\delta}(X)^{-1} = 1 + \exp\left(\sum_{(y_1, \dots, y_J) \neq 0} (-1)^{|y| + 1}\log(q_y(X)) \pm \delta \right)$ denotes the conditional capture probability associated with the corresponding bound of interest and $y=(y_1, \dots, y_J, 0, \dots, 0)$.

Our previous strategy was to use the Efficient Influence Function for $\psi^{-1}$ to construct a one-step estimator for the capture probability under the assumption of no highest-order interaction. However, since the min function appearing in the identifying expressions in Proposition \ref{prop:partial_psi} is not smooth, we cannot directly follow the same approach and derive the EIF to propose estimators for the target bounds $\psi^{-1}_{\ell}$ and $\psi^{-1}_u$. To overcome this, note that we can rewrite our generic bound as
\begin{equation}
\label{eq:bound_indicator}
\begin{split}
    \frac{1}{\psi_{\delta}} & = \E_{\QQ}\left[\frac{1}{\gamma_\delta(X)}\mathbbm{1}\left(\frac{1}{\gamma_\delta(X)} \leq \frac{1}{\epsilon} \right)\right] + \frac{1}{\epsilon}\E_{\QQ}\left[\mathbbm{1}\left(\frac{1}{\gamma_\delta(X)} > \frac{1}{\epsilon} \right)\right] \\ 
    & = \E_{\QQ}\left[\left(\frac{1}{\gamma_\delta(X)} - \frac{1}{\epsilon}\right)\mathbbm{1}\left(\frac{1}{\gamma_\delta(X)} - \frac{1}{\epsilon} \leq 0 \right)\right] + \frac{1}{\epsilon},
\end{split}
\end{equation}
which suggest to use a flexible estimator for the first term involving the conditional capture probability, and a plug-in estimator for the indicator function to deal with its non-smoothness. 

Concretely, we propose the following \textbf{one-step estimator} for $\psi_{\delta}^{-1}$:
\begin{equation}
\label{eq:onestep_bound}
    \widehat{\frac{1}{\psi_{\delta}}} = \QQ_N\left(\left(\widehat{\varphi}_{\delta}(X, Y)-\frac{1}{\epsilon}\right)\mathbbm{1}\left(\frac{1}{\widehat{\gamma}_\delta(X)} - \frac{1}{\epsilon} \leq 0 \right) \right) + \frac{1}{\epsilon},
\end{equation}
where 
\begin{equation}
\label{eq:if_delta}
    \varphi_{\delta}(X,Y) = \left(\frac{1}{\gamma_{\delta}(X)} - 1\right)\sum_{(y_1, \dots, y_J) \neq 0} (-1)^{|y|+1} \frac{\mathbbm{1}(Y=y)}{q_y(X)} + 1,
\end{equation}
is the (uncentered) EIF for $\E[\gamma_{\delta}(X)^{-1}]$. 
Note that the EIF \eqref{eq:if_delta} parallels that in Lemma \ref{lemma:if_psi} for $\psi^{-1}$ in the no highest-order interaction assumption setting. This is a consequence of the analogous conditional capture probabilities $\gamma(X)$ and $\gamma_{\delta}(X)$ involved in the derivation of the corresponding EIFs. This similarity also endows the proposed one-step estimator $\widehat{\psi}_{\delta}^{-1}$ with statistical properties akin to those of the one-step estimator derived for the scenario with no highest-order interaction.

Specifically, in the following we show that the proposed estimator is nearly optimal and approximately Gaussian with finite sample guarantees. Intuitively, the proposed one-step estimator behaves like an \textit{infeasible} estimator that has full knowledge of the conditional capture probability $\gamma_{\delta}(X)$ and uses it to compute the indicator function in \eqref{eq:bound_indicator}. For such ``oracle'' behavior to be met, a sufficient condition is the so-called \textit{margin condition}, which has been used to deal with non-smooth estimands in causal inference  \citep{luedtke2016statistical, kennedy2020sharp, levis2023covariate} and classification settings \citep{audibert2007tsybakov}.  Formally, we assume that there exists $\beta > 0$ and a nonnegative constant $C$, such that for any $t \geq 0$
\begin{equation}
\label{eq:margin_cond}
    \PP\left(\Bigg|\frac{1}{\gamma_\delta(X)} - \frac{1}{\epsilon} \Bigg| \leq t \right) \leq Ct^{\beta}.
\end{equation}
The margin condition controls the probability that the conditional capture probability is near the bound $\frac{1}{\epsilon}$ making the indicator function non-differentiable, with $\beta$ controlling how quickly this probability vanishes and becomes negligible for the correct estimation of $\psi_\delta^{-1}$. This key result is formalized in the following Theorem.

\begin{theorem} 
\label{thm:optimal_bounds}
Under the margin condition for $\beta > 0$ and nonnegative $C$, the one-step estimator $\widehat{\psi}^{-1}_{\delta}$ satisfies 
\begin{equation}
    \PP\left(\Big|(\widehat{\psi}^{-1}_{\delta} - \psi^{-1}_{\delta}) - \QQ_N(\phi_\delta)\Big| \leq \eta\right)  \geq 1 - \frac{1}{\eta^2}\E\left[\widehat{R}_{2,\delta}^2 + \frac{1}{N}\norm{\widehat{\phi}_{\delta} - \phi_{\delta}}^2\right]
\end{equation}
for any sample size $N$ and $\eta > 0$, where \begin{equation}
\phi_{\delta}(X,Y) = \left(\varphi_{\delta}(X,Y)-\frac{1}{\epsilon}\right)\mathbbm{1}\left(\frac{1}{\gamma_\delta(X)} - \frac{1}{\epsilon} \leq 0 \right)  + \frac{1}{\epsilon} - \frac{1}{\psi_{\delta}}
\label{eq:if_bound}
\end{equation}
and the reminder term is bounded by 
\begin{equation}
\label{eq:r2_delta}
\widehat{R}_{2,\delta} \leq \sum_{y_i\neq y_j \neq 0} \norm{q_{y_i} - \widehat{q}_{y_i}} \norm{q_{y_j} - \widehat{q}_{y_j}} + \sum_{|y|\neq 0, ~even} \norm{q_{y} - \widehat{q}_{y}}^2 + C\norm{\widehat{\gamma} - \gamma}_{\infty}^{1+\beta},
\end{equation}
where the sums are taken with respect to capture profiles $y = (y_1,\dots, y_J,0, \dots, 0)$.
\end{theorem}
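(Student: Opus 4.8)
The plan is to mirror the decomposition strategy used for the smooth estimand $\psi^{-1}$ in Theorem \ref{thm:optimal}, but carefully handle the non-smooth indicator via the margin condition \eqref{eq:margin_cond}. First I would write the standard one-step bias decomposition: since $\widehat{\psi}^{-1}_{\delta}$ is the empirical average of $\widehat{\phi}_\delta + 1/\psi_\delta$ (with $\widehat{\phi}_\delta$ the estimated version of $\phi_\delta$ in \eqref{eq:if_bound} evaluated at $\widehat{q}$, $\widehat\gamma_\delta$), we get
\begin{equation*}
(\widehat{\psi}^{-1}_{\delta} - \psi^{-1}_{\delta}) - \QQ_N(\phi_\delta) = (\QQ_N - \QQ)(\widehat{\phi}_\delta - \phi_\delta) + \left\{ \QQ(\widehat{\phi}_\delta) + \frac{1}{\psi_\delta} - \frac{1}{\psi_\delta}\right\},
\end{equation*}
so the left side equals an empirical-process term plus a deterministic remainder $\widehat{R}_{2,\delta} := \QQ(\widehat{\phi}_\delta) - \QQ(\phi_\delta) + \text{(recentering)}$, i.e.\ the conditional bias of the uncentered estimated EIF. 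By sample splitting (the Remark after \eqref{eq:os_point_psi}), the empirical-process term has conditional mean zero and conditional variance $\tfrac1N \norm{\widehat{\phi}_\delta - \phi_\delta}^2$. Then Chebyshev's inequality applied conditionally, followed by taking expectations over the training sample, gives the stated probability bound $1 - \eta^{-2}\E[\widehat{R}_{2,\delta}^2 + \tfrac1N\norm{\widehat{\phi}_\delta - \phi_\delta}^2]$ — exactly as in Theorem \ref{thm:optimal}. This part is essentially identical to the smooth case.

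The substantive work is bounding $\widehat{R}_{2,\delta}$ by \eqref{eq:r2_delta}. I would split it into two pieces according to the product structure $\phi_\delta = (\varphi_\delta - 1/\epsilon)\mathbbm{1}(\gamma_\delta^{-1} \le \epsilon^{-1}) + 1/\epsilon$: a piece where the \emph{influence-function part} $\varphi_\delta$ differs from its estimate while the indicator is held at its true value, and a piece where the \emph{indicator} differs. For the first piece, note $\varphi_\delta$ has exactly the same algebraic form as the EIF $\phi$ in Lemma \ref{lemma:if_psi} (up to the recentering constant), so its contribution to the remainder is controlled by the \emph{same} second-order remainder computation as in \eqref{eq:r2_bound} — the von Mises / second-order Taylor expansion of $\gamma_\delta^{-1} = 1 + \exp(\sum (-1)^{|y|+1}\log q_y \pm \delta)$ in the $q_y$'s, yielding the cross terms $\sum_{y_i \ne y_j}\norm{q_{y_i}-\widehat q_{y_i}}\norm{q_{y_j}-\widehat q_{y_j}}$ and the even-$|y|$ square terms $\sum_{|y| \text{ even}}\norm{q_y - \widehat q_y}^2$ (the $\pm\delta$ shift is an additive constant inside the exponential and does not affect the structure of the expansion; Cauchy–Schwarz converts the $\QQ$-expectations of products into products of $L_2$ norms). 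For the second piece — the difference between $\mathbbm{1}(\widehat\gamma_\delta^{-1} \le \epsilon^{-1})$ and $\mathbbm{1}(\gamma_\delta^{-1} \le \epsilon^{-1})$ — the integrand is supported on the event where $\gamma_\delta^{-1}$ lies within $\norm{\widehat\gamma_\delta - \gamma_\delta}_\infty$ (roughly) of the threshold $1/\epsilon$, and on that event the factor $(\varphi_\delta - 1/\epsilon)$ is itself $O(\norm{\widehat\gamma - \gamma}_\infty)$ small; multiplying the bound $\PP(|\gamma_\delta^{-1} - \epsilon^{-1}| \le t) \le Ct^\beta$ from the margin condition by this extra factor of size $\norm{\widehat\gamma-\gamma}_\infty$ yields the $C\norm{\widehat\gamma - \gamma}_\infty^{1+\beta}$ term.

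The main obstacle I anticipate is making the indicator-mismatch argument clean: one must show that on the symmetric-difference event $\{\mathbbm{1}(\widehat\gamma_\delta^{-1}\le\epsilon^{-1}) \ne \mathbbm{1}(\gamma_\delta^{-1}\le\epsilon^{-1})\}$ one genuinely has both (i) $|\gamma_\delta^{-1} - \epsilon^{-1}| \le \norm{\widehat\gamma_\delta^{-1} - \gamma_\delta^{-1}}_\infty$, so the margin condition applies, and (ii) the accompanying factor $|\varphi_\delta - 1/\epsilon|$ (after taking conditional expectation given $X$, which replaces $\varphi_\delta$ by $\gamma_\delta^{-1}$) is bounded by $|\gamma_\delta^{-1} - \epsilon^{-1}| + \text{(bias)}$, hence also $O(\norm{\widehat\gamma - \gamma}_\infty)$ on that event — the two small factors together give the $1+\beta$ exponent rather than just $\beta$. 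Care is also needed to convert $\norm{\widehat\gamma_\delta - \gamma_\delta}_\infty$ into $\norm{\widehat\gamma - \gamma}_\infty$ and to absorb the Lipschitz constant of $x \mapsto 1 + e^x$ on the relevant bounded range into the constant $C$; positivity ($\widehat q_y, q_y$ bounded away from $0$, $\gamma_\delta \ge \epsilon$) is what keeps all these derivatives bounded. The empirical-process variance term $\tfrac1N\norm{\widehat\phi_\delta - \phi_\delta}^2$ requires only that $\widehat\phi_\delta$ is consistent in $L_2$, which again follows from positivity plus $L_2$-consistency of the $\widehat q_y$'s, so no margin condition is needed there.
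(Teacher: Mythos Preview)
Your proposal is correct and follows essentially the same route as the paper: decompose into empirical-process plus remainder, apply Markov's inequality for the probability bound, then split $\widehat{R}_{2,\delta}$ into a smooth piece handled by the second-order Taylor expansion of $\gamma_\delta^{-1}$ (as in Lemma~\ref{lemma:if_psi}) and an indicator-mismatch piece controlled by the margin condition to get the $1+\beta$ exponent. The only cosmetic difference is that the paper adds and subtracts $(\gamma_\delta^{-1}-1/\epsilon)\mathbbm{1}(\widehat{\gamma}_\delta^{-1}\le \epsilon^{-1})$ rather than holding the \emph{true} indicator fixed, so that the second piece is directly $(\gamma_\delta^{-1}-1/\epsilon)[\widehat{\mathbbm{1}}-\mathbbm{1}]$ with no need to pass through $\E_\QQ[\varphi_\delta\mid X]$; this makes the margin-condition step (via Lemma~1 of \cite{kennedy2020sharp}) one line shorter but is otherwise equivalent to what you describe.
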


Similarly to the setting assuming no highest-order interaction, this result shows that the proposed one-step estimator for the generic bound $\widehat{\psi}_\delta^{-1}$ is close to the empirical average of its corresponding EIF \eqref{eq:if_bound}. This holds for any given error tolerance $\eta > 0$ and sample size $N$ with high probability. Crucially, this probability now depends on the margin condition, in addition to the second-order remainder $\widehat{R}_2$ and the $L_2$-error in estimating the EIF. Therefore, the proposed estimator enjoys a similar doubly-robust-type property if $\beta \geq 1$, depending on the \textit{square} of the errors in estimating the $q_y$'s. Analogous to Corollary \ref{cor:doubly_robust}, a sufficient condition for $R_{2} = o_{\PP}(1/\sqrt{N})$ is that each of the estimation errors $\norm{\widehat{q}_y(X) - q_y(X)}$ be of order $N^{-1/4}$, which can be obtained for many off-the-shelf machine learning models \citep{ehk_semiparametric}. This allows to estimate the nuisance functions in a data-driven way without sacrificing desired statistical properties. Note that the previous theorem also gives standard conditions for $\widehat{\psi}_{\delta}^{-1}$ to be $\sqrt{N}$-consistent and asymptotically normal.


Moreover, Theorem \ref{thm:optimal_bounds} provides that the proposed one-step estimator $\widehat{\psi}^{-1}_{\delta}$ is not only close to a sample mean of EIF values, but is also approximately Gaussian in finite samples. We detail this result in the following Theorem and use such property to derive approximately valid confidence intervals for the inverse capture probability.

\begin{theorem} 
\label{thm:normal_bounds}
$\widehat{\psi}^{-1}_{\delta} - \psi^{-1}_{\delta}$ follows an approximately Gaussian distribution, with the difference in cumulative distributions uniformly bounded above by
$$\Bigg| \mathbb{P}\left(\frac{\widehat{\psi}^{-1}_{\delta} - \psi^{-1}_{\delta}}{\widehat{\sigma}_{\delta}/\sqrt{N}} \leq t \right) - \Phi(t)\Bigg| \leq
\frac{1}{\sqrt{2\pi}}\left(\sqrt{N}\E_{\mathbb{Q}}\left[\frac{|\widehat{R}_{2,\delta}|}{\widetilde{\sigma}_{\delta}}\right] + |t|\E_{\mathbb{Q}}\left[\Big|\frac{\widehat{\sigma}_{\delta}}{\widetilde{\sigma}_{\delta}}-1\Big|\right] \right) + \frac{C}{\sqrt{N}}\E_{\mathbb{Q}}\left[\frac{\rho_{\delta}}{\widetilde{\sigma}^3_{\delta}}\right],$$
where $\widehat{\sigma}^2_{\delta} = \widehat{var}(\widehat{\phi}_{\delta})$,     $\widetilde{\sigma}^2_{\delta} = var(\widehat{\phi}_{\delta}|Z^n)$, $\rho_{\delta} = \E[|\widehat{\phi}_{\delta}  - \QQ(\widehat{\phi}_{\delta})|^3 | Z^n]$, $C<1/2$ is the Berry-Esseen constant, and $\widehat{R}_{2,\delta}$ is bounded by \eqref{eq:r2_delta}.
\end{theorem}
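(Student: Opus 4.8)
The plan is to follow the template of Theorem~\ref{thm:normal}: set up a one-step decomposition, condition on the nuisance-training data, invoke a conditional Berry--Esseen bound, and then absorb the second-order remainder and the variance-estimation error using only the Lipschitz constant $1/\sqrt{2\pi}$ of $\Phi$. The enabling observation is structural: the uncentered integrand $\varphi_\delta$ in \eqref{eq:if_delta} and the centered EIF $\phi_\delta$ in \eqref{eq:if_bound} are obtained from their no-interaction counterparts $\varphi,\phi$ by replacing $\gamma$ with $\gamma_\delta$ and inserting the oracle indicator $\mathbbm{1}(1/\gamma_\delta(X)-1/\epsilon\le 0)$, so the one-step algebra and the remainder analysis already carried out for Theorem~\ref{thm:optimal_bounds} transfer with only notational changes. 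Writing $\widehat{\phi}_\delta$ for the estimated (uncentered) integrand defining $\widehat{\psi}^{-1}_\delta$ in \eqref{eq:onestep_bound}, that analysis gives $\widehat{\psi}^{-1}_\delta-\psi^{-1}_\delta=(\QQ_N-\QQ)(\widehat{\phi}_\delta)+\widehat{R}_{2,\delta}$ with $\widehat{R}_{2,\delta}=\QQ(\widehat{\phi}_\delta)-\psi^{-1}_\delta$ bounded by \eqref{eq:r2_delta}; the margin condition \eqref{eq:margin_cond} enters precisely here, converting the mismatch between the plug-in and oracle indicators into the $C\norm{\widehat{\gamma}-\gamma}_{\infty}^{1+\beta}$ contribution.

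Next I would condition on the auxiliary sample $Z^n$ used to fit the $q$-probabilities. Given $Z^n$, the function $\widehat{\phi}_\delta$ is fixed and bounded (the estimated $\widehat{q}_y$ being bounded away from $0$), so $(\QQ_N-\QQ)(\widehat{\phi}_\delta)$ is an average of $N$ i.i.d., conditionally mean-zero summands with conditional variance $\widetilde{\sigma}^2_\delta/N$. Dividing the decomposition above by $\widetilde{\sigma}_\delta/\sqrt{N}$,
$$\frac{\widehat{\psi}^{-1}_\delta-\psi^{-1}_\delta}{\widetilde{\sigma}_\delta/\sqrt{N}}=S_N+\frac{\sqrt{N}\,\widehat{R}_{2,\delta}}{\widetilde{\sigma}_\delta},\qquad S_N:=\frac{\sqrt{N}\,(\QQ_N-\QQ)(\widehat{\phi}_\delta)}{\widetilde{\sigma}_\delta},$$
where $S_N$ is a standardized sum of conditionally i.i.d.\ terms. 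The classical Berry--Esseen inequality then yields $\sup_{t}\big|\PP(S_N\le t\mid Z^n)-\Phi(t)\big|\le C\rho_\delta/(\sqrt{N}\,\widetilde{\sigma}^3_{\delta})$ with $C<1/2$ and $\rho_\delta=\E[|\widehat{\phi}_\delta-\QQ(\widehat{\phi}_\delta)|^3\mid Z^n]$, and averaging over $Z^n$ produces the term $\tfrac{C}{\sqrt{N}}\E_{\QQ}[\rho_\delta/\widetilde{\sigma}^3_{\delta}]$.

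It then remains to absorb two perturbations using only that $\Phi$ has derivative at most $1/\sqrt{2\pi}$. The additive shift by $\sqrt{N}\widehat{R}_{2,\delta}/\widetilde{\sigma}_\delta$ changes any CDF value by at most $|\sqrt{N}\widehat{R}_{2,\delta}/\widetilde{\sigma}_\delta|/\sqrt{2\pi}$, which after taking $\E_{\QQ}$ gives the $\tfrac{1}{\sqrt{2\pi}}\sqrt{N}\,\E_{\QQ}[|\widehat{R}_{2,\delta}|/\widetilde{\sigma}_\delta]$ term (this is where \eqref{eq:r2_delta}, hence the doubly-robust $N^{-1/4}$ rates and the margin exponent $\beta$, are actually used to keep this term small). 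For the second perturbation, observe that $\{(\widehat{\psi}^{-1}_\delta-\psi^{-1}_\delta)/(\widehat{\sigma}_\delta/\sqrt{N})\le t\}=\{(\widehat{\psi}^{-1}_\delta-\psi^{-1}_\delta)/(\widetilde{\sigma}_\delta/\sqrt{N})\le t\widehat{\sigma}_\delta/\widetilde{\sigma}_\delta\}$, so replacing $\widetilde{\sigma}_\delta$ by $\widehat{\sigma}_\delta$ in the standardization evaluates the already-controlled near-Gaussian CDF at $t\widehat{\sigma}_\delta/\widetilde{\sigma}_\delta$; since $|\Phi(t\widehat{\sigma}_\delta/\widetilde{\sigma}_\delta)-\Phi(t)|\le|t|\,|\widehat{\sigma}_\delta/\widetilde{\sigma}_\delta-1|/\sqrt{2\pi}$, this contributes the $\tfrac{|t|}{\sqrt{2\pi}}\E_{\QQ}[|\widehat{\sigma}_\delta/\widetilde{\sigma}_\delta-1|]$ term. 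Collecting the three pieces and taking $\sup_t$ gives the claimed bound; by the symmetry emphasized before \eqref{eq:onestep_bound} the same argument applies to both $\psi^{-1}_\ell$ and $\psi^{-1}_u$.

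The step I expect to be the main obstacle is this last substitution, because $\widehat{\sigma}^2_\delta=\widehat{var}(\widehat{\phi}_\delta)$ is computed on the estimation fold and hence random even given $Z^n$, so the rescaled threshold $t\widehat{\sigma}_\delta/\widetilde{\sigma}_\delta$ lives in the very probability space over which the conditional Berry--Esseen bound is stated. Making the inequality fully rigorous needs either a concentration argument placing $|\widehat{\sigma}^2_\delta/\widetilde{\sigma}^2_\delta-1|$ at the $N^{-1/2}$ scale, combined with the monotonicity of $t\mapsto\PP(\cdot\le t\mid Z^n)$ to sandwich the event, or reading the statement --- as in Theorem~\ref{thm:normal} --- as a bound on conditional quantities whose $Z^n$-expectations are reported. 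The remaining ingredients are routine: uniform control of $\rho_\delta$ and of $1/\widetilde{\sigma}_\delta$ follows from the positivity/boundedness of $\widehat{q}_y$ assumed in Corollary~\ref{cor:doubly_robust}, which keeps $\widehat{\phi}_\delta$ bounded with conditional variance bounded below; and the non-smooth indicator creates no new difficulty here, since it only multiplies an already bounded quantity and its plug-in error has been fully absorbed into $\widehat{R}_{2,\delta}$ via the margin condition in Theorem~\ref{thm:optimal_bounds}.
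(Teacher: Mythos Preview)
Your proposal is correct and follows essentially the same route as the paper: the paper's proof simply notes that $\widehat{\psi}^{-1}_{\delta} - \psi^{-1}_{\delta} - \widehat{R}_{2,\delta} = (\QQ_N - \QQ)\widehat{\phi}_{\delta}$ is a sample average of a fixed function conditional on the training sample and then repeats the argument of Theorem~\ref{thm:normal} verbatim (conditional Berry--Esseen, substitution $t' = \tfrac{\widehat{\sigma}_\delta}{\widetilde{\sigma}_\delta}t - \tfrac{\widehat{R}_{2,\delta}}{\widetilde{\sigma}_\delta/\sqrt{N}}$, mean value theorem on $\Phi$, iterated expectation). Your two-step handling of the shift and the rescaling is equivalent to the paper's single substitution, and the subtlety you flag about $\widehat{\sigma}_\delta$ being random given $Z^n$ is a genuine point that the paper's proof of Theorem~\ref{thm:normal} also leaves implicit.
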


The previous theorem implies that the following are approximately valid $(1-\alpha)$-level confidence intervals for the bounds on the capture probability $\psi^{-1} \in [\psi^{-1}_{\ell}, \psi^{-1}_u]$:
\begin{equation}
\label{eq:cin_psi_bounds}
    \widehat{CI}(\psi^{-1}_{\ell}) = \left[\widehat{\psi}^{-1}_{\ell} \pm z_{1-\alpha/2} \widehat{\sigma}_{\ell}/\sqrt{N}\right], \quad\quad
    \widehat{CI}(\psi^{-1}_{u}) = \left[\widehat{\psi}^{-1}_{u} \pm z_{1-\alpha/2} \widehat{\sigma}_{u}/\sqrt{N}\right],
\end{equation}
where $\widehat{\psi}_{\ell}^{-1}$ and $\widehat{\psi}_{u}^{-1}$ are the proposed one-step estimators \eqref{eq:onestep_bound}, and $\widehat{\sigma}^2_{\ell} = \widehat{var}(\widehat{\phi_{\ell}})$, $\widehat{\sigma}^2_{u} = \widehat{var}(\widehat{\phi_{u}})$, are the unbiased empirical variances of the corresponding EIFs estimated from expression \eqref{eq:if_bound}.  Therefore, an asymptotically valid ($1-\alpha$)-level confidence interval for $\psi^{-1}$ can be derived by combining the lower limit of the CI for $\widehat{\psi}^{-1}_{\ell}$ and the upper limit of the CI for $\widehat{\psi}^{-1}_{u}$:
\begin{equation}
    \psi^{-1} \in \left[\widehat{\psi}_{\ell}^{-1} - z_{1-\alpha/2}\widehat{\sigma}_{\ell}/\sqrt{N}, \quad \widehat{\psi}_{u}^{-1} + z_{1-\alpha/2}\widehat{\sigma}_{u}/\sqrt{N} \right],
\end{equation}
where $\{\widehat{\sigma}_{\ell}, \widehat{\sigma}_{u}\}$ are defined as before. Alternatively, for more precise inference, the CIs in \eqref{eq:cin_psi_bounds} can be combined into a confidence interval for the partially identified parameter $\psi^{-1}$ using the methods proposed in \cite{imbens2004confidence}.

In summary, the proposed sensitivity analysis accommodates a more general dependence between capture episodes in a log-linear model at the expense of only achieving partial identification of the capture probability. We argue that the proposed bounds are still relevant to population size estimation because they provide an informative interval containing the true capture probability, relying on a weaker identification assumption than the no highest-order interaction. Moreover, the $\delta$ and $\epsilon$ parameters allow analysts to incorporate domain knowledge about possible bounds on the highest-order interaction and positivity assumptions, and facilitates the examination of how variations in these bounds can affect estimates of the total population size. As before, our approach can also be used for sensitivity analysis in standard log linear models ($J=K$) and conditional independence settings ($J=2$). 


\section{Inference for total population size}

So far, we have focused on constructing flexible and robust estimators for the the inverse capture probability $\psi^{-1}$ with finite sample guarantees. In this section, we shift our attention to estimating the total \textit{population size} $n$, which is typically the ultimate goal in a capture-recapture setting. To achieve this, we use $\widehat{n} = N/\widehat{\psi}$ as natural estimator for $n$, given an estimator for the capture probability $\widehat{\psi}$, where $N \sim \text{Bin}(n, \psi)$ is the observed sample size of uniquely captured subjects. Therefore, in the following we leverage the one-step estimators proposed above and provide estimators and approximately valid confidence intervals for $n$.  

\subsection{Population size under no highest-order interaction}

Concretely, in the no highest-order interaction scenario, we propose to estimate the total population size $n$ via $\widehat{n}_{os} = N/\widehat{\psi}_{os}$, where $\widehat{\psi}_{os}^{-1}$ is the one-step estimator proposed in \eqref{eq:os_point_psi}. We then derive a confidence interval for the total population size $n$ using Theorem 4 in \cite{das2023doubly}, provided that $\widehat{\psi}_{os}^{-1}$ is approximated by a sample average. We adapt this result to our setting in the following Theorem \ref{thm:ci_n}.

\begin{theorem}
    Consider a log-linear model for $J$ lists, conditional on covariates and on not being captured by the remaining $K-J$ lists. Assuming no highest-order interaction term, the following is an approximately $(1-\alpha)$ confidence interval for the population size $n$ centered at $\widehat{n}_{os} = N/\widehat{\psi}_{os}$ 
\begin{equation}
\label{eq:ci_n}
    \widehat{CI}(n) = \left[ \widehat{n}_{os} \pm z_{\alpha/2}\sqrt{\widehat{n}_{os}\left(\widehat{\psi}_{os}\widehat{\sigma}^2 + \frac{1-\widehat{\psi}_{os}}{\widehat{\psi}_{os}} \right)}\right],
\end{equation}
where $\widehat{\sigma}^2 = \widehat{var}(\widehat{\phi})$ is the unbiased empirical variance of the estimated EIF \eqref{eq:if_psi} and $\widehat{\psi}_{os}^{-1}$ is the proposed one-step estimator \eqref{eq:os_point_psi}.    
\label{thm:ci_n}
\end{theorem}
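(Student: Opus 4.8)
\textbf{Proof proposal for Theorem \ref{thm:ci_n}.}

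The plan is to reduce the problem to the setting of Theorem 4 in \cite{das2023doubly} by showing that our one-step estimator $\widehat{\psi}_{os}^{-1}$ is, up to negligible error, an empirical average of iid terms, and then propagate the uncertainty from $\widehat{\psi}$ to $\widehat{n}_{os} = N/\widehat{\psi}_{os}$ using the binomial sampling $N \sim \text{Bin}(n,\psi)$. First I would invoke Theorem \ref{thm:optimal} (and Corollary \ref{cor:doubly_robust}) to write $\widehat{\psi}_{os}^{-1} = \psi^{-1} + \QQ_N(\phi) + o_{\PP}(1/\sqrt N)$, so that $\widehat{\psi}_{os}^{-1}$ behaves asymptotically like a sample mean with influence function $\phi$ and variance $\sigma^2 = \Var(\phi)$; this is exactly the structural hypothesis required to apply the \cite{das2023doubly} argument, whose form of the confidence interval \eqref{eq:ci_n} I would then adapt term by term.

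Next I would carry out the delta-method / variance-decomposition step. Conditioning on $n$ with $N \sim \text{Bin}(n,\psi)$, the estimator $\widehat{n}_{os} = N \cdot \widehat{\psi}_{os}^{-1}$ has two sources of randomness: the binomial fluctuation of $N$ around $n\psi$, and the estimation error in $\widehat{\psi}_{os}^{-1}$. Writing $\widehat{n}_{os} - n = N(\widehat{\psi}_{os}^{-1} - \psi^{-1}) + \psi^{-1}(N - n\psi)$ and using $N \approx n\psi$, the first term contributes variance of order $n^2\psi^2 \cdot \sigma^2/N \approx n\psi\sigma^2$ and the second contributes $\psi^{-2}\Var(N) = \psi^{-2} n\psi(1-\psi) = n(1-\psi)/\psi$; the two are asymptotically independent because the EIF average is built from the within-sample variation conditional on $N$. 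Adding these gives total variance $\approx n\big(\psi\sigma^2 + (1-\psi)/\psi\big)$, and replacing $n$, $\psi$, $\sigma^2$ by $\widehat{n}_{os}$, $\widehat{\psi}_{os}$, $\widehat{\sigma}^2$ yields the plug-in standard error inside \eqref{eq:ci_n}. Finally I would appeal to the asymptotic normality of $\widehat{\psi}_{os}^{-1}$ (Theorem \ref{thm:normal}) together with the CLT for the binomial part to conclude that the studentized quantity is approximately standard normal, giving the stated $(1-\alpha)$ coverage.

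The main obstacle I expect is making the independence (or asymptotic uncorrelatedness) between the binomial fluctuation of $N$ and the empirical-process term $\QQ_N(\phi)$ precise: $\QQ_N(\phi)$ is an average over the $N$ observed units, so $N$ is itself the (random) number of summands, and one must argue carefully — e.g. by conditioning on $N$, noting $\E_\QQ[\phi]=0$, and controlling the cross term — that the covariance between $N(\widehat{\psi}_{os}^{-1}-\psi^{-1})$ and $\psi^{-1}(N-n\psi)$ is of smaller order. This is precisely the technical content that Theorem 4 of \cite{das2023doubly} handles, so the bulk of the proof is a careful verification that our hypotheses (approximation by a sample average via Theorem \ref{thm:optimal}, asymptotic normality via Theorem \ref{thm:normal}) match theirs, after which the confidence interval \eqref{eq:ci_n} follows by direct substitution.
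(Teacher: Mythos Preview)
Your proposal is correct and follows essentially the same approach as the paper: both proofs verify that $\widehat{\psi}_{os}^{-1} - \psi^{-1}$ can be written as a sample average plus a second-order remainder (the paper does this directly via the von Mises decomposition $\widehat{\psi}_{os}^{-1} - \psi^{-1} = (\QQ_N - \QQ)\widehat{\phi} + \widehat{R}_2$, you via Theorem~\ref{thm:optimal}), and then invoke Theorem~4 of \cite{das2023doubly} to obtain the confidence interval. Your additional variance-decomposition heuristic is exactly what that cited theorem establishes, so the extra detail is helpful but not a different route.
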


\subsection{Bounds on the population size under partial identification}

The relaxed assumption of a bounded highest-order interaction coefficient in the saturated conditional log-linear model among the unobserved \eqref{eq:log_lineal} implies lower and upper \textit{bounds} on the total population size $n$ given by
\begin{equation}
    n_{\ell} = \frac{N}{\psi_{\ell}}, \quad\quad\quad n_{u} = \frac{N}{\psi_{u}},
\end{equation}
such that $n \in [n_{\ell}, n_u]$. Analogously as before, we estimate such bounds using the derived one-step estimators \eqref{eq:onestep_bound} for $\psi_{\ell}^{-1}$ and $\psi_{u}^{-1}$, respectively. Furthermore, in the following Theorem, we leverage that our one-step estimators can be approximated by sample averages and derive approximately valid confidence intervals for the estimated bounds on the population size $n$.

\begin{theorem}
\label{thm:cin_bounds}
Consider a log-linear model for $J$ lists, conditional on covariates and on not being captured by the remaining $K-J$ lists. The following are approximate $(1-\alpha)$ confidence intervals for the bounds on the population size $n \in [n_{\ell}, n_u]$, assuming a bounded highest-order interaction term: 
\begin{equation}
\label{eq:cin_upper}
    \widehat{CI}(n_{\ell}) = \left[ \frac{N}{\widehat{\psi}_{\ell}} \pm z_{\alpha/2}\sqrt{N \left(\widehat{\sigma}^2_{\ell} + \frac{1-\widehat{\psi}_{\ell}}{\widehat{\psi}_{\ell}^2} \right)}\right],
\end{equation}
\begin{equation}
\label{eq:cin_lower}
    \widehat{CI}(n_{u}) = \left[ \frac{N}{\widehat{\psi}_{u}} \pm z_{\alpha/2}\sqrt{N \left(\widehat{\sigma}^2_{u} + \frac{1-\widehat{\psi}_{u}}{\widehat{\psi}_{u}^2} \right)}\right],
\end{equation}
where $\widehat{\psi}_{\ell}$ and $\widehat{\psi}_{u}$ are one-step estimators \eqref{eq:onestep_bound} for the corresponding lower and upper bounds in Proposition \ref{prop:partial_psi}, and $\widehat{\sigma}^2_{\ell} = \widehat{var}(\widehat{\phi_{\ell}})$, $\widehat{\sigma}^2_{u} = \widehat{var}(\widehat{\phi_{u}})$, are the unbiased empirical variances of the estimated influence functions derived in \eqref{eq:if_bound}.     
\end{theorem}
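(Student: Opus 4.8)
The plan is to mirror the argument used for Theorem \ref{thm:ci_n} (which adapts Theorem 4 of \cite{das2023doubly}), applying it separately to each of the two bounds $n_\ell = N/\psi_\ell$ and $n_u = N/\psi_u$. Since the two cases are symmetric, I will carry out the argument for the generic bound $n_\delta = N/\psi_\delta$ and then specialize. The starting point is the observation that, by Theorem \ref{thm:optimal_bounds}, the one-step estimator $\widehat{\psi}_\delta^{-1}$ is well-approximated (up to a negligible remainder controlled by $\widehat{R}_{2,\delta}$ and the margin condition) by the empirical average $\QQ_N(\phi_\delta)$ of i.i.d.\ terms with mean $\psi_\delta^{-1}$ and variance $\sigma_\delta^2 = \Var(\phi_\delta)$, estimated consistently by $\widehat{\sigma}_\delta^2 = \widehat{var}(\widehat{\phi}_\delta)$. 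Combined with Theorem \ref{thm:normal_bounds}, this gives $\widehat{\psi}_\delta^{-1} \approx \mathcal{N}(\psi_\delta^{-1}, \sigma_\delta^2/N)$ in the Berry--Esseen sense.

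First I would decompose the estimation error for $n_\delta$. Writing $\widehat{n}_\delta = N/\widehat{\psi}_\delta = N \widehat{\psi}_\delta^{-1}$, the randomness comes from two sources: the sampling of $N \sim \mathrm{Bin}(n, \psi)$, and the estimator $\widehat{\psi}_\delta^{-1}$ built from the observed sample. Conditioning on the nuisance estimates (fit on a separate fold), I would condition on $N$ to handle $\widehat{\psi}_\delta^{-1}$, then account for the variability of $N$. The key algebraic identity, exactly as in \cite{das2023doubly}, is to expand $\widehat{n}_\delta - n_\delta$ into a term proportional to $(\widehat{\psi}_\delta^{-1} - \psi_\delta^{-1})$ and a term proportional to $(N - n\psi)$, show these are asymptotically uncorrelated (the first is a within-observed-sample average, the second a sum over the whole population with the observed ones removed), and add their variances. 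The within-sample term contributes a variance of order $N\widehat{\sigma}_\delta^2 \cdot$ (a factor involving $\psi_\delta$), while the binomial term contributes $n \psi(1-\psi)/\psi^2 = n(1-\psi)/\psi$; substituting the estimates $\widehat{\psi}_\delta$ for $\psi$ and $\widehat{n}_\delta \approx N/\widehat{\psi}_\delta$ and simplifying yields the form $N(\widehat{\sigma}_\delta^2 + (1-\widehat{\psi}_\delta)/\widehat{\psi}_\delta^2)$ inside the square root, matching the stated CIs \eqref{eq:cin_upper}--\eqref{eq:cin_lower}. Finally I would invoke the delta method / Slutsky-type argument together with the Berry--Esseen bound from Theorem \ref{thm:normal_bounds} to conclude approximate normality of $\widehat{n}_\delta$, giving the asserted $(1-\alpha)$ coverage.

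The main obstacle I anticipate is handling the non-smoothness introduced by the $\min$ function (equivalently the indicator in $\phi_\delta$) carefully enough that the variance expansion still goes through cleanly: unlike the smooth estimand $\psi^{-1}$ in Theorem \ref{thm:ci_n}, here one must lean on the margin condition \eqref{eq:margin_cond} to ensure that replacing $\widehat{\gamma}_\delta$ by $\gamma_\delta$ inside the indicator contributes only a higher-order term, so that $\QQ_N(\widehat{\phi}_\delta)$ genuinely behaves like an average of i.i.d.\ terms with the claimed variance. A secondary, more routine point is the bookkeeping of which terms are $o_\PP(1/\sqrt{N})$ when $\widehat{R}_{2,\delta}$ is bounded as in \eqref{eq:r2_delta} and $\beta \geq 1$, and confirming that the substitution of $\widehat{\psi}_\delta$ for $\psi$ in the variance formula is itself asymptotically negligible. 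Everything else is a direct transcription of the $\widehat{n}_{os}$ argument with $\psi_{os}$ replaced by the generic bound $\psi_\delta$.
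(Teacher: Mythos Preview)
Your proposal is correct and follows essentially the same route as the paper: reduce to the argument behind Theorem \ref{thm:ci_n} (i.e., Theorem 4 of \cite{das2023doubly}) by observing that the bound estimator admits the decomposition $\widehat{\psi}_{\delta}^{-1} - \psi_{\delta}^{-1} = -\QQ(\widehat{\phi}_\delta) + \widehat{R}_{2,\delta} + \QQ_{N}(\widehat{\phi}_\delta)$ established in Theorem \ref{thm:optimal_bounds}. The paper's proof is simply a one-line invocation of this structural form, whereas you spell out the underlying variance decomposition and flag the margin-condition subtlety; that extra care is already absorbed into Theorem \ref{thm:optimal_bounds}, so no additional work is needed here.
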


Finally, the previous theorem implies that 
\begin{equation}
\label{eq:cin_bounds}
    \widehat{CI}(n) = \left[ \frac{N}{\widehat{\psi}_{\ell}} - z_{\alpha/2}\sqrt{N \left(\widehat{\sigma}^2_{\ell} + \frac{1-\widehat{\psi}_{\ell}}{\widehat{\psi}_{\ell}^2} \right)}, \quad 
    \frac{N}{\widehat{\psi}_{u}} + z_{\alpha/2}\sqrt{N \left(\widehat{\sigma}^2_{u} + \frac{1-\widehat{\psi}_{u}}{\widehat{\psi}_{u}^2} \right)} \right]
\end{equation}
is an approximate $(1-\alpha)$ confidence interval for the total population size $n$. Similarly as before, these bound estimates can be combined for more precise inference following \cite{imbens2004confidence} methodology.

Note that $\delta = 0$ recovers the CI \eqref{eq:ci_n} for the corresponding point-identified population size $n$. In both cases, the coverage guarantee of the derived confidence intervals is mainly driven by the error term through $\E[|\widehat{R}_2^2|]$, under standard boundedness assumptions. This highlights the importance of efficient estimation of $\{\psi_{\ell}, \psi_{u}\}$ (or $\psi$ under no highest-order interaction) and the approximation of $\{\widehat{\psi}_{\ell}, \widehat{\psi}_{u}\}$ ($\widehat{\psi}$) by sample averages. Crucially, our proposed one-step estimators can attain an error $\widehat{R}_2^2$ of order $1/\sqrt{N}$ as shown above, even when the $q$-probabilities are estimated flexibly using machine learning models.

\section{Data Analysis: Synthetic and Real Data from the Peruvian Conflict}

In the following, we demonstrate the advantages of the proposed methods using both simulated and real capture-recapture data from the Peruvian internal armed conflict (1980-2000). We compare our one-step estimator with a simple, but typically suboptimal, plug-in estimator and show that our method dominates in several regimes. Moving forward, we use data from the Peruvian Truth Commission to estimate the number of casualties and highlight the role of the identification strategy adopted on the resulting estimated population size. Finally, we apply the proposed sensitivity analysis to estimate bounds on the population size for different values of the highest-order interaction term bound $\delta$, providing intervals that partially identify the number of casualties and illustrating the sensitivity of the estimates to the identification assumptions.

\subsection{Synthetic capture-recapture data with 3 lists and no highest-order interaction assumption}

To empirically demonstrate the properties of the proposed one-step estimator \eqref{eq:os_point_psi}, we begin by simulating a scenario with latent heterogeneous capture probabilities $q_y$ drawn from a uniform distribution 
with an empirical lower bound of approximately $0.05$. 
We then simulate the estimation step of the $q$-probabilities by setting
$$\widehat{q}_y = expit(logit(q_y) + \epsilon_y), ~~~~~ \epsilon_y \sim \mathcal{N}(bn^{-\alpha}, n^{-2\alpha}),$$
where the $\alpha$ term gives the rate of convergence of the estimated $\widehat{q}_y$ to the true latent $q_y$ (because the RMSE scales as $n^{-\alpha}$), and $b$ controls the bias of the estimates. We vary $\alpha$  and $b$ to analyze the behavior of the resulting $\widehat{\psi}^{-1}$ for different biases and rates of convergence of the nuisance functions, simulating different scenarios where an analyst is able to estimate the $q$-probabilities with varying precision. All the simulations are done $1,000$ times for a true $n=10,000$ and $\psi^{-1} = 1.41$ ($\psi = 0.7$), yielding an observed $N \approx 7,000$.

From our experiments we can observe that, when the $q$-probabilities are estimated at better rates closer to $\alpha = 0.5$ (the parametric rate), both the plug-in and the one-step estimators perform well with decreasing bias and MSE, and coverage at the expected nominal level (Figures \ref{fig:cr_small_noise} and \ref{fig:cr_large_noise}). The advantages of the proposed one-step estimator are larger in regimes of slow rate of convergence, especially for $\widehat{q}_y$ estimates with large noise $b=10$ (Fig. \ref{fig:cr_large_noise}). For instance, for $\alpha$ smaller than the non-parametric rate of $\alpha=0.25$, the plug-in estimator $\widehat{\psi}_{pi}^{-1}$ has worst bias, MSE, and coverage than our one-step estimator $\widehat{\psi}_{os}^{-1}$. Note that the plug-in estimator does not have a well-defined variance formula, thus we use the estimated variance from the one-step estimator to construct corresponding confidence intervals. Finally, it should also be noticed that in the large noise regime the one-step estimator suffers from large variance which dominates the behavior of its MSE. 

\begin{figure}[ht]

    \centering
    \begin{subfigure}{0.3\textwidth}
        \includegraphics[width=\linewidth]{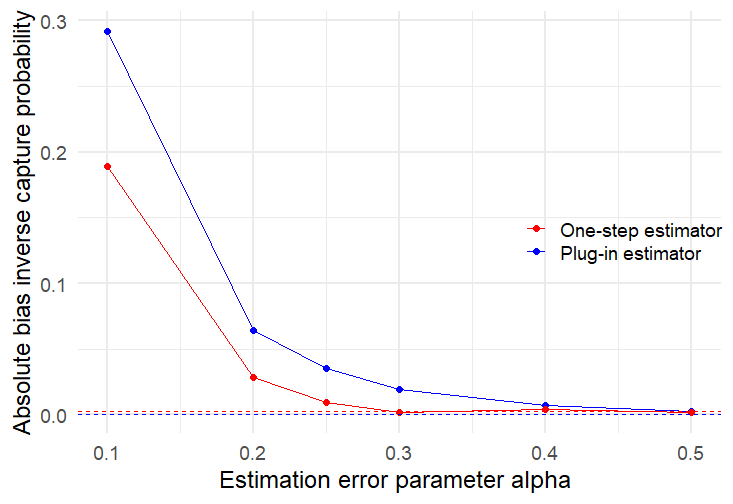}
        \caption{Bias $|\psi^{-1} - \widehat{\psi}^{-1}|$.}
    \end{subfigure}
    \hfill
    \begin{subfigure}{0.3\textwidth}
        \includegraphics[width=\linewidth]{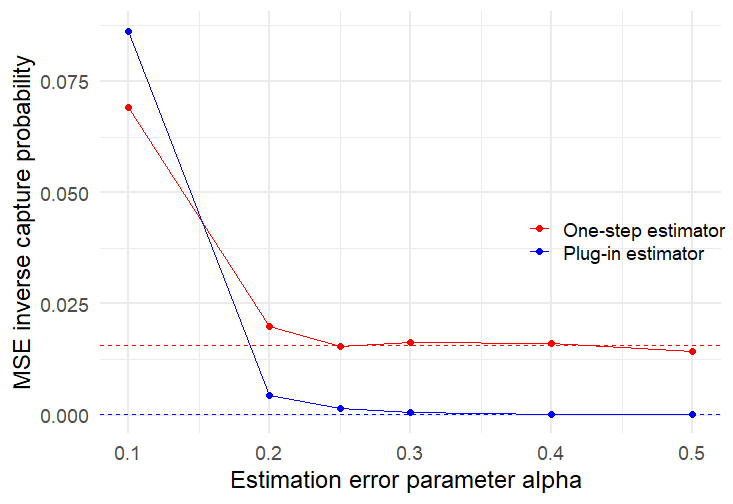}
        \caption{MSE $(\psi^{-1} - \widehat{\psi}^{-1})^2$.}
    \end{subfigure}
    \hfill
    \begin{subfigure}{0.3\textwidth}
        \includegraphics[width=\linewidth]{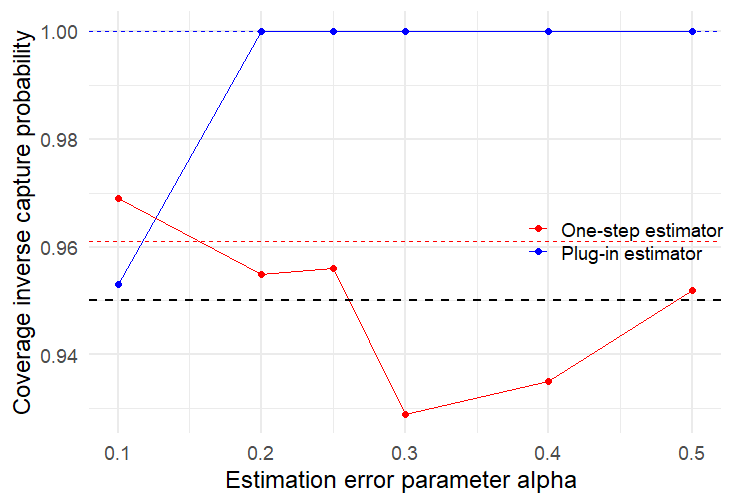}
        \caption{Empirical coverage of $\psi^{-1}$.}
    \end{subfigure}

    \caption{Results for $\psi^{-1}$ using synthetic data from three lists and small noise on $\widehat{q}_y$ ($b=1)$. True $n=10,000$ and $\psi^{-1} = 1.41$. Average over $1,000$ simulations.}
    \label{fig:cr_small_noise}
\end{figure}
\begin{figure}[ht]
    \centering
    \begin{subfigure}{0.3\textwidth}
        \includegraphics[width=\linewidth]{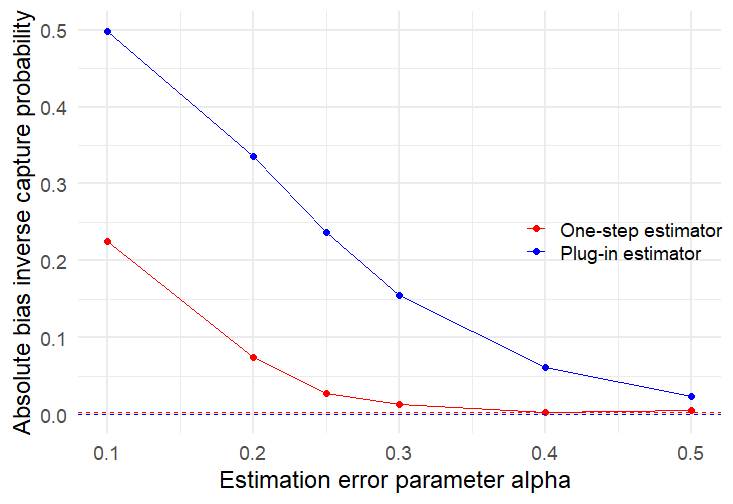}
        \caption{Bias $|\psi^{-1} - \widehat{\psi}^{-1}|$.}
    \end{subfigure}
    \hfill
    \begin{subfigure}{0.3\textwidth}
        \includegraphics[width=\linewidth]{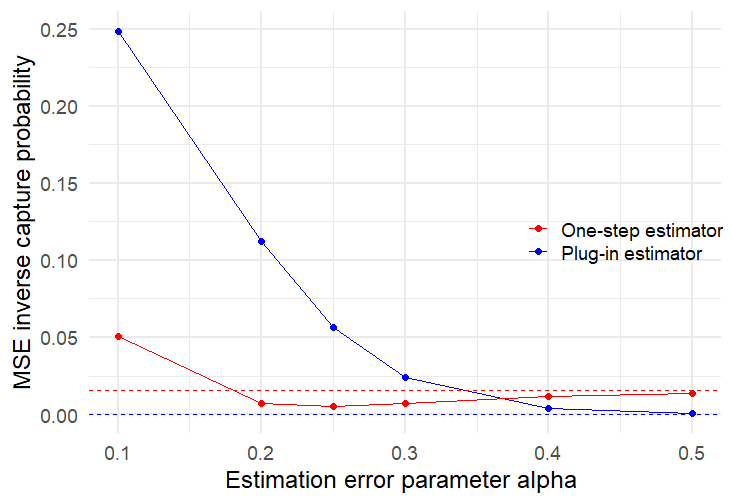}
        \caption{MSE $(\psi^{-1} - \widehat{\psi}^{-1})^2$.}
    \end{subfigure}
    \hfill
    \begin{subfigure}{0.3\textwidth}
        \includegraphics[width=\linewidth]{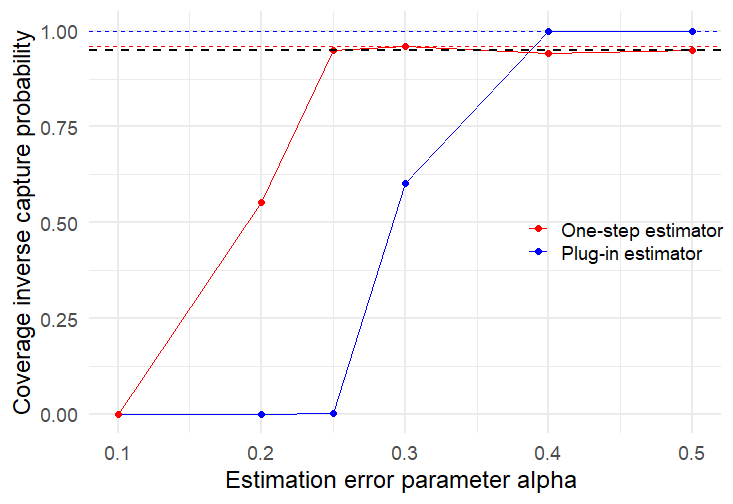}
        \caption{Empirical coverage of $\psi^{-1}$.}
    \end{subfigure}
    \caption{Results for $\psi^{-1}$ on synthetic data from three lists and large noise on $\widehat{q}_y$ ($b=10)$. True $n=10,000$ and $\psi^{-1} = 1.41$. Average over $1,000$ simulations.}
    \label{fig:cr_large_noise}
\end{figure}

Overall, the proposed one-step estimator for the (inverse) capture probability $\psi^{-1}$ exhibits a better performance than the plug-in estimator, especially when the nuisance $q$-probabilities cannot be well estimated. This is often the case in the absence of a reliable parametric model for $q_y$ and in complex capture-recapture settings where the capture probabilities vary with respect to continuous or high-dimensional variables. In these scenarios, the proposed one-step estimator proves useful allowing for flexible estimation of the $q$-probabilities and robust estimation of $\psi$.

\subsection{Application to the Peruvian internal armed conflict} 

Building on the theoretical and empirical advantages of our one-step estimator, we apply the proposed methodology to estimate the size of the victim population in the Peruvian internal armed conflict. This capture-recapture dataset was used by the Peruvian Truth and Reconciliation Commission to estimate the number of victims of killings and disappearances during the conflict between 1980 and 2000 \citep{patrick2003many}. The dataset consists of seven lists coming from three different sources of information documenting the conflict: (i) the Public Defender's Office (CP), (ii) the Truth and Reconciliation Commission (CVR), and (iii) five human rights and non-governmental organizations (ODH). The data includes information on $24,692$ uniquely identified victims of killings or disappearances, their capture profiles, and complementary demographic variables \citep{patrick2003many, rendon2019capturing}.

The original report by the Peruvian Truth and Reconciliation Commission estimated approximately 69,000 victims during the conflict using log-linear models and the no highest-order interaction assumption \citep{patrick2003many}. However, recent academic articles have estimated the victim population size in 48,000 using an alternative methodological approach \citep{rendon2019capturing}, between 58,000 and 66,000 using Bayesian latent class models \citep{manrique2019estimating}, and in 69,000 victims using a two-list doubly robust estimator under conditional independence \citep{das2023doubly}. Following \cite{das2023doubly}, we estimate the $q$-probabilities from the available demographic variables of sex, age and geographical location using the \textit{Random Forest} methodology \citep{breiman2001random}. 

First, we follow the approach in previous literature and consolidate the five lists from different NGOs into a single list (ODH in \citep{patrick2003many}) and use a standard conditional log-linear model on the resulting $K=3$ lists: CP, CVR, and ODH. We analyze how the estimated victim population size varies for different positivity lower bounds $\epsilon$ truncating the estimated $\widehat{q}_y$'s in Figure \ref{fig:peru_pos}. Without imposing a lower bound on the estimated $q$-probabilities (i.e., no positivity constraints), we obtain uninformative impossible estimates of $\widehat{n} = 30'715,584$ ($95\%$ CI: $[24,692, 79'973,134]$), larger than the Peruvian population in 1980. In this setting, the plug-in estimator returns $\widehat{n} = \infty$. Notably, by imposing a $1\%$ lower bound on all the $q$-probabilities, the proposed one-step estimator estimates $\widehat{n} = 121,733$ ($95\%$ CI: $[24,692, 456,336]$),\footnote{The provided CI is not symmetric around $\widehat{n}$ since we set the lower bound to be the observed $N$ whenever it is estimated to be a lower value.} while using $10\%$ as the lower bound yields $\widehat{n} = 52,215$ ($95\%$ CI: $[51,073, 53,356]$, not shown). Using a lower bound of $4\%$, the results are in line with previous estimates of the victims of the Peruvian internal conflict: $\widehat{n} = 69,970$ ($95\%$ CI: $[64,528, 75,413]$) \citep{patrick2003many, manrique2019estimating, das2023doubly}. This behavior highlights the crucial role of the positivity assumption of the nuisance functions $q_y$'s on the estimated population size, which is often overlooked in empirical applications of capture-recapture methods.

We demonstrate the advantages of our methodology by employing a conditional log-linear model on the DP and CVR lists, conditional on not being captured by the remaining five ODH lists. In this setting, the no highest-order interaction identifying assumption is equivalent to assuming conditional independence between these two lists and therefore we use the estimator \eqref{eq:l2_condind} from Corollary \ref{cor:l2_condind}. The identification assumption that the DP and CVR lists are conditionally independent has been previously used in the literature to estimate the number of casualties given their scope and capture strategies \citep{das2023doubly}. Moreover, we relax this assumption by conditioning on being unobserved by the other five lists from the ODH organizations. The results, show in Figure \ref{fig:peru_2list}, demonstrate that this approach is less sensitivity to the positivity bound giving informative results for $\epsilon$ as small as $1\%$: $\widehat{n} = 67,177$ ($95\%$ CI: $[50,287, 84,067]$). Notably, the point estimates quickly concentrates around approximately $68,000$ victims in line with the previous literature on the Peruvian internal armed conflict. In contrast, the plug-in estimator in this case returns $\widehat{n} = 85,350$ ($95\%$ CI: $[68,446, 102,253]$) for $\epsilon = 0.01$.

\begin{figure}[ht]
    \centering
    \begin{subfigure}{0.48\textwidth}
        \includegraphics[width=\textwidth]{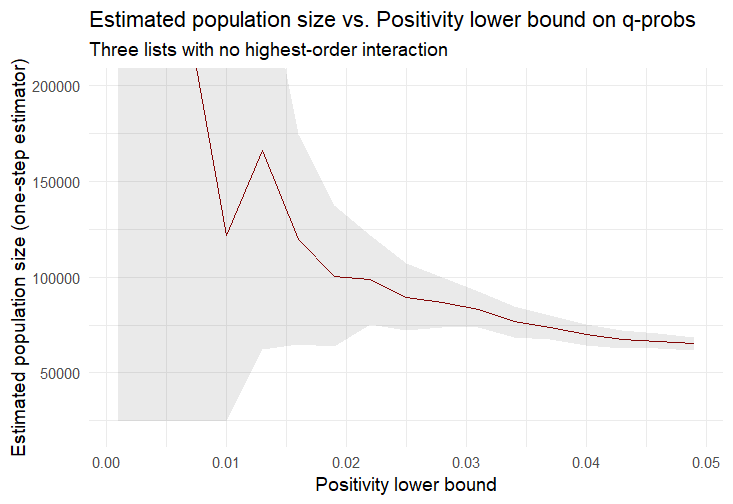}
        \caption{Three lists (CP, CVR, ODH) with no highest-order interaction conditional on covariates.}
        \label{fig:peru_pos}
    \end{subfigure}
    \begin{subfigure}{0.48\textwidth}
        \includegraphics[width=\textwidth]{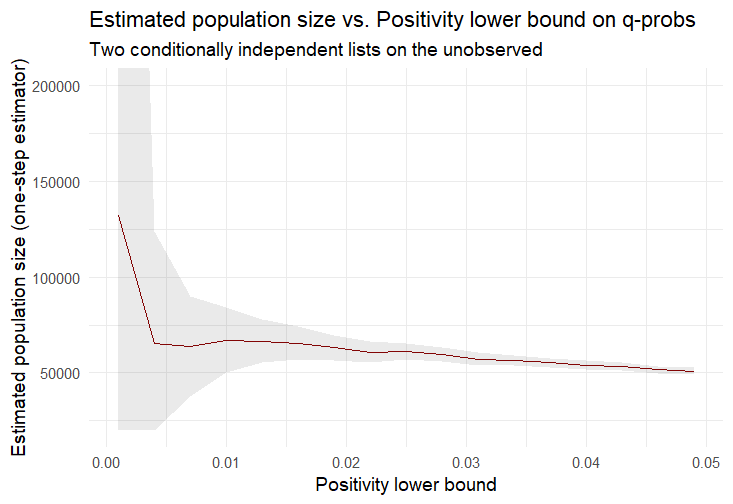}
        \caption{Two independent lists (CVR, DP) conditional on being unobserved by ODH and covariates.}
    \label{fig:peru_2list}
    \end{subfigure}
        
    \caption{Estimated number of casualties in the Peruvian internal conflict for different identification strategies vs. positivity lower bound on the estimated $q$-probabilities.}
\end{figure}

Next, we use the proposed sensitivity analysis framework to estimate bounds that partially identify the population size and analyze how sensitive the results are to deviations from the no highest-order interaction assumption. We present the estimated lower and upper bounds, and the resulting interval for $n$, when we vary the bound $\delta$ on the highest-order interaction term in Figure \ref{fig:sensitivity_analysis}. As expected, setting $\delta = 0$ recovers the no highest-order interaction case, while larger values of $\delta$ yield wider bounds, as they allow for more complex dependence structures between the different capture episodes making identification more difficult. Figure \ref{fig:peru_partial} depicts the sensitivity analysis for the conditional log-linear model with three lists. For instance, using $\delta = 0.2$ the resulting estimated bounds are $\widehat{n}_{\ell} = 63,991$ ($95\%$ CI: $[58,593, 69,389]$) and $\widehat{n}_{u} = 74,121$ ($95\%$ CI: $[68,324, 79,918]$). That is, using a conditional log-linear model with highest-order interaction bounded by $\delta = 0.2$, we estimate the population size to be between $58,593$ and $79,918$. On the other hand, Figure \ref{fig:peru_2lists_partial} shows that the conditional log-linear model among the unobserved using two lists (CVR and DP) is less sensitive to the no highest-order interaction assumption, making the estimated population size more robust to deviations from the conditional independence assumption. With $\delta = 0.2$, the number of casualties is estimated between $53,027$ and $72,622$, with the lower and upper bounds $\widehat{n}_{\ell} = 58,950$ ($95\%$ CI: $[53,027, 64,872]$) and $\widehat{n}_{u} = 66,348$ ($95\%$ CI: $[60,074, 72,622]$). Note that for this analysis we use fixed positivity bounds $\epsilon = 4\%$ and $\epsilon = 2\%$ for the conditional log-linear models with three and two lists, respectively, given our previous results in the no highest-order interaction setting.

\begin{figure}[ht]
    \centering
    \begin{subfigure}{0.48\textwidth}
        \includegraphics[width=\textwidth]{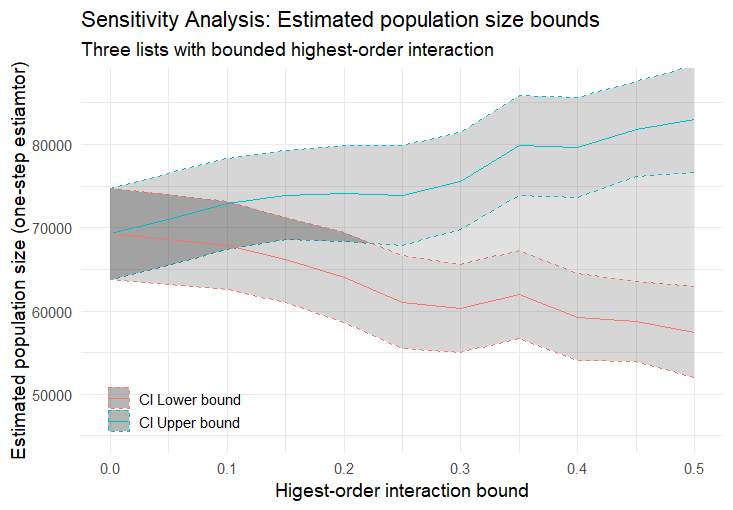}
        \caption{Log-linear model with three lists (CVR, DP and ODH) conditional on covariates ($\epsilon = 4\%$).}
        \label{fig:peru_partial}
    \end{subfigure}
    \begin{subfigure}{0.48\textwidth}
        \includegraphics[width=\textwidth]{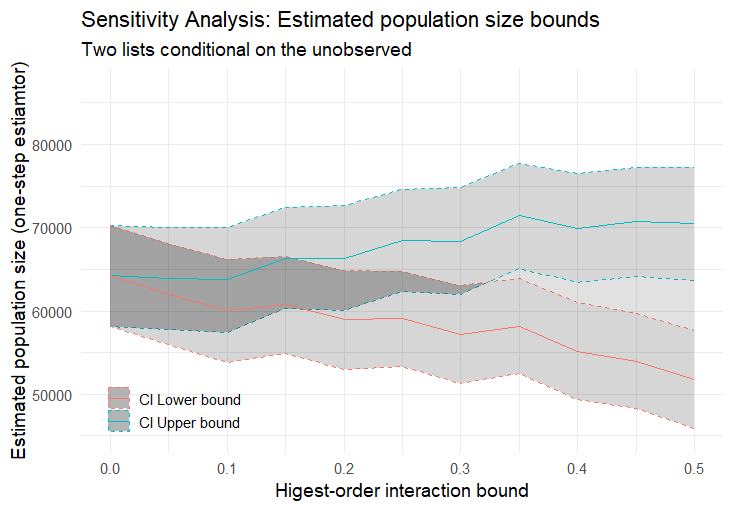}
        \caption{Two lists (CVR and DP) conditional on being unobserved by ODH and covariates ($\epsilon = 2\%$).}
        \label{fig:peru_2lists_partial}
    \end{subfigure}
    
    \caption{Estimated bounds for the number of casualties in the Peruvian internal conflict for different values of the bound $\delta$ on the highest-order interaction in a conditional log-linear model.}
    \label{fig:sensitivity_analysis}
\end{figure}

In summary, our empirical analysis of real data from the Peruvian internal armed conflict demonstrates the key role of the identification assumptions in causality estimation. In traditional log-linear models using smaller values for the positivity assumption makes the estimation more difficult (as theoretically predicted by the efficiency bound in Theorem \ref{thm:eff_bound}), resulting in very wide confidence intervals. However, imposing a large positivity bound may bias the estimation in complicated ways. On the other hand, using larger values for the highest-order interaction bound accounts for more complex dependence structures between the different lists, but makes identification more difficult and yields wider bounds. Crucially, our framework allows us to study how these bounds affect the estimated population size and to incorporate domain knowledge of plausible values. Using the proposed identification strategy on the two largest lists in the Peruvian internal armed conflict DP and CVR, conditional on not being captured by the remaining five ODH lists, we estimate approximately $68,000$ victims during the Peruvian internal armed conflict in line with previous literature.  
\section{Conclusions}

In this paper, we introduced a general and flexible framework for estimating the total size of a closed population from capture-recapture data, addressing key challenges such as heterogeneous capture probabilities, low overlap between lists, and complex dependencies across capture-recapture episodes. We derived a one-step estimator for the capture probability within a novel conditional log-linear model on the unobserved, accommodating heterogeneous and dependent capture probabilities. This approach extends standard assumptions of no highest-order interaction and conditional independence, commonly used in previous literature. Our analysis demonstrates that our estimators are nearly optimal, approximating a sample average of efficient influence function values with high probability when nuisance errors are small in a doubly-robust fashion. Additionally, our estimators are shown to be approximately normal in a Berry-Esseen sense which yield confidence intervals with non-asymptotic guarantees. These properties hold for any number of observed units, even when the $q$-probabilities are flexibly estimated using machine learning models in a data-driven way without compromising statistical guarantees. Furthermore, we extended our results to estimate and perform inference on the total population size $n$. 

Notably, our identification strategy is weaker than assuming a log-linear model for all $K$ lists with no highest-order interaction, requiring only that $J \leq K$ lists are not arbitrarily dependent within the unobserved population, conditional on covariates. 
This approach is both interpretable and actionable: to make the identification of the capture probability easier, an analyst can measure more relevant covariates or collect additional lists. Moreover, it only requires positivity for $2^{J}-1$ of the $q$-probabilities, which is substantially fewer than the $2^{K}-1$ possible capture profiles. Finally, our sensitivity analysis framework allows easy incorporation of domain knowledge about identification assumption parameters and analysis of their effects on the estimated population size. 

For instance, in our analysis of simulated and real data from the Peruvian Truth Commission, we highlighted the critical role of the identifying approach used in population size estimation, emphasizing the need for a more flexible and rigorous methodology. Standard log-linear models were found to be highly sensitive to the identification assumptions: smaller values for the positivity parameter made the estimation more difficult, resulting in very wide confidence intervals, while larger values for the highest-order interaction bound made identification more difficult and yielded wider bounds. In contrast, our proposed identification strategy, using a conditional log-linear model on the two largest lists ($J=2$) conditioned on not being captured by the remaining five ODH lists and covariates, provided a stable and transparent estimation of the victim population, consistent with previous literature.

Our methodology represents a significant contribution to the population size estimation literature, offering a flexible, transparent, and rigorous approach. The proposed methodology addresses recent critiques of capture-recapture models, offering a weaker and more interpretable identifying assumption and accommodating complex heterogeneous capture probabilities depending on high-dimensional or continuous covariates.  We believe that our framework has broad applicability in various domains where understanding the size of a hidden population is critical, such as biology, public health, human rights, and many others.

\bibliographystyle{apalike}
\bibliography{biblio.bib}

\begin{thebibliography}{}

\bibitem[Aleshin-Guendel et~al., 2024]{aleshin2024central}
Aleshin-Guendel, S., Sadinle, M., and Wakefield, J. (2024).
\newblock The central role of the identifying assumption in population size estimation.
\newblock {\em Biometrics}, 80(1):ujad028.

\bibitem[Audibert and Tsybakov, 2007]{audibert2007tsybakov}
Audibert, J.-Y. and Tsybakov, A.~B. (2007).
\newblock {Fast learning rates for plug-in classifiers}.
\newblock {\em The Annals of Statistics}, 35(2):608 -- 633.

\bibitem[Ball et~al., 2003]{patrick2003many}
Ball, P., Asher, J., Sulmont, D., and Manrique, D. (2003).
\newblock How many peruvians have died? an estimate of the total number of victims killed or disappeared in the armed internal conflict between 1980 and 2000.
\newblock {\em Report, American Association for the Advancement of Science (AAAS)}.

\bibitem[Bickel et~al., 1993]{bickel1993efficient}
Bickel, P.~J., Klaassen, C.~A., Bickel, P.~J., Ritov, Y., Klaassen, J., Wellner, J.~A., and Ritov, Y. (1993).
\newblock {\em Efficient and adaptive estimation for semiparametric models}, volume~4.
\newblock Springer.

\bibitem[Binette and Steorts, 2022]{binette2022reliability}
Binette, O. and Steorts, R.~C. (2022).
\newblock On the reliability of multiple systems estimation for the quantification of modern slavery.
\newblock {\em Journal of the Royal Statistical Society Series A: Statistics in Society}, 185(2):640--676.

\bibitem[Bishop et~al., 2007]{bishop2007discrete}
Bishop, Y.~M., Fienberg, S.~E., and Holland, P.~W. (2007).
\newblock {\em Discrete multivariate analysis: Theory and practice}.
\newblock Springer Science \& Business Media.

\bibitem[B{\"o}hning et~al., 2020]{bohning2020estimating}
B{\"o}hning, D., Rocchetti, I., Maruotti, A., and Holling, H. (2020).
\newblock Estimating the undetected infections in the covid-19 outbreak by harnessing capture--recapture methods.
\newblock {\em International Journal of Infectious Diseases}, 97:197--201.

\bibitem[Breiman, 2001]{breiman2001random}
Breiman, L. (2001).
\newblock Random forests.
\newblock {\em Machine learning}, 45:5--32.

\bibitem[Burgar et~al., 2018]{burgar2018estimating}
Burgar, J.~M., Stewart, F.~E., Volpe, J.~P., Fisher, J.~T., and Burton, A.~C. (2018).
\newblock Estimating density for species conservation: Comparing camera trap spatial count models to genetic spatial capture-recapture models.
\newblock {\em Global Ecology and Conservation}, 15:e00411.

\bibitem[Chernozhukov et~al., 2018]{chernozhukov2018double}
Chernozhukov, V., Chetverikov, D., Demirer, M., Duflo, E., Hansen, C., Newey, W., and Robins, J. (2018).
\newblock Double/debiased machine learning for treatment and structural parameters.
\newblock {\em The Econometrics Journal}, 21(1):C1--C68.

\bibitem[Das et~al., 2023]{das2023doubly}
Das, M., Kennedy, E.~H., and Jewell, N.~P. (2023).
\newblock Doubly robust capture-recapture methods for estimating population size.
\newblock {\em Journal of the American Statistical Association}, pages 1--13.

\bibitem[Fienberg, 1972]{fienberg1972multiple}
Fienberg, S.~E. (1972).
\newblock The multiple recapture census for closed populations and incomplete 2k contingency tables.
\newblock {\em Biometrika}, 59(3):591--603.

\bibitem[Hook and Regal, 1995]{hook1995capture}
Hook, E.~B. and Regal, R.~R. (1995).
\newblock Capture-recapture methods in epidemiology: methods and limitations.
\newblock {\em Epidemiologic reviews}, 17(2):243--264.

\bibitem[Imbens and Manski, 2004]{imbens2004confidence}
Imbens, G.~W. and Manski, C.~F. (2004).
\newblock Confidence intervals for partially identified parameters.
\newblock {\em Econometrica}, 72(6):1845--1857.

\bibitem[Johndrow et~al., 2019]{johndrow2019low}
Johndrow, J., Lum, K., and Manrique-Vallier, D. (2019).
\newblock Low-risk population size estimates in the presence of capture heterogeneity.
\newblock {\em Biometrika}, 106(1):197--210.

\bibitem[Kennedy, 2022]{ehk_semiparametric}
Kennedy, E.~H. (2022).
\newblock Semiparametric doubly robust targeted double machine learning: a review.
\newblock {\em arXiv preprint arXiv:2203.06469}.

\bibitem[Kennedy et~al., 2020]{kennedy2020sharp}
Kennedy, E.~H., Balakrishnan, S., and G’Sell, M. (2020).
\newblock Sharp instruments for classifying compliers and generalizing causal effects.

\bibitem[Kennedy et~al., 2023]{kennedy2023semiparametric}
Kennedy, E.~H., Balakrishnan, S., and Wasserman, L. (2023).
\newblock Semiparametric counterfactual density estimation.
\newblock {\em Biometrika}, 110(4):875--896.

\bibitem[Levis et~al., 2023]{levis2023covariate}
Levis, A.~W., Bonvini, M., Zeng, Z., Keele, L., and Kennedy, E.~H. (2023).
\newblock Covariate-assisted bounds on causal effects with instrumental variables.
\newblock {\em arXiv preprint arXiv:2301.12106}.

\bibitem[Lincoln, 1930]{lincoln1930calculating}
Lincoln, F.~C. (1930).
\newblock {\em Calculating waterfowl abundance on the basis of banding returns}.
\newblock Number 118. US Department of Agriculture.

\bibitem[Luedtke and Van Der~Laan, 2016]{luedtke2016statistical}
Luedtke, A.~R. and Van Der~Laan, M.~J. (2016).
\newblock Statistical inference for the mean outcome under a possibly non-unique optimal treatment strategy.
\newblock {\em Annals of statistics}, 44(2):713.

\bibitem[Manrique-Vallier, 2016]{manrique2016bayesian}
Manrique-Vallier, D. (2016).
\newblock Bayesian population size estimation using dirichlet process mixtures.
\newblock {\em Biometrics}, 72(4):1246--1254.

\bibitem[Manrique-Vallier et~al., 2021]{manrique2021capture}
Manrique-Vallier, D., Ball, P., and Sadinle, M. (2021).
\newblock Capture-recapture for casualty estimation and beyond: recent advances and research directions.
\newblock {\em Statistics in the Public Interest: In Memory of Stephen E. Fienberg}, pages 15--31.

\bibitem[Manrique-Vallier et~al., 2019]{manrique2019estimating}
Manrique-Vallier, D., Ball, P., and Sulmont, D. (2019).
\newblock Estimating the number of fatal victims of the peruvian internal armed conflict, 1980-2000: an application of modern multi-list capture-recapture techniques.
\newblock {\em arXiv preprint arXiv:1906.04763}.

\bibitem[Manrique-Vallier et~al., 2013]{manrique2013multiple}
Manrique-Vallier, D., Price, M.~E., and Gohdes, A. (2013).
\newblock Multiple systems estimation techniques for estimating casualties in armed conflicts.
\newblock {\em Counting civilian casualties: An introduction to recording and estimating nonmilitary deaths in conflict}, 165.

\bibitem[Otis et~al., 1978]{otis1978statistical}
Otis, D.~L., Burnham, K.~P., White, G.~C., and Anderson, D.~R. (1978).
\newblock Statistical inference from capture data on closed animal populations.
\newblock {\em Wildlife monographs}, (62):3--135.

\bibitem[Petersen, 1896]{petersen1896yearly}
Petersen, C. G.~J. (1896).
\newblock The yearly immigration of young plaice in the limfjord from the german sea.
\newblock {\em Rept. Danish Biol. Sta.}, 6:1--48.

\bibitem[Rendon, 2019]{rendon2019capturing}
Rendon, S. (2019).
\newblock Capturing correctly: A reanalysis of the indirect capture--recapture methods in the peruvian truth and reconciliation commission.
\newblock {\em Research \& Politics}, 6(1):2053168018820375.

\bibitem[Rivest and Baillargeon, 2014]{rivest2014capture}
Rivest, L.-P. and Baillargeon, S. (2014).
\newblock Capture-recapture methods for estimating the size of a population: dealing with variable capture probabilities.
\newblock {\em Statistics in Action: A Canadian Outlook}, 40:289--304.

\bibitem[Schwarz and Seber, 1999]{schwarz1999estimating}
Schwarz, C.~J. and Seber, G.~A. (1999).
\newblock Estimating animal abundance: review iii.
\newblock {\em Statistical Science}, 14(4):427--456.

\bibitem[Silverman, 2020]{silverman2020multiple}
Silverman, B.~W. (2020).
\newblock Multiple-systems analysis for the quantification of modern slavery: classical and bayesian approaches.
\newblock {\em Journal of the Royal Statistical Society Series A: Statistics in Society}, 183(3):691--736.

\bibitem[van~der Vaart, 2002]{semiparametric}
van~der Vaart, A.~W. (2002).
\newblock Semiparametric statistics.
\newblock {\em Lecture Notes in Math.}, (1781).

\bibitem[Wesson et~al., 2024]{wesson2024evaluating}
Wesson, P., Das, M., Chen, M., Hsu, L., McFarland, W., Kennedy, E., and Jewell, N.~P. (2024).
\newblock Evaluating a targeted minimum loss-based estimator for capture-recapture analysis: An application to hiv surveillance in san francisco, california.
\newblock {\em American Journal of Epidemiology}, 193(4):673--683.

\bibitem[You et~al., 2021]{you2021estimation}
You, Y., van~der Laan, M., Collender, P., Cheng, Q., Hubbard, A., Jewell, N.~P., Hu, Z.~T., Mejia, R., and Remais, J. (2021).
\newblock Estimation of population size based on capture recapture designs and evaluation of the estimation reliability.
\newblock {\em arXiv preprint arXiv:2105.05373}.

\end{thebibliography}

\newpage
\section*{Appendix}

\subsection*{A. No highest-order interaction in conditional log-linear models}

In the following, we build some intuition on the interpretation of the highest-order interaction term in (conditional) log-linear models and the corresponding identification assumption. We anchor its interpretation to deviations from a conditional independence assumption and show how it can account for more general capture-recapture settings. Without loss of generality, we state the different identification assumptions in terms of the first two lists $Y_1, Y_2$.

First, note that in the $K=2$ list case the highest-order interaction term can be expressed as
\begin{equation}
    \exp(\alpha_1(X)) = \frac{p_{00}(X)p_{11}(X)}{p_{10}(X)p_{01}(X)}, 
\end{equation}
where $p_{y}(X) = \PP(Y=y \mid X)$ is the true probability of having capture profile $y = (y_1, y_2)$, given covariates $X$. Assuming no highest-order interaction implies
\begin{align*}
    \exp(0) = 1 = \frac{p_{00}(X)p_{11}(X)}{p_{10}(X)p_{01}(X)} & \iff \frac{p_{11}(X)}{p_{01}(X)} = \frac{p_{10}(X)}{p_{00}(X)} \\ & \iff odds(Y_1 =1 \mid Y_2 = 1, X) = odds(Y_1=1 \mid Y_2 = 0, X).
\end{align*}
This condition states that the odds of appearing on list 1 conditional on appearing on list 2 and covariates $X$ is equal to the odds of appearing on list 1 conditional on \textit{not} appearing on list 2 and covariates $X$. Thus, in the $K=2$ list case, the no highest-order interaction assumption is equivalent to conditional independence between lists: $Y_1 \indep Y_2 \mid X$.


Similarly, for the $K=3$ lists case we have that
\begin{align*}
    \exp(\alpha_1(X)) & = \frac{p_{100}(X)p_{010}(X)p_{001}(X)p_{111}(X)}{p_{000}(X)p_{110}(X)p_{101}(X)p_{011}(X)}\\ 
    & = \frac{odds(Y_1 \mid Y_2 = 0, Y_3 = 0, X)odds(Y_1 \mid Y_2 = 1, Y_3 = 1, X)}{odds(Y_1 \mid Y_2 = 1, Y_3 = 0, X)odds(Y_1 \mid Y_2 = 0, Y_3 = 1, X)} \\
    & = \frac{OR_{Y_1, Y_2}(Y_3=1, X)}{OR_{Y_1, Y_2}(Y_3 = 0, X)}.
\end{align*}
To understand this expression, first note that under the conditional independence assumption $Y_1 \indep Y_2 \mid Y_3, X$, we have that $OR_{Y_1, Y_2}(Y_3=1, X) = OR_{Y_1, Y_2}(Y_3 = 0, X) = 1$. Moving away from this assumption, which can be restrictive in several capture-recapture settings, the assumption of no highest-order interaction implies that these odds ratios are equal 
$$OR_{Y_1, Y_2}(Y_3=1, X) = OR_{Y_1, Y_2}(Y_3 = 0, X),$$
but not necessarily equal to 1. 

In the $K=4$ case, the highest-order interaction term in a saturated log-linear model takes the following form, where $odds(Y_1 \mid (y_2y_3y_4), X) = odds(Y_1 \mid Y_2 = y_2, Y_3 = y_3, Y_4 = y_4, X)$,
\begin{align*}
    \exp(\alpha_1(X)) & = \frac{p_{0000}(X)p_{1100}(X)p_{1010}(X)p_{1001}(X)p_{0110}(X)p_{0101}(X)p_{0011}(X)p_{1111}(X)}{p_{1000}(X)p_{0100}(X)p_{0010}(X)p_{0001}(X)p_{1110}(X)p_{1101}(X)p_{1011}(X)p_{0111}(X)}\\ 
    & = \frac{odds(Y_1 | (100), X)odds(Y_1 | (010), X)odds(Y_1 | (001), X)odds(Y_1 | (111), X)}{odds(Y_1 | (000), X)odds(Y_1 | (110), X)odds(Y_1 | (101), X)odds(Y_1 | (011), X)} \\
    & = \frac{OR_{Y_1, Y_2}(Y_3=0, Y_4=0, X)OR_{Y_1, Y_2}(Y_3=1, Y_4=1, X)}{OR_{Y_1, Y_2}(Y_3=1, Y_4=0, X)OR_{Y_1, Y_2}(Y_3 = 0, Y_4=1, X)}.
\end{align*}
Importantly, a similar conditional independence assumption $Y_1 \indep Y_2 \mid Y_3, Y_4, X$ implies that $OR_{Y_1,Y_2}(Y_3=y_3,Y_4=y_4,X) = 1$ for all combinations of $(y_3, y_4)$. The no highest-order interaction assumption relaxes this condition, but still imposes certain symmetry in the ORs such that
$$\frac{OR_{Y_1, Y_2}(Y_3=1, Y_4=1, X)}{OR_{Y_1, Y_2}(Y_3 = 0, Y_4=1, X)} = \frac{OR_{Y_1, Y_2}(Y_3=1, Y_4=0, X)}{OR_{Y_1, Y_2}(Y_3=0, Y_4=0, X)}.$$

Finally, for the general case of $K$ lists, we have that the highest-order interaction satisfies the identity
\begin{equation}
\label{eq:kway-interaction}
    \exp((-1)^{K}\alpha_1(X)) = \frac{\prod_{\{y ~even\}
    }p_{y}(X)}{\prod_{\{y ~odd\}}p_y(X)} = \frac{\prod_{\{(y_3,...,y_K) ~even\}}OR_{Y_1, Y_2}((y_3,...,y_K), X)}{\prod_{\{(y_3,...,y_K) ~odd\}}OR_{Y_1,Y_2}((y_3,...,y_K), X)},
\end{equation}
where ``$y$ even/odd'' denotes, with a slight abuse of notation, that the number of non-zero entries of the capture profile vector $y$ ($L_1$ norm) is even/odd. Analogous to the previous cases, imposing the conditional independence assumption of $Y_1 \indep Y_2 \mid Y_3, ..., Y_K, X$, implies that all the odds ratios appearing in \eqref{eq:kway-interaction} are equal to $1$, which can be too restrictive in real-world applications for complex capture-recapture settings with stronger dependencies across capture episodes. On the other hand, the no highest-order interaction approach can be seen as a relaxation of the conditional independence condition assuming certain symmetry on the ORs such that
$$\prod_{\{(y_3,...,y_K) ~even\}}OR_{Y_1, Y_2}((y_3,...,y_k), X) = \prod_{\{(y_3,...,y_k) ~odd\}}OR_{Y_1,Y_2}((y_3,...,y_k), X).$$
This condition will be satisfied if all ORs are equal to one (i.e., if the conditional independence assumption holds) or if deviations from conditional independence ``cancel out'' across terms. This can be the case, for example, if all but two ORs are equal to $1$ and $OR_{Y_1, Y_2}((1,y_4,...,y_k), X) = OR_{Y_1, Y_2}((0,y_4,...,y_k), X) = 2$. The proposed sensitivity analysis using a bounded highest-order interaction assumption further relaxes the conditional independence assumption and this idealized symmetry in the odds ratios, and allows for these to vary up to the bound $\exp(\delta)$. This accounts for more arbitrary dependencies between capture episodes such that the deviation from this symmetry in ORs is bounded by $\exp(\delta)$.

In general, for any number of $K$ lists in a conditional log-linear model, assuming that two lists are conditionally independent given the other lists and covariates is a stricter identification assumption than the no highest-order  interaction. The latter identifying assumption accounts for more complex capture-recapture dependencies across lists, but is harder to interpret for a number of lists larger than $K>3$. The proposed identification strategy introduced in our work generalizes these two approaches noting that it is sufficient for identification that a \textit{subset} of the $K$ lists has no highest-order interaction, conditional on not being observed by the remaining lists and covariates. For instance, 
$$Y_1 \indep Y_2 \mid X, Y_3 = 0, \dots, Y_K=0,$$
is sufficient to identify the capture probability $\psi$.

\begin{remark}
The conditional independence assumption $Y_1 \indep Y_2 \mid X, Y_3, \dots, Y_K$ further imposes some restrictions on the coefficients of the saturated log linear model implied by the condition that $OR_{Y_1, Y_2}((y_3,...,y_k), X) = 1$ for all $(y_3,...,y_k)$. These extra modeling conditions arising from the conditional independence identifying assumption (which are testable from the available data) constrain the models that can be used for estimating $\QQ$. Therefore, it restricts analysts to semiparametric modeling of the capture probability to distributions that satisfy these additional conditions. On the other hand, the no highest-order interaction assumptions and conditioning only on not being observed by the remaining lists do not impose any restrictions on the observed distribution $\QQ$, allowing for flexible \textit{nonparametric} data-driven modeling of $\psi$. 
\end{remark}

\subsection*{B. Proofs}

\begin{proof}[Proof of Proposition \ref{prop:point_psi}]

The highest-order interaction term in a saturated conditional log-linear model on $J$ lists, conditional on being unobserved by the other $K-J$ lists and covariates, can be expressed as
$$\alpha_1(X) = \sum_{(y_1, \dots, y_J)} (-1)^{J + |y|}\log\left[\PP\{(Y_1, \dots, Y_J)=(y_1, \dots, y_J) \mid X, (Y_{J+1}, \dots, Y_{K}) = 0\}\right].$$
Note that we can write the log conditional probability for each capture profile $(y_1, \dots, y_J)$ as 
$$\log\left[\frac{\PP\{(Y_1, \dots, Y_J, Y_{J+1}, \dots, Y_K)=(y_1, \dots, y_J, 0, \dots, 0)\mid X\}}{\PP\{(Y_{J+1}, \dots, Y_{K}) = 0 \mid X\}}\right],$$
and the terms $\log[\PP\{(Y_{J+1}, \dots, Y_{K}) = 0 \mid X\}]$ will cancel out from the same number of even and odd capture profiles in the sum in $\alpha_{1}(X)$. Then we can use the relationship between the population and the observed distributions $\PP(Y=y \mid X) = q_y(X)\gamma(X)$, for all $y\neq 0$, to rewrite the highest-order interaction coefficient in terms of the $q$-probabilities and the conditional capture probability $\gamma(X)$:
\begin{align*}
    \alpha_1(X) & = \sum_{(y_1, \dots, y_J)} (-1)^{J + |y|}\log[\PP(Y=(y_1, \dots, y_J, 0, \dots, 0) \mid X)] \\
    & = \sum_{(y_1, \dots, y_J) \neq 0} (-1)^{J + |y|}\log[q_{(y_1, \dots, y_J, 0, \dots, 0)}(X)\gamma(X)] + (-1)^{J}\log(1-\gamma(X)). 
\end{align*}
Assuming no highest-order interaction $\alpha_{1}(X) = 0$, we solve for $\gamma(X)$ and use that $\psi^{-1}$ can be expressed as the empirical harmonic mean $\E_{\QQ}[\gamma(X)^{-1}]$ to obtain the identifying expression.

\end{proof}


\begin{proof}[Proof of Lemma \ref{lemma:if_psi}]

We show that the proposed expression 
$$\phi(X,Y) = \left(\frac{1}{\gamma(X)} - 1\right)\sum_{(y_1, \dots, y_J) \neq 0} (-1)^{|y|+1}\frac{\mathbbm{1}(Y = y)}{q_y(X)}  +1 - \frac{1}{\psi}$$
is the Efficient Influence Function for $\psi^{-1}$ by computing the remainder term of its von Mises expansion \citep{bickel1993efficient}. Let $\overbar{\psi} = \psi(\overbar{\QQ})$ be the plug-in estimator of the capture probability for a generic distribution $\overbar{\QQ}$. Then, the remainder of the von Mises expansion is given by 
\begin{align*}
    & R_2(\QQ, \overbar{\QQ}) = \frac{1}{\overbar{\psi}} - \frac{1}{\psi}
    + \E_{\QQ}\left(\phi(X,Y; \overbar{\QQ})\right) \\
    & = \frac{1}{\overbar{\psi}} - \frac{1}{\psi} + \E_{\QQ}\left[\left(\frac{1}{\overbar{\gamma}(X)} - 1\right)\sum_{(y_1, \dots, y_J) \neq 0} (-1)^{|y|+1} \frac{\mathbbm{1}(Y=y)}{\overbar{q}_y(X)} + 1 - \frac{1}{\overbar{\psi}}\right] \\
    & = \E_{\QQ}\left[\left(\frac{1}{\overbar{\gamma}(X)} - 1\right)\sum_{(y_1, \dots, y_J) \neq 0} (-1)^{|y|+1} \frac{q_y(X)}{\overbar{q}_y(X)} + 1 - \frac{1}{\gamma(X)}\right] \\
    & = \E_{\QQ}\left[\left(\frac{1}{\overbar{\gamma}(X)} - 1\right)\sum_{(y_1, \dots, y_J) \neq 0} (-1)^{|y|+1} \left(\frac{q_y(X)}{\overbar{q}_y(X)}-1\right) + \left(\frac{1}{\overbar{\gamma}(X)} - 1\right)\sum_{(y_1, \dots, y_J) \neq 0} (-1)^{|y|+1} + 1 - \frac{1}{\gamma(X)}\right] \\
    & = \E_{\QQ}\left[\left(\frac{1}{\overbar{\gamma}(X)} - 1\right)\sum_{(y_1, \dots, y_J) \neq 0} (-1)^{|y|+1} \left(\frac{q_y(X)}{\overbar{q}_y(X)}-1\right) + \left(\frac{1}{\overbar{\gamma}(X)} - 1\right) - \left(\frac{1}{\gamma(X)} - 1\right)\right].
\end{align*}

The result follows from computing the second-order Taylor expansion for
$$\frac{1}{\gamma(X)} - 1 = \exp\left(\sum_{(y_1,\dots, y_J)\neq 0} (-1)^{1+|y|}\log(q_y(X))\right)$$ 
around $\{\overbar{q}_y(X)\}_{y\neq 0}$, such that
\begin{align*}
    \frac{1}{\gamma} - 1 & = \left(\frac{1}{\overbar{\gamma}} - 1\right) + \left(\frac{1}{\overbar{\gamma}} - 1\right)\sum_{(y_1, \dots, y_J) \neq 0} (-1)^{|y|+1} \left(\frac{q_y}{\overbar{q}_y}-1\right)  \\
    & ~~~ + \left(\frac{1}{\widetilde{\gamma}}-1\right)\left[\sum_{y_i\neq y_j \neq 0} (-1)^{|y_i| + |y_j|}\frac{(q_{y_i} - \overbar{q}_{y_i})}{\widetilde{q}_{y_i}}\frac{(q_{y_j} - \overbar{q}_{y_j})}{\widetilde{q}_{y_j}} + \sum_{|y|\neq 0, ~even} \left(\frac{q_{y} - \overbar{q}_{y}}{\widetilde{q}_{y}}\right)^2\right]
\end{align*}
for some value $\widetilde{q}$ of the $q$-probabilities between $q_{y}$ and $\overbar{q}_y$, where we drop the $X$ argument from $q_y(X)$ and $\gamma(X)$ for ease of exposition. In the last expression, the sums are taken with respect to all possible capture profiles of the form $y = (y_1,\dots, y_J, 0, \dots, 0)$. 

Note that the first two terms in $R_2(\QQ, \overbar{\QQ})$ cancel out with the linear terms in the Taylor expansion for $\gamma^{-1} - 1$.  Then, the remainder is given by
$$R_2(\QQ, \overbar{\QQ}) = \E_{\QQ}\left[\left(\frac{1}{\widetilde{\gamma}}-1\right)\left\{\sum_{y_i\neq y_j \neq 0} (-1)^{|y_i| + |y_j|}\frac{(q_{y_i} - \overbar{q}_{y_i})}{\widetilde{q}_{y_i}}\frac{(q_{y_j} - \overbar{q}_{y_j})}{\widetilde{q}_{y_j}} + \sum_{|y|\neq 0, ~even} \left(\frac{q_{y} - \overbar{q}_{y}}{\widetilde{q}_{y}}\right)^2\right\}\right]$$

The last equality shows that the remainder term $R_2(\QQ, \overbar{\QQ})$ only depends on products of errors of the nuisance functions, which implies in particular that $\frac{d}{d\epsilon}R_{2}(\QQ, \QQ_{\epsilon})\big|_{\epsilon=0} = 0$ for any parametric submodel $\QQ_{\epsilon}$. Therefore, by Lemma 2 in \cite{kennedy2023semiparametric}, this implies that the proposed $\phi(X,Y)$ is the Influence Function for $\psi^{-1}$. Moreover, the influence function is unique and efficient provided that our nonparametric model does not restrict the tangent space \citep{semiparametric, bickel1993efficient}, concluding our proof.

\end{proof}

\begin{proof}[Proof of Theorem \ref{thm:eff_bound}]
The efficiency bound for $\psi^{-1} = \E[\gamma(X)^{-1}]$ can be computed as the variance of its efficient influence function $\phi(X,Y)$. Using the law of total variance, we have that 
$$\Var(\phi(X,Y)) = \Var\left(\left(\frac{1}{\gamma(X)} - 1\right)\sum_{(y_1, \dots, y_J) \neq 0} (-1)^{|y|+1}\frac{\mathbbm{1}(Y = y)}{q_y(X)} + 1 - \frac{1}{\psi} \right) = T_1 + T_2,$$
where
\begin{align*} 
    T_1 & = \Var\left(\E\left[\left(\frac{1}{\gamma(X)} - 1\right)\sum_{(y_1, \dots, y_J) \neq 0} (-1)^{|y|+1}\frac{\mathbbm{1}(Y = y)}{q_y(X)} ~\Bigg|~ X\right]\right), \\
    T_2 & = \E\left[\Var\left(\left(\frac{1}{\gamma(X)} - 1\right)\sum_{(y_1, \dots, y_J) \neq 0} (-1)^{|y|+1}\frac{\mathbbm{1}(Y = y)}{q_y(X)} ~\Bigg|~ X\right)\right].
\end{align*}

The first term can be simplified to 
\begin{align*}
    T_1 & = \Var\left(\left(\frac{1}{\gamma(X)} - 1\right)\sum_{(y_1, \dots, y_J) \neq 0} (-1)^{|y|+1}\frac{\E[\mathbbm{1}(Y = y) \mid X]}{q_y(X)} \right) \\
    & = \Var\left(\left(\frac{1}{\gamma(X)} - 1\right)\sum_{(y_1, \dots, y_J) \neq 0} (-1)^{|y|+1}\right) = \Var\left(\frac{1}{\gamma(X)}\right).
\end{align*}
We now analyze the second term $T_2$:
\normalsize
\begin{align*}
    & T_2 = \E\left[\left(\frac{1}{\gamma(X)} - 1\right)^2\Var\left(\sum_{(y_1, \dots, y_J) \neq 0} (-1)^{|y|+1}\frac{\mathbbm{1}(Y = y)}{q_y(X)} ~\Bigg|~ X\right)\right] \\
    & = \E\left[\left(\frac{1}{\gamma(X)} - 1\right)^2 \left\{\sum_{(y_1, \dots, y_J) \neq 0}\frac{\Var(\mathbbm{1}(Y = y)|X)}{q_y(X)^2} + 2\sum_{y_i \neq y_j \neq 0} \frac{(-1)^{|y_i|+ |y_j|}}{q_{y_i}(X)q_{y_j}(X)} Cov(\mathbbm{1}(Y=y_i), \mathbbm{1}(Y=y_j) | X)  \right\}\right] \\
    & = \E\left[\left(\frac{1}{\gamma(X)} - 1\right)^2 \left\{\sum_{(y_1, \dots, y_J) \neq 0}\frac{q_y(X) (1-q_y(X))}{q_y(X)^2} + 2\sum_{y_i \neq y_j \neq 0} \frac{(-1)^{|y_i|+ |y_j|}}{q_{y_i}(X)q_{y_j}(X)} \left(\E[\mathbbm{1}_{y_i}\mathbbm{1}_{y_j}|X] - q_{y_i}(X)q_{y_j}(X)\right)  \right\}\right] \\
    & = \E\left[\left(\frac{1}{\gamma(X)} - 1\right)^2 \left\{\sum_{(y_1, \dots, y_J) \neq 0}\frac{1-q_y(X)}{q_y(X)} - 2\sum_{y_i \neq y_j \neq 0} \frac{(-1)^{|y_i|+ |y_j|}}{q_{y_i}(X)q_{y_j}(X)}q_{y_i}(X)q_{y_j}(X)  \right\}\right] \\
    & = \E\left[\left(\frac{1}{\gamma(X)} - 1\right)^2 \left\{\sum_{(y_1, \dots, y_J) \neq 0}\frac{1}{q_y(X)} - \sum_{(y_1, \dots, y_J) \neq 0} 1 - 2\sum_{y_i \neq y_j \neq 0} (-1)^{|y_i|+ |y_j|}  \right\}\right] \\
    & = \E\left[\left(\frac{1}{\gamma(X)} - 1\right)^2 \left\{\sum_{(y_1, \dots, y_J) \neq 0}\frac{1}{q_y(X)} - 1 \right\}\right], 
\end{align*}
\normalsize
where the last equality follows from
\begin{align*}
  \sum_{(y_1, \dots, y_J) \neq 0} (-1)^{2|y|} + 2\left(\sum_{y_i\neq y_j\neq 0} (-1)^{|y_i| + |y_j|}\right) = \left(\sum_{(y_1, \dots, y_J) \neq 0} (-1)^{|y|}\right)^2 = 1.
\end{align*}
Putting the two terms together yields the result.
\end{proof}

\begin{proof}[Proof of Theorem \ref{thm:optimal}]

First note that by the definition of our one-step estimator and the von Mises expansion of $\psi$, we have that
\begin{align*}
    \frac{1}{\widehat{\psi}_{os}} - \frac{1}{\psi} - \QQ_{N}(\phi) & = \frac{1}{\widehat{\psi}_{pi}} + \QQ_{N}(\widehat{\phi}) - \frac{1}{\psi} - \QQ_{N}(\phi) \\
    & = -\QQ(\widehat{\phi}) + \widehat{R}_2 + \QQ_{N}(\widehat{\phi}) - \QQ_{N}(\phi) \\
    & = \widehat{R}_2 + (\QQ_{N} - \QQ)(\widehat{\phi} - \phi),
\end{align*}
where $\widehat{R}_2$ is the remainder term of the von Mises expansion \eqref{eq:r2_bound} derived in the proof of Lemma \ref{lemma:if_psi}. The last equality uses the property that the EIF has mean zero, $\QQ(\phi) = 0$ 

Following the proof of Theorem 2 in \cite{das2023doubly} we have that
$$\E\left[\Bigg|\frac{1}{\widehat{\psi}_{os}} - \frac{1}{\psi} - \QQ_{N}(\phi)\Bigg|^2\right] = \E\left[\Big|\widehat{R}_2 + (\QQ_{N} - \QQ)(\widehat{\phi} - \phi))\Big|^2\right] \leq \E[\widehat{R}_2^2] + \frac{1}{N}\E\left[\norm{\widehat{\phi} - \phi}^2\right].$$
The result follows by applying Markov's inequality for error tolerance $\eta > 0$
\begin{align*}
    \PP\left[\Bigg|\frac{1}{\widehat{\psi}_{os}} - \frac{1}{\psi} - \QQ_{N}(\phi)\Bigg| \leq \eta \right] & \geq 1- \frac{1}{\eta^2}\E\left[\Bigg|\frac{1}{\widehat{\psi}_{os}} - \frac{1}{\psi} - \QQ_{N}(\phi)\Bigg|^2\right] \\
    & \geq 1 - \frac{1}{\eta^2}\E\left[\widehat{R}_2^2 + \frac{1}{N}\norm{\widehat{\phi} - \phi}^2\right].
\end{align*}

\end{proof}


\begin{proof}[Proof of Theorem \ref{thm:normal}]

The proof follows the same logic as the proof of Theorem 3 in \cite{das2023doubly}. In detail, note that by the definition of the proposed one-step estimator we have that $\widehat{\psi}^{-1}_{os} - \widehat{\psi} -\widehat{R}_2 = (\QQ_N - \QQ)\widehat{\phi}$ is a sample average of a fixed function conditional on the training sample $Z^{N} = \{(X_i, Y_i\}_{i=1}^{N}$. Therefore, by Berry-Essen Theorem we have that
$$\Phi(t') - \frac{C\rho}{\widetilde{\sigma}^3\sqrt{N}} \leq \PP\left(\frac{\widehat{\psi}^{-1}_{os} - \widehat{\psi} -\widehat{R}_2}{\widetilde{\sigma}/\sqrt{N}} \leq t' ~\Bigg|~ Z^N\right) \leq \Phi(t') + \frac{C\rho}{\widetilde{\sigma}^3\sqrt{N}},$$
for any $t'$ and $N$, where $\widetilde{\sigma}^2 = \Var(\widehat{\phi}\mid Z^N)$ and $\rho = \E[|\widehat{\phi} - \QQ\widehat{\phi}|^3 \mid Z^N]$. 

Taking $t' = \frac{\widehat{\sigma}}{\widetilde{\sigma}}t - \frac{\widehat{R}_2}{\widetilde{\sigma}/\sqrt{N}}$, where $\widehat{\sigma}^2 = \widehat{\Var}(\widehat{\phi})$ is the estimated unconditional variance of the estimated EIF, we can rewrite the previous bounds in terms of the estimation error $\widehat{\psi}^{-1}_{os} - \widehat{\psi}$:
$$\Phi\left(\frac{\widehat{\sigma}}{\widetilde{\sigma}}t - \frac{\widehat{R}_2}{\widetilde{\sigma}/\sqrt{N}}\right) - \frac{C\rho}{\widetilde{\sigma}^3\sqrt{N}} \leq \PP\left(\frac{\widehat{\psi}^{-1}_{os} - \widehat{\psi}}{\widehat{\sigma}/\sqrt{N}} \leq t ~\Bigg|~ Z^N\right) \leq \Phi\left(\frac{\widehat{\sigma}}{\widetilde{\sigma}}t - \frac{\widehat{R}_2}{\widetilde{\sigma}/\sqrt{N}}\right) + \frac{C\rho}{\widetilde{\sigma}^3\sqrt{N}}.$$
Finally, note that using the mean value theorem we can bound the difference in cumulative distributions by
\begin{align*}
    \Bigg|\Phi\left(\frac{\widehat{\sigma}}{\widetilde{\sigma}}t - \frac{\widehat{R}_2}{\widetilde{\sigma}/\sqrt{N}}\right) - \Phi(t)\Bigg| = \Bigg|\Phi'(t_N)\left[\frac{\widehat{\sigma}}{\widetilde{\sigma}}t - \frac{\widehat{R}_2}{\widetilde{\sigma}/\sqrt{N}} - t\right]\Bigg| \leq \frac{1}{\sqrt{2\pi}}\left(\Big|\frac{\widehat{\sigma}}{\widetilde{\sigma}}- 1\Big||t| +  \frac{|\widehat{R}_2|}{\widetilde{\sigma}/\sqrt{N}}\right),
\end{align*}
for some $t_N \in (t, \frac{\widehat{\sigma}}{\widetilde{\sigma}}t - \frac{\widehat{R}_2}{\widetilde{\sigma}/\sqrt{N}})$, where the last inequality uses that $\text{sup}_t \Phi'(t) \leq 1/\sqrt{2\pi}$ and the triangle inequality. The result follows by iterated expectations:
$$\Bigg| \mathbb{P}\left(\frac{\widehat{\psi}^{-1}_{os} - \psi^{-1}}{\widehat{\sigma}/\sqrt{N}} \leq t \right) - \Phi(t)\Bigg| \leq \frac{1}{\sqrt{2\pi}}\left(|t|\E_{\mathbb{Q}}\left[\Big|\frac{\widehat{\sigma}}{\widetilde{\sigma}}-1\Big|\right]  + \sqrt{N}\E_{\mathbb{Q}}\left[\frac{|\widehat{R}_2|}{\widetilde{\sigma}}\right]\right) + \frac{C}{\sqrt{N}}\E_{\mathbb{Q}}\left[\frac{\rho}{\widetilde{\sigma}^3}\right].$$

\end{proof}

\begin{proof}[Proof of Proposition \ref{prop:partial_psi}]
    
Following the proof of Proposition \ref{prop:point_psi}, we can express the highest-order coefficient of the proposed log-linear model as  
    $$\alpha_1(X) = \sum_{(y_1, \dots, y_J) \neq 0} (-1)^{J + |y|}\log(q_{y}(X)\gamma(X)) + (-1)^{J}\log(1-\gamma(X)),$$
for $q$-probabilities $q_y(X) = \QQ(Y = (y_1, \dots, y_J, 0, \dots, 0) \mid X)$. Then, a bounded highest-order coefficient $|\alpha_1(X)| \leq \delta$ implies that the conditional capture probability is partially identified by
\begin{equation*}
\label{eq:part_gamma}
    \frac{1}{\gamma(X)} \in \left[1 + \exp\left(\sum_{(y_1, \dots, y_J) \neq 0} (-1)^{|y| + 1}\log(q_y(X)) \pm \delta \right) \right].
\end{equation*}

Next, note that the parameter $\delta$ must be set such that $\gamma(X) \in [\epsilon, 1]$ to ensure the validity of the conditional capture probabilities it yields. Consequently, we derive the following upper and lower bounds on $\gamma(X)^{-1}$ to partially identify our target parameter while incorporating such constraint on $\delta$:
\begin{equation}
\label{eq:part_gamma}
    \frac{1}{\gamma(X)} \in 
    \left[\frac{1}{\gamma_{\ell}(X)}, \frac{1}{\gamma_{u}(X)}\right] = \left[\min \left\{ 1 + \exp\left(\sum_{(y_1, \dots, y_J) \neq 0} (-1)^{|y| + 1}\log(q_y(X)) \pm \delta \right), \frac{1}{\epsilon} \right\} \right].
\end{equation}
Finally, we establish bounds on the inverse capture probability $\psi^{-1} \in [\psi_{\ell}^{-1}, \psi_{u}^{-1}]$, defined by $\psi_{\ell}^{-1} = \E_{\QQ}\left[\gamma_{\ell}^{-1}(X)\right]$ and $\psi_{u}^{-1} = \E_{\QQ}\left[\gamma_{u}^{-1}(X)\right]$, derived from the lower and upper bounds in \eqref{eq:part_gamma}, respectively.

\end{proof}

\begin{proof}[Proof of Theorem \ref{thm:optimal_bounds}]

We follow the same approach as Lemma \ref{lemma:if_psi} to derive the remainder term in the von Mises expansion for $\psi^{-1}_{\delta}$. We omit the arguments of the functions for clarity. Then, We have that for a generic distribution $\overbar{\QQ}$,
\begin{align*}
    & R_{2,\delta}(\QQ, \overbar{\QQ}) = \frac{1}{\overbar{\psi}_{\delta}} - \frac{1}{\psi_{\delta}}
    + \E_{\QQ}\left(\overbar{\phi}_{\delta}\right) \\
   & = \frac{1}{\overbar{\psi}_{\delta}} - \frac{1}{\psi_{\delta}} + \E_{\QQ}\left[\left(\overbar{\varphi}_{\delta}-\frac{1}{\epsilon}\right)\mathbbm{1}\left(\frac{1}{\overbar{\gamma}_\delta} - \frac{1}{\epsilon} \leq 0 \right)  + \frac{1}{\epsilon} - \frac{1}{\overbar{\psi}_{\delta}}\right] \\
   & = \E_{\QQ}\left[\left(\overbar{\varphi}_{\delta}-\frac{1}{\epsilon}\right)\mathbbm{1}\left(\frac{1}{\overbar{\gamma}_\delta} - \frac{1}{\epsilon} \leq 0 \right)  + \frac{1}{\epsilon} - \left(\frac{1}{\gamma_{\delta}}-\frac{1}{\epsilon}\right)\mathbbm{1}\left(\frac{1}{\gamma_\delta} - \frac{1}{\epsilon} \leq 0 \right)  - \frac{1}{\epsilon}\right] \\
   & = \E_{\QQ}\left[\left(\overbar{\varphi}_{\delta}-\frac{1}{\gamma_{\delta}}\right)\mathbbm{1}\left(\frac{1}{\overbar{\gamma}_\delta} - \frac{1}{\epsilon} \leq 0 \right) + \left(\frac{1}{\gamma_{\delta}} - \frac{1}{\epsilon}\right)\left[\mathbbm{1}\left(\frac{1}{\overbar{\gamma}_\delta} - \frac{1}{\epsilon}\leq 0\right) -\mathbbm{1}\left(\frac{1}{\gamma_\delta} - \frac{1}{\epsilon} \leq 0 \right) \right]\right].
\end{align*}

The first term can be bounded above using an analogous remainder term from Lemma \ref{lemma:if_psi} given the symmetry of the EIFs for $\gamma$ and $\gamma_{\delta}$:
\begin{align*}
    \E_{\QQ}\left[\left(\overbar{\varphi}_{\delta}-\frac{1}{\gamma_{\delta}}\right)\mathbbm{1}\left(\frac{1}{\overbar{\gamma}_\delta} - \frac{1}{\epsilon} \leq 0 \right)\right] & \leq \E_{\QQ}\left[\Bigg|\overbar{\varphi}_{\delta}-\frac{1}{\gamma_{\delta}}\Bigg|\right] \\
    & \leq \E_{\QQ}\left[\sum_{y_i\neq y_j \neq 0} \norm{q_{y_i} - \overbar{q}_{y_i}} \norm{q_{y_j} - \overbar{q}_{y_j}} + \sum_{|y|\neq 0, ~even} \norm{q_{y} - \overbar{q}_{y}}^2\right],
\end{align*}
where the sums are taken over capture profiles of the form $y = (y_1,\dots,y_J,0,\dots,0)$.

For the second term, we have that
\begin{align*}
    \E_{\QQ}&\left[\left(\frac{1}{\gamma_{\delta}} - \frac{1}{\epsilon}\right)\left[\mathbbm{1}\left(\frac{1}{\overbar{\gamma}_\delta} - \frac{1}{\epsilon}\leq 0\right) -\mathbbm{1}\left(\frac{1}{\gamma_\delta} - \frac{1}{\epsilon} \leq 0 \right) \right]\right] \\
    & \leq \E_{\QQ}\left[\Bigg|\frac{1}{\gamma_{\delta}} - \frac{1}{\epsilon}\Bigg|\Bigg|\mathbbm{1}\left(\frac{1}{\overbar{\gamma}_\delta} - \frac{1}{\epsilon}\leq 0\right) -\mathbbm{1}\left(\frac{1}{\gamma_\delta} - \frac{1}{\epsilon} \leq 0 \right) \Bigg|\right] \\
    & \leq \E_{\QQ}\left[\Bigg|\frac{1}{\gamma_{\delta}} - \frac{1}{\epsilon}\Bigg|\mathbbm{1}\left(\Bigg|\frac{1}{\gamma_\delta} - \frac{1}{\epsilon} \Bigg| \leq \Bigg|\frac{1}{\gamma_\delta} - \frac{1}{\overbar{\gamma}_{\delta}}\Bigg| \right)\right] \\
    & \leq \E_{\QQ}\left[\Bigg|\frac{1}{\gamma_{\delta}} - \frac{1}{\overbar{\gamma}_{\delta}}\Bigg|\mathbbm{1}\left(\Bigg|\frac{1}{\gamma_\delta} - \frac{1}{\epsilon} \Bigg| \leq \Bigg|\frac{1}{\gamma_\delta} - \frac{1}{\overbar{\gamma}_{\delta}}\Bigg| \right)\right] \\
    & \leq \norm{\frac{1}{\gamma_{\delta}} - \frac{1}{\overbar{\gamma}_{\delta}}}_{\infty}\PP\left(\Bigg|\frac{1}{\gamma_\delta} - \frac{1}{\epsilon} \Bigg| \leq \norm{\frac{1}{\gamma_\delta} - \frac{1}{\overbar{\gamma}_{\delta}}}_{\infty} \right)\\
    & \leq C\norm{\frac{1}{\gamma_{\delta}} - \frac{1}{\overbar{\gamma}_{\delta}}}_{\infty}^{1+\beta}
\end{align*}
where the second inequality follows from Lemma 1 in \cite{kennedy2020sharp} and the last inequality by the margin condition \eqref{eq:margin_cond}.

Finally, following similarly to Theorem \ref{thm:optimal}, we have by Markov's inequality that for any error tolerance $\eta > 0$
\begin{align*}
    \PP\left[\Bigg|\frac{1}{\widehat{\psi}_{\delta}} - \frac{1}{\psi_{\delta}} - \QQ_{N}(\phi_{\delta})\Bigg| \leq \eta \right] \geq 1 - \frac{1}{\eta^2}\E\left[\widehat{R}_{2,\delta}^2 + \frac{1}{N}\norm{\widehat{\phi}_{\delta} - \phi_{\delta}}^2\right],
\end{align*} 
concluding our proof.

\end{proof}

\begin{proof}[Proof of Theorem \ref{thm:normal_bounds}]

The proof follows the same logic that the proof of Theorem \ref{thm:normal} noting that the proposed one step estimator takes the form $\widehat{\psi}^{-1}_{\delta} - \widehat{\psi} -\widehat{R}_{2,\delta} = (\QQ_N - \QQ)\widehat{\phi}_{\delta}$, which is a sample average of a fixed function conditional on the training sample, where $\widehat{R}_{2,\delta}$ is the remainder term derived in Theorem \ref{thm:optimal_bounds}.
    
\end{proof}

\begin{proof}[Proof of Theorem \ref{thm:ci_n}]

By construction of our one-step estimator we have that 
\begin{align*}
    \frac{1}{\widehat{\psi}_{os}} - \frac{1}{\psi}  & = \frac{1}{\widehat{\psi}_{pi}} + \QQ_{N}(\widehat{\phi}) - \frac{1}{\psi} = -\QQ(\widehat{\phi}) + \widehat{R}_2 + \QQ_{N}(\widehat{\phi}),
\end{align*}
where $\widehat{R}_2$ is the remainder term of the von Mises expansion for $\psi^{-1}$ \eqref{eq:r2_bound} and $\QQ(\widehat{\phi}) = \int \widehat{\phi}(z) d\QQ(z)$. Therefore, applying Theorem 4 in \cite{das2023doubly} we have that
\begin{equation*}
    \widehat{CI}(n) = \left[ \widehat{n}_{os} \pm z_{\alpha/2}\sqrt{\widehat{n}_{os}\left(\widehat{\psi}_{os}\widehat{\sigma}^2 + \frac{1-\widehat{\psi}_{os}}{\widehat{\psi}_{os}} \right)}\right],
\end{equation*}
is an approximately $(1-\alpha)$ confidence interval for the population size $n$ centered at $\widehat{n}_{os} = N/\widehat{\psi}_{os}$, 
where $\widehat{\sigma}^2 = \widehat{\Var}(\widehat{\phi})$ is the unbiased empirical variance of the estimated influence function.
    
\end{proof}

\begin{proof}[Proof of Theorem \ref{thm:cin_bounds}]

The results follow from Theorem \ref{thm:ci_n} noting that our estimator for the generic bound is of the required form
$$\frac{1}{\widehat{\psi}_{\delta}} - \frac{1}{\psi} = -\QQ(\widehat{\phi}_\delta) + \widehat{R}_{2,\delta} + \QQ_{N}(\widehat{\phi}_\delta),$$
derived in Theorem \ref{thm:optimal_bounds}.

\end{proof}

\end{document}